\newtheorem{lemma}{Lemma}
\newtheorem{theorem}{Theorem}
\newtheorem{claim}{Claim}
\newtheorem{corollary}{Corollary}
\newtheorem{conjecture}{Conjecture}
\newcounter{claimb}
\def\claimb{$$\vcenter\bgroup\advance\hsize by -8em\noindent
\refstepcounter{claimb}\ignorespaces\it}        
\def\endclaimb{\rm\egroup\leqno(\theclaimb)$$\global\@ignoretrue}
\newenvironment{proofclaim}[1][]%
    {\noindent \emph{Proof.} {}{#1}{}}{\hfill
    $\Diamond$\vspace{1em}}
\newcommand{\white}[1]{\textcolor{white}{#1}}
\begin{document}

\title{Planar graphs with $\Delta\geq 8$ are ($\Delta+1$)-edge-choosable.\thanks{Work supported by the ANR Grant EGOS (2012-2015) 12 JS02 002 01}}

\author{Marthe Bonamy\\ \normalsize{LIRMM, Universit\'e Montpellier 2}\\ \small{marthe.bonamy@lirmm.fr}}

\maketitle

\begin{abstract}
We consider the problem of \emph{list edge coloring} for planar graphs. Edge coloring is the problem of coloring the edges while ensuring that two edges that are incident receive different colors. A graph is $k$-edge-choosable if for any assignment of $k$ colors to every edge, there is an edge coloring such that the color of every edge belongs to its color assignment. Vizing conjectured in 1965 that every graph is ($\Delta+1$)-edge-choosable. In 1990, Borodin solved the conjecture for planar graphs with maximum degree $\Delta\geq 9$, and asked whether the bound could be lowered to $8$. We prove here that planar graphs with $\Delta\geq 8$ are ($\Delta+1$)-edge-choosable.
\end{abstract}

\section{Introduction}

We consider simple graphs. A \emph{$k$-edge-coloring} of a graph $G$ is a coloring of the edges of $G$ with $k$ colors such that two edges that are incident receive distinct colors. We denote by $\chi'(G)$ the smallest $k$ such that $G$ admits a $k$-edge-coloring. Let $\Delta(G)$ be the maximum degree of $G$. Since incident edges have to receive distinct colors in an edge coloring, every graph $G$ verifies $\chi'(G)\geq \Delta(G)$. A trivial upper-bound on $\chi'(G)$ is $2\Delta(G)-1$, which can can be greatly improved, as follows.

\begin{theorem}[Vizing~\cite{v64}]
Every graph $G$ verifies $\Delta(G) \leq \chi'(G) \leq \Delta(G)+1$.
\end{theorem}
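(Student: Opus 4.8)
The lower bound $\Delta(G) \le \chi'(G)$ is immediate, since the $\Delta(G)$ edges at a maximum-degree vertex are pairwise incident and hence need $\Delta(G)$ distinct colors. For the upper bound I would argue by induction on $|E(G)|$, writing $\Delta := \Delta(G)$ and fixing a palette of $\Delta+1$ colors. Delete one edge $uv$; by induction $G-uv$ admits a proper $(\Delta+1)$-edge-coloring $c$, and it suffices to extend $c$ to $uv$. For a vertex $x$ let $M(x)$ denote the set of colors absent from the edges at $x$; since $\deg(x) \le \Delta$ while there are $\Delta+1$ colors, every $M(x)$ is nonempty. This single spare color at each vertex is the whole source of freedom, and the difficulty is that a color free at $u$ need not be free at $v$.

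The central device is a \emph{Vizing fan} at $u$. Starting from $v_0 = v$, I would greedily grow a sequence $v_0, v_1, \dots, v_p$ of distinct neighbors of $u$ so that $c(uv_i) \in M(v_{i-1})$ for every $i \ge 1$, stopping when no further neighbor can be appended. Fixing $\alpha \in M(u)$ and $\beta \in M(v_p)$, the easy case is $\beta \in M(u)$: then \emph{rotating} the fan, i.e.\ setting $c(uv_i) := c(uv_{i+1})$ for $i < p$ and $c(uv_p) := \beta$, is proper at every $v_i$ (each new color was missing there) and at $u$ (the new color $\beta$ was missing at $u$), and it frees a color for the formerly blank edge $uv$. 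So assume $\alpha \ne \beta$ and $\beta$ is present at $u$; maximality of the fan then forces $\beta = c(uv_j)$ for some $1 \le j \le p$, and by construction $\beta \in M(v_{j-1})$ as well.

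To handle this remaining case I would pass to \emph{Kempe chains}: the subgraph spanned by the edges colored $\alpha$ or $\beta$ is a disjoint union of paths and even cycles. Each of $u$, $v_{j-1}$, $v_p$ meets at most one edge of these two colors, so each is an endpoint of (or isolated in) its own component; since the component through $u$ is a single path with two ends, it cannot contain both $v_{j-1}$ and $v_p$. I would therefore swap $\alpha$ and $\beta$ along a component that avoids $u$ and has one of $v_{j-1}, v_p$ as an endpoint; the swap leaves $\alpha$ missing at $u$ and every edge at $u$ untouched, while making $\alpha$ missing at that endpoint. Rotating the sub-fan up to that endpoint and finishing with color $\alpha$ then extends the coloring to $uv$.

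The real work — and the only genuinely delicate point — is checking that the Kempe swap does not destroy the fan: after exchanging $\alpha$ and $\beta$ on a component, the condition $c(uv_i) \in M(v_{i-1})$ can fail only where a fan edge is colored $\beta$, namely at $i = j$ (no fan edge is colored $\alpha$, as $\alpha \in M(u)$). One must therefore choose the component and the truncation length of the rotation so that this failure is either avoided or occurs exactly at the endpoint being recolored, which is precisely what the split ``does the $\alpha\beta$-chain from $u$ reach $v_{j-1}$ or $v_p$?'' accomplishes. Once this bookkeeping is arranged, properness at every vertex follows directly and the induction closes; everything else — the greedy fan and the plentiful missing colors — is routine given the slack $|M(x)| \ge 1$.
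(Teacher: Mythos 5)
The paper does not actually prove this statement: it is quoted as a known theorem of Vizing with a citation to the 1964 paper, and is used only as background for the list-coloring results. So there is no proof in the paper to compare against, and your proposal has to be judged on its own. What you describe is the classical fan-and-Kempe-chain proof, and its skeleton is correct: the lower bound is trivial; the maximal fan forces $\beta=c(uv_j)$ for some $j$ with $\beta\in M(v_{j-1})$; since $\alpha$ is missing at $u$ and $\beta$ is missing at $v_{j-1}$ and $v_p$, each of these three vertices has degree at most $1$ in the subgraph of edges colored $\alpha$ or $\beta$, so $u$'s component is a path having $u$ as an endpoint and hence contains at most one of $v_{j-1},v_p$; swapping on a component that avoids $u$ and rotating the corresponding sub-fan finishes. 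The one point your sketch leaves implicit (and explicitly defers as ``bookkeeping'') is the precise choice: the safe rule is to truncate at $v_{j-1}$ whenever the $\alpha\beta$-chain through $v_{j-1}$ avoids $u$ --- the edges $uv_1,\dots,uv_{j-1}$ carry colors outside $\{\alpha,\beta\}$, so the swap cannot invalidate that sub-fan --- and to use the full fan to $v_p$ only in the remaining case, where $v_{j-1}$ lies on $u$'s chain, hence is disjoint from $v_p$'s component, so the swap there leaves $\beta\in M(v_{j-1})$ and the fan condition at $i=j$ intact. (Note that choosing $v_p$'s component first can fail: $v_{j-1}$ may be its other endpoint, and the swap would then plant a $\beta$-edge at $v_{j-1}$ and break the full rotation.) With that rule spelled out, your induction closes and the argument is the standard, correct proof.
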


Vizing~\cite{v65} proved that $\chi'(G)=\Delta(G)$ for every planar graph $G$ with $\Delta(G) \geq 8$. He gave examples of planar graphs with $\Delta(G)=4,5$ that are not $\Delta(G)$-edge-colorable, and conjectured that no such graph exists for $\Delta(G)= 6, 7$. This remains open for $\Delta(G)=6$, but the case $\Delta(G)=7$ was solved by Sanders and Zhao~\cite{sz01}, as follows.

\begin{theorem}[Sanders and Zhao~\cite{sz01}]\label{thm:D7}
Every planar graph $G$ with $\Delta(G) \geq 7$ verifies $\chi'(G)=\Delta(G)$.
\end{theorem}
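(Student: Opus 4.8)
The plan is to argue by contradiction with the discharging method, using crucially that here we color edges with $\Delta$ ordinary colors (not lists), so Kempe chains are available. Fix $\Delta\geq 7$ and suppose the theorem fails. Among all planar graphs $H$ with $\Delta(H)\leq\Delta$ and $\chi'(H)>\Delta$, choose a counterexample $G$ with the fewest edges. By Vizing's theorem $\chi'(G)=\Delta(G)+1$, and since any graph with $\Delta(G)<\Delta$ would satisfy $\chi'(G)\leq\Delta(G)+1\leq\Delta$, we must have $\Delta(G)=\Delta$. Edge-minimality makes $G$ \emph{$\Delta$-critical}: for every edge $e$, the graph $G-e$ is planar with maximum degree at most $\Delta$ and fewer edges, hence admits a $\Delta$-edge-coloring. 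Fixing a planar embedding of $G$, the goal is to show such a $G$ cannot exist. (Only $\Delta=7$ is genuinely new, since $\Delta\geq 8$ is already covered above; the method is uniform but the accounting is tightest at $\Delta=7$.)

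First I would extract local structure from criticality. The workhorse is Vizing's Adjacency Lemma: in a $\Delta$-critical graph, for every edge $uv$ the vertex $u$ has at least $\Delta-d(v)+1$ neighbors of degree $\Delta$. Two consequences I would record are that $\delta(G)\geq 2$ and that every edge $uv$ satisfies $d(u)+d(v)\geq\Delta+2$; in particular, for $\Delta=7$ adjacent vertices have degree sum at least $9$, every $2$-vertex has both neighbors of degree $\Delta$, and two vertices of degree at most $\lfloor\Delta/2\rfloor$ are never adjacent. Beyond what the Adjacency Lemma gives, I would prove a family of planar-specific reducible configurations by the standard recoloring technique: delete a well-chosen edge $e$, take a $\Delta$-coloring of $G-e$ supplied by minimality, and by swapping colors along suitable two-colored alternating (Kempe) chains free a color for $e$, contradicting criticality. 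The natural targets are configurations that cluster several low-degree vertices or several triangular faces around a single $\Delta$-vertex, which these arguments must forbid.

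Finally I would run discharging on the embedding. Assign to each vertex $v$ the charge $d(v)-4$ and to each face $f$ the charge $\ell(f)-4$, where $\ell(f)$ is the length of $f$; by Euler's formula the total charge is $\sum_v (d(v)-4)+\sum_f (\ell(f)-4)=-8<0$. I would then design rules moving charge from $\Delta$-vertices and long faces toward the deficient elements — the $2$-vertices, $3$-vertices, and triangular faces — and use the structural facts above to show each deficient element borders enough high-charge neighbors to be reimbursed to a nonnegative charge, while no $\Delta$-vertex or long face is overdrawn. If every final charge is nonnegative the total is $\geq 0$, contradicting $-8$, so no such $G$ exists and the theorem follows.

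The hard part will be closing Steps two and three simultaneously in the tight case $\Delta=7$. Because adjacency only forces degree sum $\geq 9$ there (versus $\geq 10$ when $\Delta\geq 8$), the surplus charge sitting at $\Delta$-vertices is barely sufficient, so one needs a larger and subtler catalogue of reducible configurations together with a correspondingly intricate case analysis in the discharging rules to make the final accounting break even. Establishing reducibility of each configuration by Kempe-chain recoloring, and then verifying nonnegativity across all vertex and face types at once, is where the real difficulty lies.
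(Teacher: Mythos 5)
This statement is quoted in the paper as a known result of Sanders and Zhao (reference~\cite{sz01}); the paper gives no proof of it, so the only thing to assess is whether your proposal itself constitutes a proof. It does not: it is a strategy outline whose entire mathematical substance is deferred. The parts you do carry out are correct and are indeed the right opening moves (edge-minimality forces $\Delta(G)=\Delta$ and $\Delta$-criticality; Vizing's Adjacency Lemma gives $d(u)+d(v)\geq\Delta+2$ on edges and the facts about $2$-vertices; the charge assignment $d(v)-4$, $\ell(f)-4$ sums to $-8$ by Euler's formula). But you never exhibit a single concrete reducible configuration beyond what the Adjacency Lemma gives, never prove any such configuration reducible, never state a single discharging rule, and never perform the nonnegativity verification. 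Your own closing paragraph concedes that precisely these steps are ``where the real difficulty lies.'' A proof by discharging \emph{is} its catalogue of configurations, its rules, and the case analysis; omitting them leaves nothing to check.

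The gap is not a routine one that any diligent reader could fill. For $\Delta\geq 8$ the Adjacency Lemma plus straightforward discharging suffices (this is essentially Vizing's own 1965 result), but for $\Delta=7$ it is known that the structural information derivable from the Adjacency Lemma alone is insufficient to close the counting: that is exactly why the problem stood open from 1965 until 2001. Sanders and Zhao's contribution was a set of genuinely new recoloring lemmas (going beyond Kempe-chain swaps organized around a single deleted edge) and a delicate, many-page verification built on them. So your plan, as written, would stall at the point you flag: you would discover that the ``larger and subtler catalogue'' you invoke cannot be produced by the tools you have listed, and new ideas are required. As a description of the shape of the Sanders--Zhao argument your outline is faithful; as a proof it is missing its core.
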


An extension of the problem of edge coloring is the \emph{list edge coloring} problem, defined as follows. For any $L:E\rightarrow \mathcal{P}(\mathbb{N})$ list assignment of colors to the edges of a graph $G=(V,E)$, the graph $G$ is \emph{$L$-edge-colorable} if there exists an edge coloring of $G$ such that the color of every edge $e\in E$ belongs to $L(e)$. A graph $G=(V,E)$ is said to be \emph{list $k$-edge-colorable} (or \emph{$k$-edge-choosable}) if $G$ is $L$-edge-colorable for any list assignment $L$ such that $|L(e)|\geq k$ for any edge $e\in E$. We denote by $\chi'_\ell(G)$ the smallest $k$ such that $G$ is $k$-edge-choosable.\\

One can note that edge coloring is a special case of list edge coloring, where all the lists are equal. Thus $\chi'(G) \leq \chi'_\ell(G)$. This inequality is in fact conjectured to be an equality (see~\cite{jt95} for more information).

\begin{conjecture}[List Coloring Conjecture]\label{conj:LCC}
Every graph $G$ verifies $\chi'(G)=\chi'_{\ell}(G)$.
\end{conjecture}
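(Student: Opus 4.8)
The statement to establish is that $\chi'(G)=\chi'_\ell(G)$ for every graph $G$, i.e. that list edge coloring never costs more colors than ordinary edge coloring. The natural first move is to pass to the line graph $L(G)$, whose vertices are the edges of $G$ and whose adjacencies encode incidence; then $\chi'(G)=\chi(L(G))$ and $\chi'_\ell(G)=\chi_\ell(L(G))$, so the conjecture asserts precisely that line graphs are \emph{chromatic-choosable}. I would stress at the outset that this cannot follow from any property shared by all graphs, since there exist graphs with $\chi$ much smaller than $\chi_\ell$; the proof must therefore exploit the rigid local structure of line graphs, namely that the color classes of any proper coloring are matchings of $G$.

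The backbone of the plan is the kernel method of Bondy--Boppana--Siegel as deployed by Galvin. Recall that a \emph{kernel} of a digraph is an independent set reaching every other vertex by a single arc, and that a digraph is \emph{kernel-perfect} if every induced subdigraph has a kernel. The decisive lemma is that if $H$ admits a kernel-perfect orientation $D$, then $H$ is $L$-colorable for every list assignment with $|L(v)|\ge d^+_D(v)+1$. So I would first fix an optimal proper edge coloring $c$ of $G$ using $\chi'(G)$ colors, and then orient $L(G)$: for two edges $e,f$ meeting at a vertex $v$, orient the pair by comparing $c(e)$ and $c(f)$, with the direction of the comparison depending on which side of $v$ one is on. This is engineered so that at every vertex of $L(G)$ the out-degree is at most $\chi'(G)-1$; if the orientation is kernel-perfect, the lemma yields $\chi'_\ell(G)\le\chi'(G)$, and the reverse inequality is immediate, finishing the proof.

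For bipartite $G$ this program goes through completely: with sides $A,B$ one orients $e\to f$ at $a\in A$ when $c(e)<c(f)$ and $e\to f$ at $b\in B$ when $c(e)>c(f)$, and kernel-perfectness of every induced subdigraph follows from the existence of stable matchings, equivalently König's theorem, applied to the subgraph of $G$ that the induced vertex set picks out. The step I expect to be the genuine obstacle is exactly the extension of this orientation to arbitrary $G$. Two difficulties compound. First, without a bipartition there is no canonical two-sided rule for orienting incidences consistently, and odd cycles make the naive symmetric rule create directed cycles that block the stable-matching argument. Second, Vizing's theorem only guarantees a proper coloring with $\Delta(G)+1$ colors, which need not align with any kernel-perfect structure. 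I would therefore attempt to build the orientation inductively, peeling off matchings while controlling the odd obstructions created, or to replace the combinatorial kernel by a fractional or probabilistic surrogate in the spirit of Kahn's asymptotic theorem and then round. Being candid, it is precisely the absence of a kernel-perfect orientation for general line graphs that keeps this conjecture open; any honest plan reduces to, and then stalls at, this single step.
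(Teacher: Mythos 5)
The statement you were asked to prove is Conjecture~\ref{conj:LCC}, the List Coloring Conjecture, which the paper does not prove and does not claim to prove: it is stated as an open conjecture (the paper's actual contribution, Theorem~\ref{thm:main}, is the far weaker assertion that planar graphs with $\Delta \leq 8$ satisfy $\chi'_\ell \leq 9$, a special case of the weakened Conjecture~\ref{conj:LCCw}). So there is no paper proof to compare yours against, and any complete proof you produced would have been wrong or would have been a major breakthrough. To your credit, you recognize this: your final sentence concedes that the plan ``reduces to, and then stalls at'' the construction of a kernel-perfect orientation of $L(G)$ for general $G$. That concession is accurate, and it means your submission is a correct survey of the standard attack rather than a proof.

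On the substance of what you wrote: the framework is the right one and the partial results are correctly attributed. The Bondy--Boppana--Siegel kernel lemma (a digraph $D$ with a kernel-perfect orientation is $L$-colorable whenever $|L(v)| \geq d^+_D(v)+1$) combined with the two-sided orientation of $L(G)$ from an optimal edge coloring is exactly Galvin's 1995 proof that $\chi'_\ell(G) = \chi'(G)$ for \emph{bipartite} $G$, where kernel-perfectness follows from stable matchings. The genuine gap is precisely where you place it: for non-bipartite $G$ there is no known kernel-perfect orientation of $L(G)$ with out-degrees at most $\chi'(G)-1$, and odd cycles defeat the naive comparison rule (indeed, odd cycles themselves already show the subtlety, though $\chi'_\ell(C_{2k+1}) = 3 = \chi'(C_{2k+1})$ can be checked directly). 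Your proposed fallbacks would not close it: Kahn's probabilistic method yields only the asymptotic bound $\chi'_\ell(G) = (1+o(1))\Delta(G)$, not the exact equality, and no rounding scheme recovering exactness is known. In short: your proposal correctly identifies the one missing step, does not supply it, and is therefore consistent with the paper's treatment of the statement as an open problem rather than a theorem.
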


The conjecture is still widely open. Some partial results were however obtained in the special case of planar graphs: for example, the conjecture is true for planar graphs of maximum degree at least $12$, as follows.

\begin{theorem}[Borodin et al.~\cite{bkw97}]\label{thm:D12}
Every planar graph $G$ with $\Delta(G) \geq 12$ verifies $\chi'_{\ell}(G)=\Delta(G)$.
\end{theorem}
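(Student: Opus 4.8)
The plan is to prove the equivalent statement that a planar graph $G$ with $\Delta(G) \le \Delta$ is $\Delta$-edge-choosable whenever $\Delta \ge 12$; since $\chi'_\ell(G) \ge \chi'(G) \ge \Delta(G)$ always holds, only the upper bound $\chi'_\ell(G) \le \Delta$ needs proof. I would argue by contradiction, taking $G$ to be a counterexample minimizing the number of edges, together with a list assignment $L$ with $|L(e)| \ge \Delta$ for every edge $e$ such that $G$ is not $L$-edge-colorable. The first, routine, observation is that $G$ contains no edge $uv$ with $d(u)+d(v) \le \Delta+1$: such an edge is incident to at most $d(u)+d(v)-2 \le \Delta-1$ other edges, so by minimality one colors $G-uv$ and then greedily extends, since a color remains available in $L(uv)$. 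Consequently every vertex of degree at most $\lfloor (\Delta+1)/2\rfloor$ is adjacent only to vertices of large degree, which already constrains the local structure severely when $\Delta \ge 12$.

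Next I would establish a collection of \emph{reducible configurations}: local patterns that cannot occur in the minimal counterexample. Each is handled by the same scheme --- delete a small set of edges near the configuration, invoke minimality to color what remains, and extend to the deleted edges. When a naive greedy extension fails, the extension is carried out by coloring the deleted edges through an auxiliary list-edge-coloring argument on a small, essentially bipartite gadget (via Galvin's theorem), or by a kernel argument on the local line graph. Typical reducible configurations at this threshold include a $2$-vertex adjacent to a vertex of non-maximum degree, two adjacent vertices of small degree, and several low-degree vertices clustered around a common neighbour; the generous gap between $\Delta$ and the degree sums appearing in these configurations, coming from $\Delta \ge 12$, is precisely what makes them reducible.

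The final and most delicate step is a discharging argument. Assigning to each vertex $v$ the charge $d(v)-4$ and to each face $f$ the charge $\ell(f)-4$, Euler's formula gives total charge $-8 < 0$. I would then design local redistribution rules in which high-degree vertices --- which, by the reducibility lemmas, surround every low-degree vertex and every short face --- donate charge to their needy low-degree neighbours and to triangular faces, and verify that after redistribution every vertex and every face carries nonnegative charge, contradicting the negative total. The main obstacle is exactly this verification: the reducible configurations and the discharging rules must be chosen in tandem so that no element can remain negative, which forces a careful case analysis on vertex degrees and face lengths. At $\Delta \ge 12$ the slack is comfortable, and it is the effort to shrink this slack down to $\Delta = 8$ that the present paper undertakes. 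An elegant alternative, following the original argument, replaces the discharging by orienting the line graph $L(G)$ so that every edge of $G$ has out-degree at most $\Delta-1$ and the orientation is kernel-perfect, whereupon the Bondy--Boppana--Siegel/Galvin lemma yields $\Delta$-choosability directly.
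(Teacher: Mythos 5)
First, a point of order: the paper does not prove this statement at all. Theorem~\ref{thm:D12} is quoted, with a citation, as a known result of Borodin, Kostochka and Woodall~\cite{bkw97}; it serves only as context for the paper's own contribution (Theorem~\ref{thm:main}, the $\Delta=8$ case of the weaker bound $\Delta+1$). So your attempt cannot be compared against anything in this paper, only against the published proof in~\cite{bkw97} and against general standards of completeness.

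Measured that way, your outline has the right architecture --- it is essentially the scheme of~\cite{bkw97}, and also of this paper's proof of Theorem~\ref{thm:main}: minimal counterexample, the degree-sum observation that no edge $uv$ has $d(u)+d(v)\le\Delta+1$, further reducible configurations handled by delete/color-by-minimality/extend, and a discharging count against Euler's formula (your charges $d(v)-4$ and $\ell(f)-4$ do sum to $-8$). But the outline is a template with the mathematics left out. The entire content of such a proof is (i) the explicit list of reducible configurations, (ii) the proof that each one is reducible --- by far the hardest part; compare Claims~\ref{claim:C1} to~\ref{claim:C11} here, which occupy most of the paper --- and (iii) explicit discharging rules together with the case-by-case verification that every vertex and face ends with nonnegative charge. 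You supply none of these; you even write that the configurations and rules ``must be chosen in tandem,'' which is precisely the theorem. That is a genuine gap, not a stylistic omission. In addition, your closing claim --- that, ``following the original argument,'' one may skip discharging by exhibiting a kernel-perfect orientation of the line graph with out-degrees at most $\Delta-1$ and then invoking the Bondy--Boppana--Siegel/Galvin lemma --- is wrong on both counts. That is not the argument of~\cite{bkw97}, and no such global orientation is known for non-bipartite planar graphs: Galvin's construction of kernel-perfect orientations requires bipartiteness (a cyclically oriented triangle in the line graph already has no kernel), and producing such an orientation for all planar graphs with $\Delta\ge 12$ would by itself prove the theorem in one stroke, which nobody knows how to do. Kernel-type lemmas can legitimately be used inside reducibility arguments on small, locally bipartite gadgets, as you suggest earlier in your sketch, but they cannot be invoked globally to bypass the structural work.
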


There is still a large gap with the lower bound of $7$ that should hold by Theorem~\ref{thm:D7} if Conjecture~\ref{conj:LCC} were true.

Using Vizing's theorem, the List Coloring Conjecture can be weakened into Conjecture~\ref{conj:LCCw}.

\begin{conjecture}[Vizing~\cite{v76}]\label{conj:LCCw}
Every graph $G$ verifies $\chi'_{\ell}(G) \leq \Delta(G)+1$.
\end{conjecture}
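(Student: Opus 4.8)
The plan is to attack the statement by strengthening Vizing's theorem to the list setting and arguing by minimal counterexample. Suppose the conjecture fails and let $G$, together with a list assignment $L$ satisfying $|L(e)|\geq \Delta(G)+1$ for every edge $e$, be a counterexample minimizing $|E(G)|$. First I would delete an arbitrary edge $e=uv$; by minimality $G-e$ admits an $L$-edge-coloring. It then suffices to recolor so that a colour from $L(e)$ becomes available at $e$. At most $(d(u)-1)+(d(v)-1)\leq 2\Delta(G)-2$ colours are forbidden for $e$, so with $|L(e)|\geq \Delta(G)+1$ a free colour exists whenever the palettes seen at $u$ and at $v$ overlap; the only obstruction is the case where these forbidden sets are almost disjoint and jointly cover $L(e)$.

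To handle that case I would use the list analogue of a Vizing fan rooted at $u$: starting from a colour of $L(e)$ that is already used on some edge $uw_1$, build a maximal sequence $w_1,w_2,\dots$ of neighbours of $u$ in which each edge $uw_{i+1}$ carries a colour missing at $w_i$, and then rotate the colours along the fan so as to liberate a colour for $e$. Every reassignment must now lie in the corresponding list $L(\cdot)$ rather than in one global palette, which is the essential new constraint. When a single rotation fails, the classical remedy is a Kempe chain: choose two colours $\alpha,\beta$ and interchange them along the maximal $\alpha/\beta$-alternating path issuing from $u$ or from the fan tip, then complete the rotation.

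I expect the Kempe swap to be the main obstacle. In Vizing's non-list proof the interchange is always legal because $\alpha$ and $\beta$ lie in every edge's palette; in the list setting an edge along the alternating path may contain only one of $\alpha,\beta$ in its list, so swapping can leave that edge with no admissible colour. A complete proof must therefore guarantee, by a structural or counting argument, that the two chain colours lie in the lists of all edges the swap touches, or else choose the fan so that the required chain is confined to such edges. This is precisely the point at which the direct adaptation of Vizing's method breaks down for general graphs, and it is why the present paper, like its predecessors, can only secure the conclusion after restricting to a class, here planar graphs with $\Delta(G)\geq 8$, where a discharging analysis of an unavoidable set of reducible configurations supplies the missing control over which colours are available along the recoloring chains.
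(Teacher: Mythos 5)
You have not proved the statement, and you could not have: it appears in the paper as an \emph{open conjecture} (Vizing, 1976), and what the paper actually establishes is only its restriction to planar graphs with $\Delta(G)\geq 8$ --- Theorem~\ref{thm:main} for $\Delta(G)=8$, combined with Borodin's Theorem~\ref{thm:D9} for $\Delta(G)\geq 9$. Your text is a plan for adapting Vizing's fan argument to the list setting, and it founders exactly where you say it does. Two concrete failure points. First, the opening reduction is quantitatively hopeless on its own: after colouring $G-e$ by minimality, the edge $e=(u,v)$ can see up to $2\Delta(G)-2$ forbidden colours against only $\Delta(G)+1$ available ones, so for $\Delta(G)\geq 3$ the ``nearly disjoint palettes'' situation is the generic case, not an exceptional one, and everything rests on the fan-and-chain machinery. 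Second, and fatally, the Kempe interchange is simply not available with lists: an edge on the maximal $\alpha/\beta$-alternating path may have only one of $\alpha,\beta$ in its list, so the swap can strand that edge with no admissible colour, and you supply no counting or structural argument guaranteeing that both chain colours lie in every list the swap touches. You candidly flag this obstruction, but an acknowledged hole is still a hole: nothing in the proposal closes it, the fan rotation cannot be completed, and the minimal-counterexample argument never reaches a contradiction. This is precisely why the List Colouring Conjecture and its weakening here remain open.

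For what it is worth, even on the planar case that the paper does settle, its method is quite different from your sketch: rather than fans and Kempe chains, it takes a minimal counterexample, shows it contains none of the configurations \textbf{($C_1$)} to \textbf{($C_{11}$)} by extending a list edge colouring of a subgraph using small ad hoc lemmas (Lemmas~\ref{lem:evencycle}, \ref{lem:2322} and \ref{lem:3vertex}) and, in the hardest cases (Claims~\ref{claim:C3}, \ref{claim:C4} and \ref{claim:C6}), a recolouring argument organised by an auxiliary directed graph recording which colour moves are legal; a discharging count against Euler's formula then shows such a minimal counterexample cannot exist. That directed-graph recolouring device is the paper's substitute for the Kempe-chain control you could not obtain, and it is inherently local --- which is why it yields the planar $\Delta\geq 8$ case and not the general conjecture.
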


Conjecture~\ref{conj:LCCw} has been actively studied in the case of planar graphs with some restrictions on cycles (see for example~\cite{szhz07, wl01, zw03}), and was settled by Borodin~\cite{b91} for planar graphs of maximum degree at least $9$ (a simpler proof was later found by Cohen and Havet~\cite{ch10}).

\begin{theorem}[Borodin~\cite{b91}]\label{thm:D9}
Every planar graph $G$ with $\Delta(G) \geq 9$ verifies $\chi'_{\ell}(G) \leq \Delta(G)+1$.
\end{theorem}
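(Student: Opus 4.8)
The plan is to argue by contradiction using the discharging method, following the standard template for planar edge-coloring results. Suppose $G$ is a counterexample with the fewest edges: $G$ is planar, $\Delta(G)=\Delta\ge 9$, and there is a list assignment $L$ with $|L(e)|\ge\Delta+1$ for every edge, yet $G$ admits no $L$-edge-coloring while every proper subgraph of $G$ does. A short preliminary argument shows we may take $G$ to be connected, and in fact $2$-connected, so that in a fixed planar embedding every face is bounded by a cycle; this lets me treat face degrees cleanly.

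First I would establish the \emph{reducible configurations}, that is, local structures that cannot occur in $G$ by minimality. The basic one is that $G$ has no edge $uv$ with $d(u)+d(v)\le\Delta+1$: deleting such an edge, coloring the rest by minimality, and noting that $uv$ is forbidden at most $(d(u)-1)+(d(v)-1)\le\Delta-1<\Delta+1$ colors yields a free color and hence a contradiction. In particular $\delta(G)\ge 2$, every $2$-vertex has both neighbors of degree $\Delta$, and more generally every $k$-vertex neighbors only vertices of degree $\ge\Delta+2-k$. Stronger reducibility comes from choosing a careful order in which to color the edges around a low-degree vertex and, where needed, recoloring along alternating (Kempe) chains; this rules out configurations such as adjacent small-degree vertices or small-degree vertices incident to triangles. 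Assembling a sufficient list of such forbidden configurations is the technical heart of the argument.

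Then I would run the discharging. Assign to each vertex $v$ the charge $d(v)-4$ and to each face $f$ the charge $d(f)-4$; by Euler's formula the total charge is $\sum_v (d(v)-4)+\sum_f (d(f)-4)=-8<0$. Faces of degree at least $4$ and vertices of degree at least $4$ start nonnegative, so the entire deficit sits on $2$- and $3$-vertices and on triangular faces. I would design discharging rules sending charge from high-degree vertices, which carry a large surplus $d(v)-4$, to their needy low-degree neighbors and to incident triangles: for instance giving each $2$-vertex a total of $+2$ and each $3$-vertex $+1$ from its large neighbors, and feeding each $3$-face the $+1$ it needs. Using the reducible configurations to bound how many needy neighbors and faces a high-degree vertex must support, I would verify that every vertex and face ends with nonnegative charge, contradicting the negative total and proving the theorem.

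The hard part is the interplay between the two halves: the discharging only closes if the reducible configurations are strong enough to bound, at each high-degree vertex, the combined demand of its adjacent $2$- and $3$-vertices and incident triangles against its surplus. This margin is comfortable precisely because $\Delta\ge 9$, so that a $\Delta$-vertex has surplus $\Delta-4\ge 5$, and I expect most of the genuine effort to lie in proving the recoloring-based reducible configurations and then tuning the discharging weights so the case analysis over vertex types goes through. It is exactly this margin shrinking to nothing at $\Delta=8$ that forces the more delicate analysis carried out in the remainder of this paper.
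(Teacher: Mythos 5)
This theorem is not proved in the paper at all: it is quoted from Borodin~\cite{b91} (with a simpler proof attributed to Cohen and Havet~\cite{ch10}), and the paper's own discharging work is devoted to the harder case $\Delta=8$. So the comparison can only be against that method, which your sketch indeed mirrors in outline. The problem is that your proposal remains an outline. Everything that constitutes the actual proof is deferred: you never state the list of reducible configurations beyond the edge with $d(u)+d(v)\leq\Delta+1$ (your analogue of \textbf{($C_1$)}), never prove any of the recoloring-based ones, and never carry out the discharging verification --- you only assert that you ``would verify'' it. As you yourself concede, assembling the configurations and closing the case analysis is ``the technical heart of the argument,'' and that heart is missing. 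Whether your particular charge assignment ($d(v)-4$ on vertices, $d(f)-4$ on faces, feeding $2$-vertices, $3$-vertices and triangles) can be closed by \emph{any} set of provable configurations is exactly the question; the surplus $\Delta-4\geq 5$ of a $\Delta$-vertex is not by itself evidence, since a $\Delta$-vertex can simultaneously be incident to $\Delta$ triangles and adjacent to many $2$-vertices each demanding $1$. For reference, both this paper (for $\Delta=8$) and Cohen--Havet use the assignment $d(v)-6$ and $2d(f)-6$, under which $2$-vertices cannot even occur (they are killed by the $(C_1)$-analogue when $\Delta+2-2\geq\Delta$ forces neighbors of degree $\geq\Delta$, but the weight bookkeeping is organized quite differently).

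There is also a concrete technical error in the one tool you do name: ``recoloring along alternating (Kempe) chains'' is not available for \emph{list} edge coloring. A Kempe swap exchanges two colors $\alpha,\beta$ along a maximal $\{\alpha,\beta\}$-alternating chain, and this produces a proper coloring only when both colors are admissible on every edge of the chain --- which fails when edges carry different lists. This obstruction is precisely why the list versions of these theorems lag behind the ordinary edge-coloring versions (compare Theorem~\ref{thm:D7} with Theorem~\ref{thm:D12}), and why the present paper replaces Kempe arguments with the directed-graph recoloring device of Claims~\ref{claim:C3}, \ref{claim:C4} and~\ref{claim:C6}, in which an auxiliary digraph records which already-used colors could be moved onto which edges, and one recolors along a directed cycle or a path from a vertex of small in-degree. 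Any completed proof along your lines would need a substitute of that kind for every configuration whose reducibility you intend to establish by recoloring; with Kempe chains as stated, those steps would simply fail.
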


Here we prove the following theorem.

\begin{theorem}\label{thm:main}
Every planar graph $G$ with $\Delta(G) \leq 8$ verifies $\chi'_{\ell}(G) \leq 9$.
\end{theorem}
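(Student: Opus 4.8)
The plan is to argue by contradiction via the discharging method. Let $G$ be a counterexample minimizing $|E(G)|$: a planar graph with $\Delta(G)\le 8$ together with a list assignment $L$ satisfying $|L(e)|\ge 9$ for every edge $e$ such that $G$ is not $L$-edge-colorable, while every proper subgraph of $G$ is $9$-edge-choosable. I fix a planar embedding and may assume $G$ is connected. The strategy has two essentially independent halves: first, establish a collection of \emph{reducible configurations} that cannot occur in $G$; second, assign initial charges to vertices and faces whose total is negative by Euler's formula, and redistribute them so that the absence of the reducible configurations forces every final charge to be nonnegative, a contradiction.

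The reducibility arguments all follow one template: delete a small set of edges (or a low-degree vertex with its incident edges), color the smaller graph by minimality, and extend the coloring to what was deleted. An uncolored edge $uv$ sees at most $d(u)+d(v)-2$ colors on its incident edges, so whenever $d(u)+d(v)\le 10$ a single such edge can always be colored from its list of size $9$; ordering the deletions greedily (in the spirit of a degeneracy/kernel argument on the line graph) extends this to configurations in which several edges are uncolored simultaneously. In this way I would expect to forbid, among others: vertices of degree $1$; degree-$2$ vertices adjacent to low-degree vertices in restricted face environments; adjacent pairs of small vertices; and low-degree vertices incident to several triangles. When the naive count is one short near the threshold $d(u)+d(v)=10$, I would recover the missing color by a local Kempe-chain recoloring exchange, or by invoking the existence of a kernel in an auxiliary orientation of the incident edges.

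For the discharging I assign to each vertex $v$ the charge $d(v)-4$ and to each face $f$ the charge $d(f)-4$. By Euler's formula together with $\sum_v d(v)=\sum_f d(f)=2|E(G)|$, the total charge equals exactly $-8<0$. Only vertices of degree $2$ and $3$ and triangular faces begin with negative charge, whereas vertices of degree $\ge 4$ and faces of length $\ge 4$ begin nonnegative, with the surplus concentrated on the degree-$7$ and degree-$8$ vertices. I then design rules sending charge from high-degree vertices and long faces to their incident small vertices and triangles, calibrated so that, using the reducible configurations to control the neighborhood of every negatively charged element, each vertex and each face ends with charge $\ge 0$. Summing the final charges then contradicts the total being $-8$.

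The main obstacle is the tightness at $\Delta=8$. Borodin's proof for $\Delta\ge 9$ leaves every high-degree vertex enough surplus to subsidize its poor neighbors comfortably; descending to $8$ removes precisely that slack, so both halves of the argument must be sharpened at once. On the reducibility side the degree-sum bounds sit right at the boundary where elementary counting fails, which is what forces the more delicate recoloring and kernel arguments; on the discharging side the rules must be tuned finely, almost certainly supplemented by a richer family of configurations and perhaps a second discharging phase, to extract the final nonnegativity. Balancing these two tightened halves against each other is where I expect the genuine difficulty to lie.
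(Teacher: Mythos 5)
Your proposal reproduces the general framework of the paper's proof --- minimal counterexample, reducible configurations established by coloring a subgraph by minimality and extending, then discharging against Euler's formula --- but that framework is not where the theorem lives; it is the same framework Borodin already used for $\Delta\geq 9$, and the entire content of the result for $\Delta=8$ is in the specifics that your proposal leaves open. You never name the configurations you will forbid, never prove any of them reducible, and never state a single discharging rule; instead you write that you ``would expect to forbid, among others'' some configurations, that the rules must be ``tuned finely, almost certainly supplemented by a richer family of configurations and perhaps a second discharging phase,'' and that balancing the two halves ``is where I expect the genuine difficulty to lie.'' That last sentence is an accurate self-assessment: what you have written is a plan for finding a proof, not a proof. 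The configurations you do gesture at (degree-$1$ vertices, degree-$2$ vertices, adjacent pairs of small vertices) are all killed at once by the paper's single configuration \textbf{($C_1$)} (an edge $(u,v)$ with $d(u)+d(v)\leq 10$), and are nowhere near sufficient: the paper needs ten further configurations \textbf{($C_2$)}--\textbf{($C_{11}$)}, formulated in terms of weak and semi-weak neighbors and a fine classification of degree-$5$ neighbors into types $E_2,E_3,E_4$ (around degree-$8$ vertices) and $S_2,S_3,S_4$ (around degree-$7$ vertices), precisely so that eleven matching rules $R_1$--$R_{11}$ can be calibrated to make every vertex and face nonnegative. None of this can be dismissed as routine tuning.

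Moreover, the hardest reducibility proofs in the paper (Claims for \textbf{($C_3$)}, \textbf{($C_4$)}, \textbf{($C_6$)}, \textbf{($C_9$)}) are not greedy extensions at all: they require showing that the precolored part can be \emph{recolored} whenever the two color lists at a degree-$3$ vertex are equal and of size $2$, and this is done via an auxiliary directed graph on the recolorable edges in which one hunts for a directed cycle or a source --- the paper identifies this recoloring device as its key new idea. Your one-sentence appeal to ``a local Kempe-chain recoloring exchange, or ... the existence of a kernel in an auxiliary orientation'' points in a vaguely similar direction but is not an argument, and Kempe chains in list edge coloring are notoriously unavailable (different edges have different lists, so two-colored alternating chains cannot be swapped). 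Finally, a minor point that is not itself the gap: your charge assignment $d(v)-4$ to vertices and $d(f)-4$ to faces is a legitimate normalization (total $-8$), but it makes triangles negative and so forces faces of degree $3$ to receive charge; the paper's choice of $d(v)-6$ and $2d(f)-6$ makes triangles and degree-$6$ vertices neutral, so that only vertices of degree at most $5$ need subsidizing, which is what keeps the verification finite and one-sided. As it stands, your proposal has no proof of any reducible configuration beyond the trivial degree-sum case and no discharging rules to verify, so it cannot be accepted as a proof of the theorem.
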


This improves Theorem~\ref{thm:D9} and settles Conjecture~\ref{conj:LCCw} for planar graphs of maximum degree $8$.

\begin{corollary}\label{cor:D8}
Every planar graph $G$ with $\Delta(G) \geq 8$ verifies $\chi'_{\ell}(G) \leq \Delta(G)+1$.
\end{corollary}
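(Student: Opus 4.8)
The plan is to deduce the corollary from two results already in hand by a case analysis on the maximum degree $\Delta(G)$. Since the corollary concerns all planar graphs with $\Delta(G) \geq 8$, I would split this range at the value $9$, treating $\Delta(G) = 8$ separately from $\Delta(G) \geq 9$, because each subrange is covered by a different theorem.

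First I would dispatch the case $\Delta(G) \geq 9$. Here Theorem~\ref{thm:D9} (Borodin) applies directly and yields $\chi'_{\ell}(G) \leq \Delta(G) + 1$, which is exactly the desired bound; nothing further is required for this case.

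Next I would handle the remaining case $\Delta(G) = 8$. Since $\Delta(G) = 8 \leq 8$, the hypothesis of Theorem~\ref{thm:main} is met, so $\chi'_{\ell}(G) \leq 9$. The key observation is that $9 = 8 + 1 = \Delta(G) + 1$, so the numerical bound of Theorem~\ref{thm:main} coincides with $\Delta(G) + 1$ precisely when $\Delta(G) = 8$. Hence $\chi'_{\ell}(G) \leq \Delta(G) + 1$ holds in this case as well.

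Finally, I would note that the two cases $\Delta(G) = 8$ and $\Delta(G) \geq 9$ together exhaust all planar graphs with $\Delta(G) \geq 8$, which completes the argument. There is no genuine obstacle here: all the difficulty is already packaged into the proof of Theorem~\ref{thm:main}, which is the substantive content, while Theorem~\ref{thm:D9} covers the higher-degree tail. The only point requiring care is the boundary bookkeeping, namely that the fixed bound $9$ supplied by Theorem~\ref{thm:main} equals $\Delta(G) + 1$ exactly at $\Delta(G) = 8$, which is what lets the two theorems be stitched together into the single uniform statement $\chi'_{\ell}(G) \leq \Delta(G) + 1$.
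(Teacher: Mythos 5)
Your proposal is correct and matches the paper's (implicit) argument exactly: the corollary follows by combining Theorem~\ref{thm:main} for $\Delta(G)=8$, where the bound $9$ equals $\Delta(G)+1$, with Theorem~\ref{thm:D9} for $\Delta(G)\geq 9$. Nothing further is needed.
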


This answers Problem 5.9 in a survey by Borodin~\cite{b13}. For small values of $\Delta$, Theorem~\ref{thm:main} implies that every planar graph $G$ with $5 \leq \Delta(G) \leq 7$ is also $9$-edge-choosable. To our knowledge, this was not known. It is however known that planar graphs with $\Delta(G) \leq 4$ are $(\Delta(G)+1)$-edge-choosable~\cite{jms99,v76}.

In Sections~\ref{sect:meth} and~\ref{sect:def}, we introduce the method and some terminology. In Sections~\ref{sect:conf} to~\ref{sect:ccl}, we prove Theorem~\ref{thm:main}, with a discharging method.

\section{Method}\label{sect:meth}

The discharging method was introduced in the beginning of the 20$^{th}$ century. It has been used to prove the celebrated Four Color Theorem (\cite{ah77} and \cite{ahk77}).

We prove Theorem~\ref{thm:main} using a discharging method, as follows. A graph is \emph{minimal} for a property if it satisfies this property but none of its proper subgraphs does. The first step is to consider a minimal counter-example $G$ (i.e. a graph $G$ such that $\Delta(G) \leq 8$ and $\chi'_\ell(G) > 9$, whose every proper subgraph is $9$-edge-choosable), and prove it cannot contain some configurations. We assume by contradiction that $G$ contains one of the configurations. We consider a particular subgraph $H$ of $G$. For any list assignment $L$ on the edges of $G$, with $|L(e)|\geq 9$ for every edge $e$, we $L$-edge-color $H$ by minimality. We show how to extend the $L$-edge-coloring of $H$ to $G$, a contradiction.

The second step is to prove that a connected planar graph on at least two vertices with $\Delta \leq 8$ that does not contain any of these configurations does not verify Euler's Formula. To that purpose, we consider a planar embedding of the graph. We assign to each vertex its degree minus six as a weight, and to each face two times its degree minus six. We apply discharging rules to redistribute weights along the graph with conservation of the total weight. As some configurations are forbidden, we can prove that after application of the discharging rules, every vertex and every face has a non-negative final weight. This implies that $\sum_v(d(v)-6)+\sum_f(2d(f)-6) = 2\times |E(G)|- 6\times |V(G)| + 4 \times |E(G)| - 6 \times |F(G)| \geq 0$, a contradiction with Euler's Formula that $|E| - |V| - |F|=-2$. Hence a minimal counter-example cannot exist.

\section{Terminology and notation}\label{sect:def}

In the figures, we draw in black a vertex that has no other neighbor than the ones already represented, in white a vertex that might have other neighbors than the ones represented. When there is a label inside a white vertex, it is an indication on the number of neighbors it has. The label '$i$' means "exactly $i$ neighbors", the label '$i^+$' (resp. '$i^-$') means that it has at least (resp. at most) $i$ neighbors. Note that the white vertices may coincide with other vertices of the figure. The figures are only here to support the text and are not self-sufficient: the embedding can be different from the one represented.

For a given planar embedding, a vertex $v$ is a \emph{weak} neighbor of a vertex $u$ when the two faces adjacent to the edge $(u,v)$ are triangles (see Figure~\ref{fig:weaka}). A vertex $v$ is a \emph{semi-weak} neighbor of a vertex $u$ when one of the two faces adjacent to the edge $(u,v)$ is a triangle and the other is a cycle of length four (see Figure~\ref{fig:weakb}).

\begin{figure}[!h]
\centering
\parbox[b]{2.8in}{
\centering
\begin{tikzpicture}[scale=1.2]
\tikzstyle{whitenode}=[draw,circle,fill=white,minimum size=8pt,inner sep=0pt]
\tikzstyle{blacknode}=[draw,circle,fill=black,minimum size=6pt,inner sep=0pt]
\tikzstyle{invisible}=[draw=white,circle,fill=white,minimum size=6pt,inner sep=0pt]
\tikzstyle{tnode}=[draw,ellipse,fill=white,minimum size=8pt,inner sep=0pt]
\tikzstyle{texte} =[fill=white, text=black]
\draw (0,0) node[whitenode] (u) [label=left:$u$] {}
-- ++(0:1cm) node[whitenode] (v) [label=right:$v$] {};

\draw (u)
-- ++(60:1cm) node[whitenode] (w1) {};
\draw (u)
-- ++(-60:1cm) node[whitenode] (w2) {};

\draw (v) edge node  {} (w1);
\draw (v) edge node  {} (w2);

\end{tikzpicture}
\caption{Vertex $v$ is a weak neighbor of $u$.}
\label{fig:weaka}
}
\qquad
\parbox[b]{3.2in}{
\centering
\begin{tikzpicture}[scale=1.2]
\tikzstyle{whitenode}=[draw,circle,fill=white,minimum size=8pt,inner sep=0pt]
\tikzstyle{blacknode}=[draw,circle,fill=black,minimum size=6pt,inner sep=0pt]
\tikzstyle{tnode}=[draw,ellipse,fill=white,minimum size=8pt,inner sep=0pt]
\tikzstyle{texte} =[fill=white, text=black]
\tikzstyle{invisible}=[draw=white,circle,fill=white,minimum size=8pt,inner sep=0pt]

\draw (0,0) node[whitenode] (u) [label=left:$u$] {}
-- ++(0:1cm) node[whitenode] (v) [label=right:$v$] {};

\draw (u)
-- ++(90:0.8cm) node[whitenode] (w1) {}
-- ++(0:1cm) node[whitenode] (x) {};
\draw (u)
-- ++(-60:1cm) node[whitenode] (w2) {};

\draw (v) edge node  {} (x);
\draw (v) edge node  {} (w2);

\end{tikzpicture}
\caption{Vertex $v$ is a semi-weak neighbor of $u$.}
\label{fig:weakb}
}
\end{figure}

For any vertex $u$, we define special types of weak neighbors of degree $5$ of $u$, as follows. The notation comes from $E$ for ''Eight'' (when $d(u)=8$) and $S$ for ''Seven'' (when $d(u)=7$). The index corresponds to the discharging rules (introduced in Section~\ref{sect:dis}). Consider a weak neighbor $v$ of degree $5$ of $u$.
\begin{itemize}
\item Vertex $v$ is an \emph{$E_2$-neighbor} of $u$ with $d(u)=8$ when one of the two following conditions is verified (see Figure~\ref{fig:Evertex}):
\begin{itemize}
\item There are two vertices $w_1$ and $w_2$ with $d(w_1)=d(w_2)=6$ such that $(u,v,w_1)$ and $(v,w_1,w_2)$ are faces (see Figure~\ref{fig:Evertexa}).
\item There are three vertices $w_1$, $w_2$ and $w_3$ with $d(w_1)=d(w_3)=6$ and $d(w_2)=7$ such that $(u,v,w_1)$, $(v,w_1,w_2)$ and $(u,v,w_3)$ are faces (see Figure~\ref{fig:Evertexb}).
\end{itemize}
\begin{figure}[!h]
\centering
\subfloat[][]{
\begin{tikzpicture}[scale=0.95]
\tikzstyle{whitenode}=[draw,circle,fill=white,minimum size=8pt,inner sep=0pt]
\tikzstyle{blacknode}=[draw,circle,fill=black,minimum size=6pt,inner sep=0pt]
\tikzstyle{invisible}=[draw=white,circle,fill=white,minimum size=6pt,inner sep=0pt]
\tikzstyle{tnode}=[draw,ellipse,fill=white,minimum size=8pt,inner sep=0pt]
\tikzstyle{texte} =[fill=white, text=black]
\draw (0,0) node[whitenode] (u) [label=left:$u$] {\small{$8$}}
-- ++(0:1cm) node[blacknode] (v) [label=right:$v$] {};

\draw (v)
-- ++(180-72:1cm) node[whitenode] (w1) [label=90:$w_1$] {\small{$6$}};
\draw (v)
-- ++(180-2*72:1cm) node[whitenode] (w2) [label=90:$w_2$] {\small{$6$}};
\draw (v)
-- ++(-180+72:1cm) node[whitenode] (w3) [label=-90:\white{$1$}] {};

\draw (v)
-- ++(-180+2*72:1cm) node[whitenode] (w4) {};

\draw (u) edge node  {} (w1);
\draw (w1) edge node  {} (w2);
\draw (w3) edge node  {} (u);

\end{tikzpicture}
\label{fig:Evertexa}
}
\qquad
\subfloat[][]{
\centering
\begin{tikzpicture}[scale=0.95]
\tikzstyle{whitenode}=[draw,circle,fill=white,minimum size=8pt,inner sep=0pt]
\tikzstyle{blacknode}=[draw,circle,fill=black,minimum size=6pt,inner sep=0pt]
\tikzstyle{tnode}=[draw,ellipse,fill=white,minimum size=8pt,inner sep=0pt]
\tikzstyle{texte} =[fill=white, text=black]
\tikzstyle{invisible}=[draw=white,circle,fill=white,minimum size=8pt,inner sep=0pt]

\draw (0,0) node[whitenode] (u) [label=left:$u$] {\small{$8$}}
-- ++(0:1cm) node[blacknode] (v) [label=right:$v$] {};

\draw (v)
-- ++(180-72:1cm) node[whitenode] (w1) [label=90:$w_1$] {\small{$6$}};
\draw (v)
-- ++(180-2*72:1cm) node[whitenode] (w2) [label=90:$w_2$] {\small{$7$}};
\draw (v)
-- ++(-180+72:1cm) node[whitenode] (w3) [label=-90:$w_3$] {\small{$6$}};

\draw (v)
-- ++(-180+2*72:1cm) node[whitenode] (w4) {};

\draw (u) edge node  {} (w1);
\draw (w1) edge node  {} (w2);
\draw (w3) edge node  {} (u);
\end{tikzpicture}
\label{fig:Evertexb}
}
\caption{Vertex $v$ is an $E_2$-neighbor of $u$.}
\label{fig:Evertex}
\end{figure}
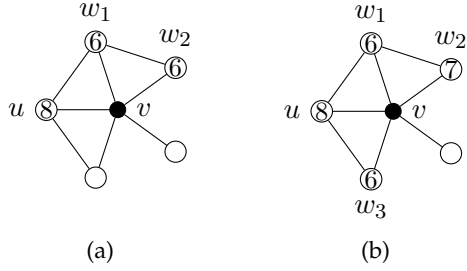

\item Vertex $v$ is an \emph{$E_3$-neighbor} of $u$ with $d(u)=8$ when $v$ is not an $E_2$-neighbor of $u$, and there is a vertex $w$ with $d(w)\leq 7$ such that $(u,v,w)$ is a face (see Figure~\ref{fig:Evertexc}).

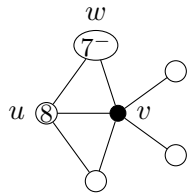
\begin{figure}[!h]
\centering
\begin{tikzpicture}[scale=0.95]
\tikzstyle{whitenode}=[draw,circle,fill=white,minimum size=8pt,inner sep=0pt]
\tikzstyle{blacknode}=[draw,circle,fill=black,minimum size=6pt,inner sep=0pt]
\tikzstyle{tnode}=[draw,ellipse,fill=white,minimum size=8pt,inner sep=0pt]
\tikzstyle{texte} =[fill=white, text=black]
\tikzstyle{invisible}=[draw=white,circle,fill=white,minimum size=8pt,inner sep=0pt]

\draw (0,0) node[whitenode] (u) [label=left:$u$] {\small{$8$}}
-- ++(0:1cm) node[blacknode] (v) [label=right:$v$] {};

\draw (v)
-- ++(180-72:1cm) node[tnode] (w1) [label=90:$w$] {\small{$7^-$}};
\draw (v)
-- ++(180-2*72:1cm) node[whitenode] (w2) {};
\draw (v)
-- ++(-180+72:1cm) node[whitenode] (w3) {};

\draw (v)
-- ++(-180+2*72:1cm) node[whitenode] (w4) {};

\draw (u) edge node  {} (w1);
\draw (w3) edge node  {} (u);
\end{tikzpicture}
\caption{Vertex $v$ is an $E_3$-neighbor of $u$.}
\label{fig:Evertexc}
\end{figure}
\item Vertex $v$ is an \emph{$E_4$-neighbor} of $u$ with $d(u)=8$ when $v$ is not an $E_2$ nor an $E_3$-neighbor of $u$ (see Figure~\ref{fig:Evertexd}). That is, when the third vertices of the two faces containing the edge $(u,v)$ are both of degree $8$.
\begin{figure}[!h]
\centering
\begin{tikzpicture}[scale=0.95]
\tikzstyle{whitenode}=[draw,circle,fill=white,minimum size=8pt,inner sep=0pt]
\tikzstyle{blacknode}=[draw,circle,fill=black,minimum size=6pt,inner sep=0pt]
\tikzstyle{tnode}=[draw,ellipse,fill=white,minimum size=8pt,inner sep=0pt]
\tikzstyle{texte} =[fill=white, text=black]
\tikzstyle{invisible}=[draw=white,circle,fill=white,minimum size=8pt,inner sep=0pt]

\draw (0,0) node[whitenode] (u) [label=left:$u$] {\small{$8$}}
-- ++(0:1cm) node[blacknode] (v) [label=right:$v$] {};

\draw (v)
-- ++(180-72:1cm) node[tnode] (w1) {\small{$8$}};
\draw (v)
-- ++(180-2*72:1cm) node[whitenode] (w2) {};
\draw (v)
-- ++(-180+72:1cm) node[tnode] (w3) {\small{$8$}};

\draw (v)
-- ++(-180+2*72:1cm) node[whitenode] (w4) {};

\draw (u) edge node  {} (w1);
\draw (w3) edge node  {} (u);
\end{tikzpicture}
\caption{Vertex $v$ is an $E_4$-neighbor of $u$.}
\label{fig:Evertexd}
\end{figure}
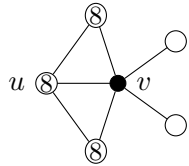
\item Vertex $v$ is an \emph{$S_2$-neighbor} of $u$ with $d(u)=7$ when there are two vertices $w_1$ and $w_2$ with $d(w_1)=d(w_2)=6$ such that $(u,v,w_1)$ and $(u,v,w_2)$ are faces (see Figure~\ref{fig:S2vertexa}).

\begin{figure}[!h]
\centering
\begin{tikzpicture}[scale=0.95]
\tikzstyle{whitenode}=[draw,circle,fill=white,minimum size=8pt,inner sep=0pt]
\tikzstyle{blacknode}=[draw,circle,fill=black,minimum size=6pt,inner sep=0pt]
\tikzstyle{invisible}=[draw=white,circle,fill=white,minimum size=6pt,inner sep=0pt]
\tikzstyle{tnode}=[draw,ellipse,fill=white,minimum size=8pt,inner sep=0pt]
\tikzstyle{texte} =[fill=white, text=black]
\draw (0,0) node[whitenode] (u) [label=left:$u$] {\small{$7$}}
-- ++(0:1cm) node[blacknode] (v) [label=right:$v$] {};

\draw (v)
-- ++(180-72:1cm) node[whitenode] (w1) [label=90:$w_1$] {\small{$6$}};
\draw (v)
-- ++(180-2*72:1cm) node[whitenode] (w2) {};
\draw (v)
-- ++(-180+72:1cm) node[whitenode] (w3) [label=-90:$w_2$] {\small{$6$}};

\draw (v)
-- ++(-180+2*72:1cm) node[whitenode] (w4) {};

\draw (u) edge node  {} (w1);
\draw (w3) edge node  {} (u);

\end{tikzpicture}
\caption{Vertex $v$ is an $S_2$-neighbor of $u$.}
\label{fig:S2vertexa}
\end{figure}
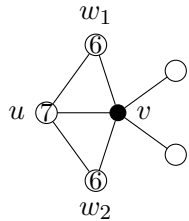

\item Vertex $v$ is an \emph{$S_3$-neighbor} of $u$ with $d(u)=7$ when $v$ is not an $S_2$-neighbor of $u$, and $v$ has four neighbors $w_1$, $w_2$, $w_3$ and $w_4$ such that $(u,v,w_1)$ and $(u,v,w_4)$ are faces, and one of the following two conditions is verified:
\begin{itemize}
\item $(v,w_1,w_2)$, $(v,w_2,w_3)$ and $(v,w_3,w_4)$ are faces, and $d(w_1)=d(w_4)=7$ and $d(w_2)=d(w_3)=6$ (see Figure~\ref{fig:S2vertexb}).
\item $d(w_4)=d(w_2)=6$ and either $d(w_1)=7$ (see Figure~\ref{fig:S2vertexc}) or $d(w_3)=7$ (see Figure~\ref{fig:S2vertexd}). Note that there is no constraint on the order of $w_2$ and $w_3$ in the embedding.
\end{itemize}

\begin{figure}[!h]
\centering
\subfloat[][]{
\centering
\begin{tikzpicture}[scale=0.95]
\tikzstyle{whitenode}=[draw,circle,fill=white,minimum size=8pt,inner sep=0pt]
\tikzstyle{blacknode}=[draw,circle,fill=black,minimum size=6pt,inner sep=0pt]
\tikzstyle{tnode}=[draw,ellipse,fill=white,minimum size=8pt,inner sep=0pt]
\tikzstyle{texte} =[fill=white, text=black]
\tikzstyle{invisible}=[draw=white,circle,fill=white,minimum size=8pt,inner sep=0pt]

\draw (0,0) node[whitenode] (u) [label=left:$u$] {\small{$7$}}
-- ++(0:1cm) node[blacknode] (v) [label=right:$v$] {};

\draw (v)
-- ++(180-72:1cm) node[whitenode] (w1) [label=90:$w_1$] {\small{$7$}};
\draw (v)
-- ++(180-2*72:1cm) node[whitenode] (w2) [label=90:$w_2$] {\small{$6$}};
\draw (v)
-- ++(-180+2*72:1cm) node[whitenode] (w3) [label=-90:$w_3$] {\small{$6$}};

\draw (v)
-- ++(-180+72:1cm) node[whitenode] (w4) [label=-90:$w_4$] {\small{$7$}};

\draw (u) edge node  {} (w1);
\draw (w1) edge node  {} (w2);
\draw (w2) edge node  {} (w3);
\draw (w4) edge node  {} (w3);
\draw (w4) edge node  {} (u);

\end{tikzpicture}
\label{fig:S2vertexb}
}
\qquad
\subfloat[][]{
\centering
\begin{tikzpicture}[scale=0.95]
\tikzstyle{whitenode}=[draw,circle,fill=white,minimum size=8pt,inner sep=0pt]
\tikzstyle{blacknode}=[draw,circle,fill=black,minimum size=6pt,inner sep=0pt]
\tikzstyle{tnode}=[draw,ellipse,fill=white,minimum size=8pt,inner sep=0pt]
\tikzstyle{texte} =[fill=white, text=black]
\tikzstyle{invisible}=[draw=white,circle,fill=white,minimum size=8pt,inner sep=0pt]

\draw (0,0) node[whitenode] (u) [label=left:$u$] {\small{$7$}}
-- ++(0:1cm) node[blacknode] (v) [label=right:$v$] {};

\draw (v)
-- ++(180-72:1cm) node[whitenode] (w1) [label=90:$w_1$] {\small{$7$}};
\draw (v)
-- ++(180-2*72:1cm) node[whitenode] (w2) [label=90:$w_2$] {\small{$6$}};
\draw (v)
-- ++(-180+2*72:1cm) node[tnode] (w3) [label=-90:$w_3$] {\small{}};

\draw (v)
-- ++(-180+72:1cm) node[whitenode] (w4) [label=-90:$w_4$] {\small{$6$}};

\draw (u) edge node  {} (w1);
\draw (w4) edge node  {} (u);

\end{tikzpicture}
\label{fig:S2vertexc}
}
\qquad
\subfloat[][]{
\begin{tikzpicture}[scale=0.95]
\tikzstyle{whitenode}=[draw,circle,fill=white,minimum size=8pt,inner sep=0pt]
\tikzstyle{blacknode}=[draw,circle,fill=black,minimum size=6pt,inner sep=0pt]
\tikzstyle{tnode}=[draw,ellipse,fill=white,minimum size=8pt,inner sep=0pt]
\tikzstyle{texte} =[fill=white, text=black]
\tikzstyle{invisible}=[draw=white,circle,fill=white,minimum size=8pt,inner sep=0pt]

\draw (0,0) node[whitenode] (u) [label=left:$u$] {\small{$7$}}
-- ++(0:1cm) node[blacknode] (v) [label=right:$v$] {};

\draw (v)
-- ++(180-72:1cm) node[whitenode] (w1) [label=90:$w_1$] {\small{}};
\draw (v)
-- ++(180-2*72:1cm) node[whitenode] (w2) [label=90:$w_2$] {\small{$6$}};
\draw (v)
-- ++(-180+2*72:1cm) node[whitenode] (w3) [label=-90:$w_3$] {\small{$7$}};

\draw (v)
-- ++(-180+72:1cm) node[whitenode] (w4) [label=-90:$w_4$] {\small{$6$}};

\draw (u) edge node  {} (w1);
\draw (w4) edge node  {} (u);
\end{tikzpicture}
\label{fig:S2vertexd}
}
\caption{Vertex $v$ is an $S_3$-neighbor of $u$.}
\label{fig:S2vertex}
\end{figure}
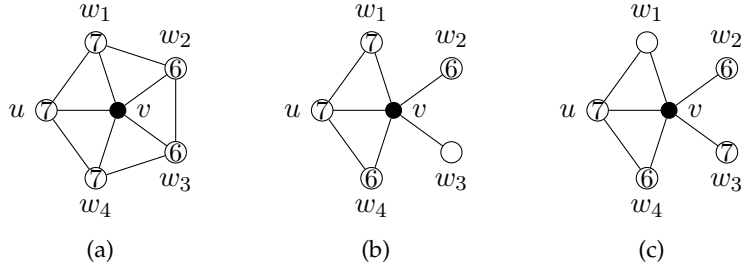
\item Vertex $v$ is an \emph{$S_4$-neighbor} of $u$ with $d(u)=7$ when $v$ is not an $S_2$- nor $S_3$- neighbor of $u$, and either there is a vertex $w$ with $d(w)\leq 7$ such that $(u,v,w)$ is a face (see Figure~\ref{fig:S4vertexa}), or $v$ is adjacent to two vertices $w_1$ and $w_2$ (both distinct from $u$) such that $d(w_1)=6$ and $d(w_2)=7$ (see Figure~\ref{fig:S4vertexb}).
\end{itemize}

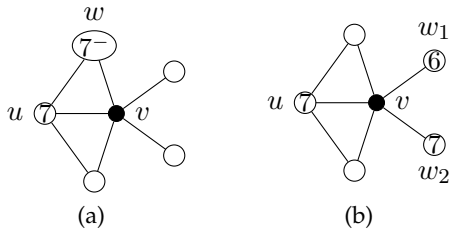
\begin{figure}[!h]
\centering
\subfloat[][]{
\begin{tikzpicture}[scale=0.95]
\tikzstyle{whitenode}=[draw,circle,fill=white,minimum size=8pt,inner sep=0pt]
\tikzstyle{blacknode}=[draw,circle,fill=black,minimum size=6pt,inner sep=0pt]
\tikzstyle{invisible}=[draw=white,circle,fill=white,minimum size=6pt,inner sep=0pt]
\tikzstyle{tnode}=[draw,ellipse,fill=white,minimum size=8pt,inner sep=0pt]
\tikzstyle{texte} =[fill=white, text=black]

\draw (0,0) node[whitenode] (u) [label=left:$u$] {\small{$7$}}
-- ++(0:1cm) node[blacknode] (v) [label=right:$v$] {};

\draw (v)
-- ++(180-72:1cm) node[tnode] (w1) [label=90:$w$] {\small{$7^-$}};
\draw (v)
-- ++(180-2*72:1cm) node[whitenode] (w2) {};
\draw (v)
-- ++(-180+72:1cm) node[whitenode] (w3) {};

\draw (v)
-- ++(-180+2*72:1cm) node[whitenode] (w4) {};

\draw (u) edge node  {} (w1);
\draw (w3) edge node  {} (u);
\end{tikzpicture}
\label{fig:S4vertexa}
}
\qquad
\subfloat[][]{
\begin{tikzpicture}[scale=0.95]
\tikzstyle{whitenode}=[draw,circle,fill=white,minimum size=8pt,inner sep=0pt]
\tikzstyle{blacknode}=[draw,circle,fill=black,minimum size=6pt,inner sep=0pt]
\tikzstyle{tnode}=[draw,ellipse,fill=white,minimum size=8pt,inner sep=0pt]
\tikzstyle{texte} =[fill=white, text=black]
\tikzstyle{invisible}=[draw=white,circle,fill=white,minimum size=8pt,inner sep=0pt]

\draw (0,0) node[whitenode] (u) [label=left:$u$] {\small{$7$}}
-- ++(0:1cm) node[blacknode] (v) [label=right:$v$] {};

\draw (v)
-- ++(180-72:1cm) node[whitenode] (w1) {};
\draw (v)
-- ++(180-2*72:1cm) node[whitenode] (w2) [label=90:$w_1$] {\small{$6$}};
\draw (v)
-- ++(-180+2*72:1cm) node[whitenode] (w3) [label=-90:$w_2$] {\small{$7$}};

\draw (v)
-- ++(-180+72:1cm) node[whitenode] (w4) {};

\draw (u) edge node  {} (w1);
\draw (w4) edge node  {} (u);

\end{tikzpicture}
\label{fig:S4vertexb}
}
\caption{Vertex $v$ is an $S_4$-neighbor of $u$.}
\label{fig:S4vertex}
\end{figure}

Note that if a weak neighbor $v$ of $u$ has none of the previous types, then $u$ and $v$ must verify one of the following four hypotheses:
\begin{itemize}
\item $d(v) \neq 5$
\item $d(u) \not\in \{7;8\}$
\item $d(u)=7$, the third vertex of each of the two triangles adjacent to $(u,v)$ is of degree $8$, and the two other neighbors of $v$ are either both of degree $6$ or both of degree at least $7$.
\end{itemize}

\section{Forbidden Configurations}\label{sect:conf}

We define configurations \textbf{($C_1$)} to \textbf{($C_{11}$)} (see Figure~\ref{fig:forbidden}).
\begin{itemize}
\item \textbf{($C_1$)} is an edge $(u,v)$ with $d(u)+d(v)\leq 10$. 
\item \textbf{($C_2$)} is a cycle $(u,v,w,x)$ such that $d(u)=d(w)=3$. 
\item \textbf{($C_3$)} is a vertex $u$ with $d(u)=8$ that has three neighbors $v_1, v_2$ and $v_3$ such that $v_1$ and $v_2$ are weak neighbors of $u$, with $d(v_1)=d(v_2)=3$ and $d(v_3) \leq 5$.
\item \textbf{($C_4$)} is a vertex $u$ with $d(u)=8$ that has four neighbors $v_1, v_2, v_3$ and $v_4$ such that $v_1$ is a weak neighbor of $u$ and $v_2$ is a semi-weak neighbor of $u$, with $d(v_1)=d(v_2)=3$, $d(v_3) \leq 5$ and $d(v_4) \leq 5$.
\item \textbf{($C_5$)} is a vertex $u$ with $d(u)=8$ that has four weak neighbors $v_1, v_2, v_3$ and $v_4$ with $d(v_1)=3$, $d(v_2)=d(v_3) =4$ and $d(v_4) \leq 5$.
\item \textbf{($C_6$)} is a vertex $u$ with $d(u)=8$ that has five neighbors $v_1,v_2,v_3,v_4$ and $v_5$ such that $v_1$ is a weak neighbor of $u$  with $d(v_1)=3$, $d(v_2) = 4$, $d(v_3) \leq 5$, $d(v_4) \leq 5$ and $d(v_5)\leq 7$.
\item \textbf{($C_7$)} is a vertex $u$ with $d(u)=8$ that has four weak neighbors $v_1, v_2, v_3$ and $v_4$, such that $d(v_1)=3$, vertex $v_2$ is an $E_2$-neighbor of $u$, $d(v_3) \leq 5$ and $d(v_4) \leq 5$.
\item \textbf{($C_8$)} is a vertex $u$ with $d(u)=7$ that has three neighbors $v$, $w$ and $x$ such that $w$ is adjacent to $v$ and $x$, $d(w)=6$, $d(v)=d(x)=5$, and there is a vertex $y$ of degree $6$, distinct from $w$, that is adjacent to $x$.
\item \textbf{($C_9$)} is a vertex $u$ with $d(u)=7$ that has three weak neighbors $v_1,v_2$ and $v_3$ such that $d(v_1)=d(v_2)=4$ and either $v_3$ is an $S_2$, $S_3$ or $S_4$-neighbor, or $d(v_3)=4$.
\item \textbf{($C_{10}$)} is a vertex $u$ with $d(u)=7$ that has three neighbors $v_1, v_2$ and $v_3$ such that $d(v_1)=4$, vertex $v_2$ is an $S_3$-neighbor of $u$ and $d(v_3) \leq 5$.
\item \textbf{($C_{11}$)} is a vertex $u$ with $d(u)=5$ that has three neighbors $v$, $w$ and $x$ such that $w$ is adjacent to $v$ and $x$, and $d(v)=d(w)=d(x)=6$. 
\end{itemize}

\captionsetup[subfloat]{labelformat=empty}
\begin{figure}[!h]
\centering
\subfloat[][\textbf{($C_1$)}]{
\centering
\begin{tikzpicture}[scale=0.95]
\tikzstyle{whitenode}=[draw,circle,fill=white,minimum size=8pt,inner sep=0pt]
\tikzstyle{blacknode}=[draw,circle,fill=black,minimum size=6pt,inner sep=0pt]
\tikzstyle{tnode}=[draw,ellipse,fill=white,minimum size=8pt,inner sep=0pt]
\tikzstyle{texte} =[fill=white, text=black]
\draw (-1,4.2) node[whitenode] (u) [label=90:$u$] {}
-- ++(0:1cm) node[whitenode] (v) [label=90:$v$] {};
\draw (-2,3.4) node[anchor=text] (a) {$d(u)+d(v)\leq 10$};
\end{tikzpicture}
\label{fig:cc1}
}
\qquad
\subfloat[][\textbf{($C_2$)}]{
\centering
\begin{tikzpicture}[scale=0.95]
\tikzstyle{whitenode}=[draw,circle,fill=white,minimum size=8pt,inner sep=0pt]
\tikzstyle{blacknode}=[draw,circle,fill=black,minimum size=6pt,inner sep=0pt]
\tikzstyle{tnode}=[draw,ellipse,fill=white,minimum size=8pt,inner sep=0pt]
\tikzstyle{texte} =[fill=white, text=black] 
\draw (3,5.2) node[whitenode] (u) [label=right:$v$] {}
-- ++(-120:1cm) node[whitenode] (v) [label=90:$u$] {\small{$3$}};
\draw (u)
-- ++(-60:1cm) node[whitenode] (v2) [label=90:$w$] {\small{$3$}};
\draw(v)
-- ++(-60:1cm) node[whitenode] (x) [label=right:$x$] {};
\draw (x) edge node  {} (v2);
\end{tikzpicture}
\label{fig:cc2}
}
\qquad
\subfloat[][\textbf{($C_3$)}]{
\centering
\begin{tikzpicture}[scale=0.95]
\tikzstyle{whitenode}=[draw,circle,fill=white,minimum size=8pt,inner sep=0pt]
\tikzstyle{blacknode}=[draw,circle,fill=black,minimum size=6pt,inner sep=0pt]
\tikzstyle{tnode}=[draw,ellipse,fill=white,minimum size=8pt,inner sep=0pt]
\tikzstyle{texte} =[fill=white, text=black]

\draw (0,0) node[tnode] (u) [label=90:$u$] {\small{$8$}}
-- ++(30:1.155cm) node[tnode] (v3) [label=right:$v_1$ \small{weak}] {\small{$3$}};
\draw (u)
-- ++(0:1cm) node[tnode] (v4) [label=right:$v_2$ \small{weak}] {\small{$3$}};
\draw (u)
-- ++(-30:1.155cm) node[tnode] (v5) [label=right:$v_3$] {\small{$5^-$}};
\end{tikzpicture}
\label{fig:cc3}
}\\
\qquad
\subfloat[][\textbf{($C_4$)}]{
\centering
\begin{tikzpicture}[scale=0.95]
\tikzstyle{whitenode}=[draw,circle,fill=white,minimum size=8pt,inner sep=0pt]
\tikzstyle{blacknode}=[draw,circle,fill=black,minimum size=6pt,inner sep=0pt]
\tikzstyle{tnode}=[draw,ellipse,fill=white,minimum size=8pt,inner sep=0pt]
\tikzstyle{texte} =[fill=white, text=black]

\draw (0,0) node[tnode] (u) [label=90:$u$] {\small{$8$}}
-- ++(45:1.414cm) node[tnode] (v3) [label=right:$v_1$ \small{weak}] {\small{$3$}};
\draw (u)
-- ++(15:1.035cm) node[tnode] (v4) [label=right:$v_2$ \small{semi-weak}] {\small{$3$}};
\draw (u)
-- ++(-15:1.035cm) node[tnode] (v5) [label=right:$v_3$] {\small{$5^-$}};
\draw (u)
-- ++(-45:1.414cm) node[tnode] (v5) [label=right:$v_4$] {\small{$5^-$}};
\end{tikzpicture}
\label{fig:cc4}
}
\qquad
\subfloat[][\textbf{($C_5$)}]{
\centering
\begin{tikzpicture}[scale=0.95]
\tikzstyle{whitenode}=[draw,circle,fill=white,minimum size=8pt,inner sep=0pt]
\tikzstyle{blacknode}=[draw,circle,fill=black,minimum size=6pt,inner sep=0pt]
\tikzstyle{tnode}=[draw,ellipse,fill=white,minimum size=8pt,inner sep=0pt]
\tikzstyle{texte} =[fill=white, text=black]

\draw (0,0) node[tnode] (u) [label=90:$u$] {\small{$8$}}
-- ++(45:1.414cm) node[tnode] (v3) [label=right:$v_1$ \small{weak}] {\small{$3$}};
\draw (u)
-- ++(15:1.035cm) node[tnode] (v4) [label=right:$v_2$ \small{weak}] {\small{$4$}};
\draw (u)
-- ++(-15:1.035cm) node[tnode] (v5) [label=right:$v_3$ \small{weak}] {\small{$4$}};
\draw (u)
-- ++(-45:1.414cm) node[tnode] (v5) [label=right:$v_4$ \small{weak}] {\small{$5^-$}};
\end{tikzpicture}
\label{fig:cc5}
}
\qquad
\subfloat[][\textbf{($C_6$)}]{
\centering
\begin{tikzpicture}[scale=0.95]
\tikzstyle{whitenode}=[draw,circle,fill=white,minimum size=8pt,inner sep=0pt]
\tikzstyle{blacknode}=[draw,circle,fill=black,minimum size=6pt,inner sep=0pt]
\tikzstyle{tnode}=[draw,ellipse,fill=white,minimum size=8pt,inner sep=0pt]
\tikzstyle{texte} =[fill=white, text=black]

\draw (0,0) node[tnode] (u) [label=90:$u$] {\small{$8$}}
-- ++(50:1.555cm) node[tnode] (v3) [label=right:$v_1$ \small{weak}] {\small{$3$}};
\draw (u)
-- ++(30:1.155cm) node[tnode] (v4) [label=right:$v_2$] {\small{$4$}};
\draw (u)
-- ++(0:1cm) node[tnode] (v4) [label=right:$v_3$] {\small{$5^-$}};
\draw (u)
-- ++(-30:1.155cm) node[tnode] (v5) [label=right:$v_4$] {\small{$5^-$}};
\draw (u)
-- ++(-50:1.555cm) node[tnode] (v5) [label=right:$v_5$] {\small{$7^-$}};
\end{tikzpicture}
\label{fig:cc6}
}
\qquad
\subfloat[][\textbf{($C_7$)}]{
\centering
\begin{tikzpicture}[scale=0.95]
\tikzstyle{whitenode}=[draw,circle,fill=white,minimum size=8pt,inner sep=0pt]
\tikzstyle{blacknode}=[draw,circle,fill=black,minimum size=6pt,inner sep=0pt]
\tikzstyle{tnode}=[draw,ellipse,fill=white,minimum size=8pt,inner sep=0pt]
\tikzstyle{texte} =[fill=white, text=black]

\draw (0,0) node[tnode] (u) [label=90:$u$] {\small{$8$}}
-- ++(45:1.414cm) node[tnode] (v3) [label=right:$v_1$ \small{weak}] {\small{$3$}};
\draw (u)
-- ++(15:1.035cm) node[tnode] (v4) [label=right:$v_2$ \small{$E_2$}] {\small{$5$}};
\draw (u)
-- ++(-15:1.035cm) node[tnode] (v5) [label=right:$v_3$ \small{weak}] {\small{$5^-$}};
\draw (u)
-- ++(-45:1.414cm) node[tnode] (v5) [label=right:$v_4$ \small{weak}] {\small{$5^-$}};
\end{tikzpicture}
\label{fig:cc7}
}
\subfloat[][\textbf{($C_8$)}]{
\centering
\begin{tikzpicture}[scale=0.95]
\tikzstyle{whitenode}=[draw,circle,fill=white,minimum size=8pt,inner sep=0pt]
\tikzstyle{blacknode}=[draw,circle,fill=black,minimum size=6pt,inner sep=0pt]
\tikzstyle{tnode}=[draw,ellipse,fill=white,minimum size=8pt,inner sep=0pt]
\tikzstyle{texte} =[fill=white, text=black]

\draw (0,0) node[tnode] (u) [label=90:$u$] {\small{$7$}}
-- ++(45:1cm) node[tnode] (v3) [label=right:$v$] {\small{$5$}};
\draw (u)
-- ++(0:1.414cm) node[tnode] (v4) [label=right:$w$] {\small{$6$}};
\draw (u)
-- ++(-45:1cm) node[tnode] (v5) [label=right:$x$] {\small{$5$}};

\draw (v5)
-- ++(-90:1cm) node[tnode] (v6) [label=right:$y$] {\small{$6$}};

\draw (v4) edge node  {} (v3);
\draw (v4) edge node  {} (v5);
\end{tikzpicture}
\label{fig:cc8}
}
\qquad
\subfloat[][\textbf{($C_9$)}]{
\centering
\begin{tikzpicture}[scale=0.95]
\tikzstyle{whitenode}=[draw,circle,fill=white,minimum size=8pt,inner sep=0pt]
\tikzstyle{blacknode}=[draw,circle,fill=black,minimum size=6pt,inner sep=0pt]
\tikzstyle{tnode}=[draw,ellipse,fill=white,minimum size=8pt,inner sep=0pt]
\tikzstyle{texte} =[fill=white, text=black]

\draw (0,0) node[tnode] (u) [label=90:$u$] {\small{$7$}}
-- ++(30:1.155cm) node[tnode] (v3) [label=right:$v_1$ \small{weak}] {\small{$4$}};
\draw (u)
-- ++(0:1cm) node[tnode] (v4) [label=right:$v_2$ \small{weak}] {\small{$4$}};
\draw (u)
-- ++(-30:1.155cm) node[tnode] (v5) [label=0:$v_3$ \small{weak}][label=-10:\small{$S_2$, $S_3$, $S_4$ or $d(v_3)=4$}] {\small{$5^-$}};
\end{tikzpicture}
\label{fig:cc9}
}\\
\qquad
\subfloat[][\textbf{($C_{10}$)}]{
\centering
\begin{tikzpicture}[scale=0.95]
\tikzstyle{whitenode}=[draw,circle,fill=white,minimum size=8pt,inner sep=0pt]
\tikzstyle{blacknode}=[draw,circle,fill=black,minimum size=6pt,inner sep=0pt]
\tikzstyle{tnode}=[draw,ellipse,fill=white,minimum size=8pt,inner sep=0pt]
\tikzstyle{texte} =[fill=white, text=black]

\draw (0,0) node[tnode] (u) [label=90:$u$] {\small{$7$}}
-- ++(0:1cm) node[tnode] (v3) [label=right:$v_2$ \small{$S_3$}] {\small{$5$}};
\draw (u)
-- ++(30:1.155cm) node[tnode] (v4) [label=right:$v_1$] {\small{$4$}};
\draw (u)
-- ++(-30:1.155cm) node[tnode] (v5) [label=0:$v_3$] {\small{$5^-$}};
\end{tikzpicture}
\label{fig:cc10}
}
\qquad
\subfloat[][\textbf{($C_{11}$)}]{
\centering
\begin{tikzpicture}[scale=0.95]
\tikzstyle{whitenode}=[draw,circle,fill=white,minimum size=8pt,inner sep=0pt]
\tikzstyle{blacknode}=[draw,circle,fill=black,minimum size=6pt,inner sep=0pt]
\tikzstyle{tnode}=[draw,ellipse,fill=white,minimum size=8pt,inner sep=0pt]
\tikzstyle{texte} =[fill=white, text=black]

\draw (0,0) node[tnode] (u) [label=90:$u$] {\small{$5$}}
-- ++(45:1cm) node[tnode] (v3) [label=right:$v$] {\small{$6$}};
\draw (u)
-- ++(0:1.414cm) node[tnode] (v4) [label=right:$w$] {\small{$6$}};
\draw (u)
-- ++(-45:1cm) node[tnode] (v5) [label=0:$x$] {\small{$6$}};

\draw (v4) edge node  {} (v3);
\draw (v4) edge node  {} (v5);
\end{tikzpicture}
\label{fig:cc11}
}
\caption{Forbidden configurations.}
\label{fig:forbidden}
\end{figure}
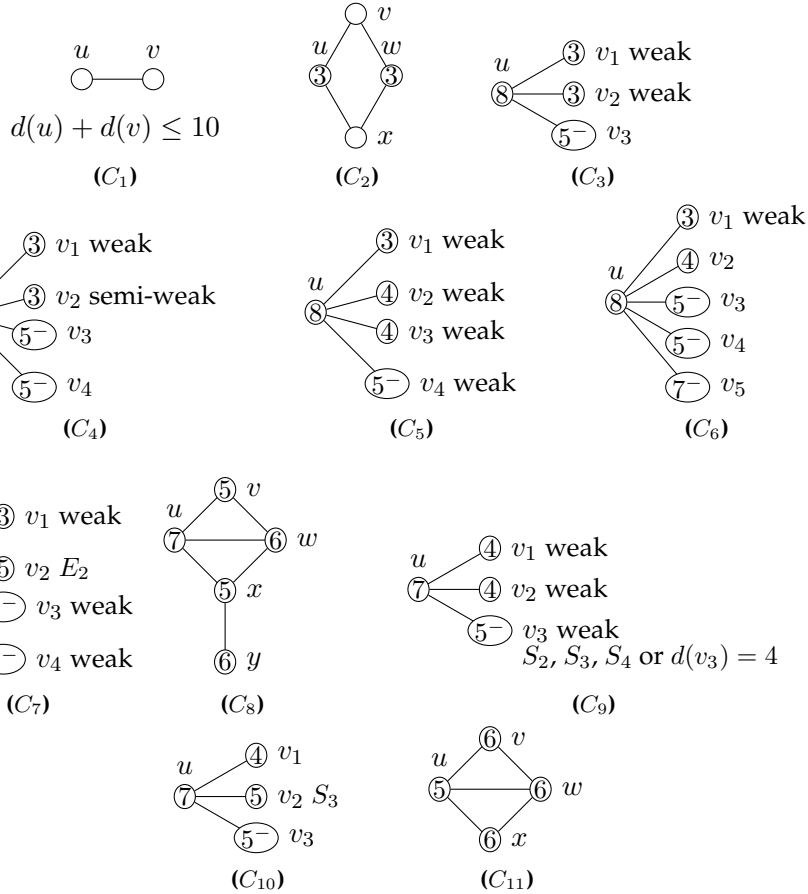
\captionsetup[subfloat]{labelformat=parens}

We first introduce the three following useful lemmas.

\begin{lemma}\label{lem:evencycle}\cite{v76}
Every even cycle $C$ verifies $\chi'_\ell(C)=2$.
\end{lemma}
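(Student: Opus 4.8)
The plan is to prove both bounds. The lower bound $\chi'_\ell(C) \ge 2$ is immediate: a cycle contains two incident edges, which must receive distinct colors, so a single color never suffices. The substance of the lemma is the upper bound, that $C$ is $2$-edge-choosable. Write $C = v_1 v_2 \cdots v_{2n} v_1$ and let $e_i = v_i v_{i+1}$ (indices modulo $2n$) be its edges, so that consecutive edges $e_{i-1}, e_i$ are exactly the incident pairs. First I would reduce to the case $|L(e_i)| = 2$ for every $i$: replacing each list by an arbitrary two-element subset only shrinks the set of available colors, so a coloring valid for the shrunk lists is a fortiori valid for the original ones.

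I would then split into two cases according to whether the lists are all identical. If $L(e_1) = L(e_2) = \cdots = L(e_{2n}) = \{a, b\}$, then I color the edges alternately $a, b, a, b, \ldots$ around the cycle; because the cycle has even length $2n$, this assignment closes up consistently (edge $e_{2n}$ receives $b$ while the incident edge $e_1$ receives $a$), giving a proper edge coloring. This is the only place where the parity of $C$ is used.

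Otherwise not all lists coincide; since equality of every pair of consecutive lists around the cycle would force all lists equal, some two incident edges carry different lists, and after relabelling I assume $L(e_1) \neq L(e_2)$. As both lists have size exactly $2$, there is a color $c \in L(e_2) \setminus L(e_1)$. I would color $e_2$ with $c$ and then color the remaining edges greedily in the path order $e_3, e_4, \ldots, e_{2n}, e_1$. At each step among $e_3, \ldots, e_{2n}$ exactly one neighbour, the previously colored edge $e_{j-1}$, is already colored, forbidding at most one color and leaving at least $2 - 1 = 1$ choice. The delicate step, and the only real obstacle, is closing the cycle at $e_1$, which is incident to both the already-colored $e_2$ and $e_{2n}$: naively this imposes two constraints on a list of size $2$. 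This is precisely why $c$ was chosen outside $L(e_1)$: the color of $e_2$ cannot equal any color of $e_1$, so it imposes no effective constraint, and $e_1$ is restricted only by $e_{2n}$, leaving at least one admissible color. Coloring $e_1$ completes a proper $L$-edge-coloring, so $\chi'_\ell(C) \le 2$; combined with the lower bound this yields $\chi'_\ell(C) = 2$.
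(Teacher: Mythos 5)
Your proof is correct. There is, however, nothing in the paper to compare it against: the paper states this lemma with a citation to Vizing~\cite{v76} and uses it as a black box, giving no proof of its own. Your argument is the standard, self-contained proof of $2$-choosability of even cycles, and every step checks out: the reduction to lists of size exactly $2$ is legitimate; the alternating coloring handles the case of identical lists and is the only place where evenness is used; and in the remaining case, the observation that non-identical lists force two \emph{incident} edges $e_1,e_2$ with $L(e_1)\neq L(e_2)$ is what makes the trick work, since coloring $e_2$ with $c\in L(e_2)\setminus L(e_1)$ guarantees that when the greedy coloring along $e_3,\dots,e_{2n},e_1$ closes the cycle, the edge $e_1$ sees only one effective constraint (from $e_{2n}$) rather than two. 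Incidentally, your case analysis also makes transparent why the lemma fails for odd cycles: only the all-lists-equal case uses parity, and that is exactly the case where an odd cycle is not colorable.
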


\begin{lemma}\label{lem:2322}
Let $G$ be the graph with five edges $(a,b,c,d,e)$ such that $(b,c,d,e)$ forms a cycle and $a$ is incident only to $b$ and $e$ (see Figure~\ref{fig:2322}). Let $L:\{a,b,c,d,e\}\rightarrow \mathcal{P}(\mathbb{N})$ a list assignment of at least two colors on every edge, where either $|L(b)|\geq 3$ or $L(b) \neq L(a)$. The graph $G$ is $L$-edge-colorable.
\end{lemma}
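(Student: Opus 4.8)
The plan is to reformulate the statement as a list vertex-coloring of the line graph of $G$ and to color the $4$-cycle first, then the pendant edge. Write the $4$-cycle as $v_1v_2v_3v_4$ with $b=v_1v_2$, $c=v_2v_3$, $d=v_3v_4$, $e=v_4v_1$, and let $a$ be the pendant edge at $v_1$. An $L$-edge-coloring is then a choice of $\mathrm{col}(f)\in L(f)$ for each $f\in\{a,b,c,d,e\}$ subject to the incidences at the vertices: $a,b,e$ pairwise distinct (they meet at $v_1$), together with $b\neq c$, $c\neq d$, $d\neq e$. In particular $b,c,d,e$ form an even cycle (with $e$ incident to $b$), while $a$ is incident only to $b$ and $e$. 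First I would color this $4$-cycle: by Lemma~\ref{lem:evencycle} it is $2$-edge-choosable, so with lists of size at least $2$ a proper coloring exists, and it automatically satisfies $\mathrm{col}(b)\neq\mathrm{col}(e)$.

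It then remains to extend the coloring to $a$, which must avoid the two distinct colors $\mathrm{col}(b)$ and $\mathrm{col}(e)$. If $|L(a)|\geq 3$ this is immediate for any cycle coloring, so I may assume $|L(a)|=2$, say $L(a)=\{1,2\}$. Here the only obstruction is that the chosen cycle coloring has $\{\mathrm{col}(b),\mathrm{col}(e)\}=\{1,2\}$, so the goal becomes: produce a proper coloring of the cycle with $\{\mathrm{col}(b),\mathrm{col}(e)\}\neq\{1,2\}$, i.e. with some color outside $\{1,2\}$ appearing on $b$ or $e$.

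This is exactly where the hypothesis is used. Since $|L(b)|\geq 2$, the assumption that $|L(b)|\geq 3$ or $L(b)\neq L(a)$ guarantees a color $x\in L(b)\setminus\{1,2\}$. My first attempt would be to set $\mathrm{col}(b)=x$ and then color the path $c,d,e$ with $\mathrm{col}(c),\mathrm{col}(e)\neq x$, coloring $c$ and $e$ first and $d$ last. The only way this fails is if $|L(d)|=2$ and $c,e$ are forced onto the two colors of $L(d)$; a short check shows this forces $|L(c)|=|L(e)|=2$ with $x\in L(c)\cap L(e)$. In that residual case I would instead set $\mathrm{col}(c)=\mathrm{col}(e)=x$ (legitimate since $c$ and $e$ are not incident), after which $d$ and $b$ each avoid only the single color $x$ and can be colored from lists of size at least $2$. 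In either strategy the color $x\notin\{1,2\}$ appears on $b$ or $e$, so the diagonal condition holds; then $a$ avoids at most one color of $\{1,2\}$ and can be colored, completing the extension.

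The main obstacle is the case $|L(a)|=2$, and within it the tight configuration where $c,d,e$ all have size-$2$ lists meshing so that the naive path extension jams at $d$; the point I expect to require the most care is verifying that this jam forces $x\in L(c)\cap L(e)$, which is precisely what makes the second strategy ($\mathrm{col}(c)=\mathrm{col}(e)=x$) available. The hypothesis on $L(b)$ enters only through the existence of $x\notin L(a)$, which is what rules out the unique bad list assignment for this (otherwise non-$2$-choosable) graph.
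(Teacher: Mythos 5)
Your proof is correct, but it is organized quite differently from the paper's. The paper first normalizes to the worst case $|L(a)|=|L(c)|=|L(d)|=|L(e)|=2$ and then splits on whether $L(c)\cap L(e)=\emptyset$: if the opposite edges $c,e$ share a color, that color is used on both and the hypothesis on $L(b)$ is invoked to color $a$ and $b$ afterwards; if they share no color, a color $\alpha\in L(b)\setminus L(a)$ is placed on $b$, and since $\alpha$ misses $L(c)$ or $L(e)$, a greedy order finishes. You instead split on $|L(a)|$, dispatch $|L(a)|\geq 3$ by coloring the $4$-cycle via Lemma~\ref{lem:evencycle} and extending to $a$, and in the tight case $|L(a)|=2$ organize everything around the single invariant that some $x\in L(b)\setminus L(a)$ must appear on $b$ or on $e$. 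Your jam analysis is sound: if every choice of $\mathrm{col}(c)\in L(c)\setminus\{x\}$ and $\mathrm{col}(e)\in L(e)\setminus\{x\}$ blocks $d$, then $L(d)$ would have to equal two different $2$-sets as soon as either of $L(c)\setminus\{x\}$, $L(e)\setminus\{x\}$ had two elements, so both are singletons, forcing $|L(c)|=|L(e)|=2$ and $x\in L(c)\cap L(e)$ — exactly what legitimizes the fallback $\mathrm{col}(c)=\mathrm{col}(e)=x$ (these edges are non-adjacent), after which $b$, $d$ and finally $a$ color greedily. In the end both proofs rest on the same two elementary moves — spend a color of $L(b)\setminus L(a)$ on $b$, or repeat one color on the non-adjacent pair $\{c,e\}$ — but your decomposition isolates the actual obstruction (the diagonal $\{b,e\}$ being forced onto $L(a)$) and reuses Lemma~\ref{lem:evencycle}, making the argument shorter and more conceptual, whereas the paper's split on $L(c)\cap L(e)$ keeps the proof self-contained with each branch finishing by an immediate greedy order.
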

\begin{proof}
We consider w.l.o.g. the worst case, i.e. $|L(a)|=|L(c)|=|L(d)|=|L(e)|=2$.
We consider two cases depending on whether $L(c) \cap L(e) = \emptyset$.
\begin{itemize}
\item $L(c) \cap L(e) \neq \emptyset$.\\Then let $\alpha \in L(c) \cap L(e)$. We color $c$ and $e$ in $\alpha$. Since $|L(b)|\geq 3$ or $L(a) \neq L(b)$, we can color $a$ and $b$. We color $d$.
\item $L(c) \cap L(e) = \emptyset$.\\Then let $\alpha \in L(b) \setminus L(a)$. We color $b$ in $\alpha$, and consider two cases depending on whether $\alpha \not\in L(c)$ or $\alpha \not\in L(e)$.
\begin{itemize}
\item $\alpha \not\in L(c)$.\\Then we color successively $e$, $a$, $d$ and $c$. 
\item $\alpha \not\in L(e)$.\\Then we color successively $c$, $d$, $e$ and $a$. 
\end{itemize}
\end{itemize}
\end{proof}

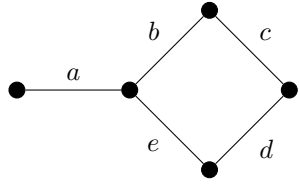
\begin{figure}[!h]
\centering
\begin{tikzpicture}[scale=1.5,auto]
\tikzstyle{whitenode}=[draw,circle,fill=white,minimum size=8pt,inner sep=0pt]
\tikzstyle{blacknode}=[draw,circle,fill=black,minimum size=6pt,inner sep=0pt]
\tikzstyle{tnode}=[draw,ellipse,fill=white,minimum size=8pt,inner sep=0pt]
\tikzstyle{texte} =[fill=white, text=black]
\draw (0,0) node[blacknode] (u) {};
\draw (45:1cm) node[blacknode] (v3) {};
\draw (0:1.414cm) node[blacknode] (v4) {};
\draw (-45:1cm) node[blacknode] (v5) {};
\draw (180:1cm) node[blacknode] (v6) {};
\foreach \source/ \dest /\weight in {u/v3/b, v5/u/e, v3/v4/c, v4/v5/d, v6/u/a}
\draw (\source) edge node[font=\small,pos=0.5] {$\weight$} (\dest);
\end{tikzpicture}
\caption{The graph of Lemma~\ref{lem:2322}.}
\label{fig:2322}
\end{figure}

\begin{lemma}\label{lem:3vertex}
Let $G$ be the star on three edges $(a,b,c)$. Let $L:\{a,b,c\}\rightarrow \mathcal{P}(\mathbb{N})$ a list assignment such that $|L(a)| \geq 2$, $|L(b)| \geq 2$, $|L(c)| \geq 2$. The graph $G$ is $L$-edge-colorable unless $L(a)$, $L(b)$ and $L(c)$ are all equal and of cardinality $2$.
\end{lemma}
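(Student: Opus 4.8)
The three edges $a,b,c$ share the center of the star, hence are pairwise incident; an $L$-edge-coloring is therefore exactly a choice of pairwise distinct colors $\alpha\in L(a)$, $\beta\in L(b)$, $\gamma\in L(c)$, i.e.\ a system of distinct representatives (SDR) for the family $\{L(a),L(b),L(c)\}$. The plan is to determine exactly when such an SDR exists and to check that the only obstruction is the stated exceptional case.

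First I would apply Hall's theorem: an SDR exists if and only if $\left|\bigcup_{x\in S}L(x)\right|\ge |S|$ for every $S\subseteq\{a,b,c\}$. The conditions for $|S|=1$ are immediate, since $|L(x)|\ge 2\ge 1$, and those for $|S|=2$ hold because the union of two sets of size $\ge 2$ has size $\ge 2$. Hence the only condition that can fail is the one for the whole set, namely $|L(a)\cup L(b)\cup L(c)|\ge 3$. So $G$ fails to be $L$-edge-colorable precisely when $|L(a)\cup L(b)\cup L(c)|\le 2$.

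It then remains to match this failure with the exceptional case. Since each of $L(a),L(b),L(c)$ has at least two elements and is contained in the union, the inequality $|L(a)\cup L(b)\cup L(c)|\le 2$ forces the union to have exactly two elements, with each list equal to it; thus $L(a),L(b),L(c)$ are all equal and of cardinality $2$. Conversely, if all three lists equal a common two-element set, then only two colors are available in total, so no three pairwise distinct colors exist and $G$ is not $L$-edge-colorable. This gives the lemma.

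If one prefers to avoid Hall, the same conclusion follows from a short greedy case split: if some list has at least $3$ colors, color the other two edges with distinct colors (always possible since each has at least $2$, even when those two lists coincide) and then extend to the last edge, which still sees a free color; otherwise all three lists have size exactly $2$, and if they are not all equal then two of them differ, so their union contains a color lying outside the third list, which lets one finish by coloring that edge last. There is essentially no obstacle here—the only point requiring care is the tightness claim, namely that no failure occurs unless the three lists are equal and of size $2$.
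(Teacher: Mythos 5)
Your proof is correct, but its primary route is genuinely different from the paper's. The paper argues greedily: after normalizing without loss of generality so that either $|L(a)|\geq 3$, or all three lists have size $2$ with $L(a)\neq L(b)$ and $L(a)\neq L(c)$, it colors $c$ with a color not in $L(a)$ if possible, then colors $b$, then $a$; the normalization guarantees that $a$ still has a free color at the end. You instead observe that the three edges are pairwise incident, so an $L$-edge-coloring is exactly a system of distinct representatives for $\{L(a),L(b),L(c)\}$, and you apply Hall's theorem: the conditions for subfamilies of size $1$ and $2$ hold automatically, so colorability is equivalent to $|L(a)\cup L(b)\cup L(c)|\geq 3$, and failure of this condition, given that each list has size at least $2$, forces all three lists to equal a common $2$-element set. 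Your route buys an exact characterization (colorable if and only if the union of the lists has at least three colors), which makes the ``unless'' clause of the statement transparent and would scale to stars with more edges; the paper's route is more elementary and self-contained, requiring no external theorem, which suits the lemma's role as a small tool invoked repeatedly in the reducibility claims. Your fallback greedy argument (color the two other edges first and the list of size $\geq 3$ last; or, when all lists have size $2$ and two of them differ, pick a color of their union lying outside the third list and color that third edge last) is also correct and is essentially a reorganized version of the paper's own proof.
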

\begin{proof}
Assume we do not have $L(a)=L(b)=L(c)$ with $|L(a)| = 2$.  We assume without loss of generality that $|L(a)| \geq 3$ or that $|L(a)|=|L(b)|=|L(c)|=2$ with $L(a) \neq L(b)$ and $L(a) \neq L(c)$. We color $c$ in a color that is not available for $a$ if possible, in an arbitrary color otherwise. We color successively $b$ and $a$.
\end{proof}

\begin{lemma}\label{lem:config}
If $G$ is a minimal planar graph with $\Delta(G) \leq 8$ such that $\chi'_\ell(G)> 9$, then $G$ does not contain any of Configurations \textbf{($C_1$)} to \textbf{($C_{11}$)}.
\end{lemma}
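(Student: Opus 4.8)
The plan is to prove each configuration \textbf{($C_1$)}--\textbf{($C_{11}$)} reducible: assuming $G$ contains the configuration, I fix an arbitrary list assignment $L$ with $|L(e)|\geq 9$ for every edge, delete a carefully chosen set $F$ of edges to obtain a proper subgraph $H=G-F$, color $H$ by the minimality of $G$, and then extend the coloring to the edges of $F$. Any such extension yields an $L$-edge-coloring of $G$, contradicting $\chi'_\ell(G)>9$; hence $G$ contains none of the configurations.

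The elementary engine is a counting bound: once $H$ is colored, an uncolored edge $(u,v)$ must avoid only the colors of its incident edges, and each incident edge lying outside $F$ contributes at most one forbidden color. Thus $(u,v)$ retains at least $9$ minus the number of already-colored edges incident to $u$ or $v$ admissible colors. When $F$ is a single edge with $d(u)+d(v)\leq 10$, this already leaves $9-(d(u)+d(v)-2)\geq 1$ colors, disposing of \textbf{($C_1$)} immediately. In general I take $F$ to consist of the edges joining the central high-degree vertex $u$ to its prescribed low-degree neighbors, together---and this is the key refinement---with further edges incident to those low-degree neighbors. Deleting the extra edges at a small-degree vertex $v$ removes them from the forbidden count of the remaining edges at $v$, so each uncolored edge acquires a reduced admissible list whose size I can control through the degree hypotheses of the configuration.

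The edges of $F$ incident to a common vertex are pairwise adjacent, so extending the coloring amounts to list-edge-coloring the edges of $F$ with these reduced lists. I would process them in increasing order of list size and color greedily: if the edges can be ordered so that the $j$-th one has at least $j$ admissible colors, the extension succeeds, and the numerical conditions defining each configuration are exactly what make such an ordering available generically. The three preliminary lemmas handle the tight cases where the greedy count is borderline. When the uncolored edges form an even cycle I invoke Lemma~\ref{lem:evencycle}; when they form a star on three edges I invoke Lemma~\ref{lem:3vertex}, whose only obstruction---three equal lists of size two---is ruled out by the configuration; and the mixed structure of Lemma~\ref{lem:2322} handles the $4$-cycle-with-pendant arising around a weak neighbor. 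Here the geometric hypotheses do the real work: a weak neighbor $v$ of $u$ supplies two triangles $(u,v,a)$, $(u,v,b)$ whose edges can be folded into a recoloring cycle, a semi-weak neighbor supplies one triangle and one $4$-face, and the $E_2,E_3,E_4$ and $S_2,S_3,S_4$ classifications record precisely which neighboring degrees and faces occur, so that the reduced lists around $u$ are large enough, or the recoloring cycle short enough, to finish.

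I expect the main obstacle to be the degree-$7$ configurations \textbf{($C_8$)}--\textbf{($C_{10}$)}, where $d(u)=7$ leaves the least slack and the $S_3$-neighbor splits into several embedding sub-cases (Figures~\ref{fig:S2vertexb}--\ref{fig:S2vertexd}); there the choice of $F$ must be adapted to the actual arrangement of triangles and $4$-faces around $u$, and the greedy/Hall condition holds only after the structural lemmas dispatch the equality cases. The bulk of the proof is therefore the configuration-by-configuration verification that the chosen $F$ produces reduced lists meeting the hypotheses of Lemmas~\ref{lem:evencycle}--\ref{lem:3vertex}; the conceptual content is uniform across the cases, but the bookkeeping is heaviest for the $7$-vertices.
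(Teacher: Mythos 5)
Your framework (delete a set of edges around the configuration, color the rest by minimality, extend using counting, greedy ordering, and Lemmas~\ref{lem:evencycle}--\ref{lem:3vertex}) matches the paper's surface structure, and it does dispose of \textbf{($C_1$)}, \textbf{($C_2$)}, \textbf{($C_8$)} and \textbf{($C_{11}$)}. But there is a genuine gap: you treat the coloring of $H=G-F$ as fixed and only ever extend it, and that provably cannot work for \textbf{($C_3$)}, \textbf{($C_4$)} and \textbf{($C_6$)}. Take \textbf{($C_6$)}: after coloring $G\setminus\{v_1\}$, the three edges at the degree-$3$ vertex $v_1$ are pairwise incident and each retains only the guaranteed $2$ available colors; nothing in the configuration prevents these three lists from being \emph{identical} of size $2$, and then no extension exists in any order --- this is exactly the obstruction of Lemma~\ref{lem:3vertex}, not a tight case it ``dispatches.'' The numerical hypotheses of the configuration do not exclude this situation, so your greedy/Hall engine has nothing left to say. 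The paper's way out, which it singles out in its conclusion as the key idea of the whole proof, is to \emph{recolor} part of the already-colored subgraph: it builds an auxiliary directed graph on certain colored edges around $u$ (with an arc from $e$ to $e'$ when the color of $e$ is admissible for $e'$), and shows one can always either rotate a directed cycle or shift colors along a path from a source, thereby changing the color of an edge incident to $a_1$ or $b_1$ and forcing $\hat{a_1}\neq\hat{b_1}$. Only after breaking the bad case this way does Lemma~\ref{lem:3vertex} apply. A variant of the same idea (choosing colors so that two specific residual lists of size $1$ end up distinct) is what drives \textbf{($C_9$)} and \textbf{($C_{10}$)}; your proposal has no mechanism for this either.

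A second, smaller omission: for \textbf{($C_5$)} and \textbf{($C_7$)} the paper does not color edges one at a time. It repeatedly uses a pairing trick: for two non-incident uncolored edges $e,e'$ and a third edge $k$ incident to both, the pigeonhole bound $|\hat{e}|+|\hat{e'}|>|\hat{k}|$ yields a color in $(\hat{e}\cap\hat{e'})\cup\bigl((\hat{e}\cup\hat{e'})\setminus\hat{k}\bigr)$, so that coloring $e$ and $e'$ together costs $k$ at most one color; this is what makes the long deletion sequences go through. Sequential greedy in increasing list order, which is all your proposal commits to, does not recover these savings. So while your high-level plan points in the right direction, the two tools that actually carry the hard configurations --- recoloring via the directed auxiliary graph, and the same-color pairing of non-incident edges --- are missing, and the first of these is indispensable.
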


\begin{proof}
Let $L$ be a list assignment on the edges of $G$ with $|L(e)| \geq 9$ for every edge $e$ of $G$. We prove that if $G$ contains any of Configurations \textbf{($C_1$)} to \textbf{($C_{11}$)}, then there is a subgraph $H$ of $G$, that can be $L$-edge-colored by minimality, and whose $L$-edge-coloring is extendable to $G$, a contradiction.\newline

A \emph{constraint} of an edge $e \in E$ is an already colored edge that is incident to $e$. In the following, we denote generically $\hat{e}$ the list of available colors for an edge $e$ at the moment it is used: the list is implicitely modified as incident edges are colored. Proving that the $L$-edge-coloring of $H$ can be extended to $G$ is equivalent to proving that the graph induced by the edges that are not colored yet is $L'$-colorable, where $L'(e)=\hat{e}$ for every edge $e$.
We sometimes \emph{delete} edges. Deleting an edge means that no matter the coloring of the other uncolored edges, there will still be a free color for it (for example, when the edge has more colors available than uncolored incident edges). Thus the deleted edge is implicitely colored after the remaining uncolored edges.\newline

We use the same notations as in the definition of Configurations $(C_1)$ to $(C_{11})$ (see Figure~\ref{fig:forbidden}). 

\begin{claim}\label{claim:C1}
$G$ cannot contain \textbf{($C_1$)}.
\end{claim}
\begin{proof}
Using the minimality of $G$, we color $G \setminus \{(u,v)\}$. Since $d(u)+d(v) \leq 10$, the edge $(u,v)$ has at most $10-2$ constraints. There are $9$ colors, so we can color $(u,v)$, thus extending the coloring to $G$.
\end{proof}

\begin{claim}\label{claim:C2}
$G$ cannot contain \textbf{($C_2$)}.
\end{claim}
\begin{proof}
Using the minimality of $G$, we color $G \setminus \{(u,v),(v,w),(w,x),(x,u)\}$. Since $\Delta(G) \leq 8$ and $d(u)=d(w)=3$, every uncolored edge has at most $8-2+1$ constraints. There are $9$ colors, so every uncolored edge has at least two available colors, and they form a cycle of length four. We can thus apply Lemma~\ref{lem:evencycle} to extend the coloring to $G$.
\end{proof}

\begin{claim}\label{claim:C3}
$G$ cannot contain \textbf{($C_3$)}.
\end{claim}
\begin{proof}
By Claim~\ref{claim:C2}, vertices $v_1$ and $v_2$ have no common neighbor other than $u$. By Claim~\ref{claim:C1}, for $i \in \{1,2,3\}$, vertex $v_i$ is adjacent only to vertices of degree at least $6$. So the $v_i$'s are pairwise non-adjacent. We name the edges according to Figure~\ref{fig:c3}.

\begin{figure}[!h]
\centering
\begin{tikzpicture}[scale=1.5,auto]
\tikzstyle{whitenode}=[draw=black,circle,fill=white,minimum size=8pt,inner sep=0pt]
\tikzstyle{blacknode}=[draw=black,circle,fill=black,minimum size=6pt,inner sep=0pt]
\tikzstyle{tnode}=[draw=black,ellipse,fill=white,minimum size=8pt,inner sep=0pt]
\tikzstyle{texte} =[fill=white, text=black]

\foreach \pos/\name/\angle/\nom in {{(0,0)/u/0/u}, {(22.5+90:2cm)/v1/90/v_1}, {(-22.5-90:2cm)/v2/-90/v_2}}
	\node[blacknode] (\name) [label=\angle:$\nom$] at \pos {};

\foreach \pos/\name in {{(22.5+45:2cm)/x1}, {(22.5+135:2cm)/w1}, {(-22.5-45:2cm)/x2}, {(-22.5-135:2cm)/w2}, {(-22.5:2cm)/v4}}
	\node[whitenode] (\name) at \pos {};

\draw (22.5:2cm) node[tnode] (v3) [label=90:$v_3$] {\small{$5^-$}};

\foreach \i in {1,2}
{\foreach \source/ \dest /\weight in {u/v\i/c_\i, u/w\i/e_\i, u/x\i/f_\i}
\draw[swap] (\source) edge node[font=\small,pos=0.6] {$\weight$} (\dest);}

\foreach \source/ \dest /\weight in {v1/w1/a_1, x1/v1/b_1, w2/v2/a_2, v2/x2/b_2, u/v3/g, u/v4/}
\draw[swap] (\source) edge node[font=\small,pos=0.5] {$\weight$} (\dest);
\end{tikzpicture}
\caption{Notations of Claim~\ref{claim:C3}}
\label{fig:c3}
\end{figure}
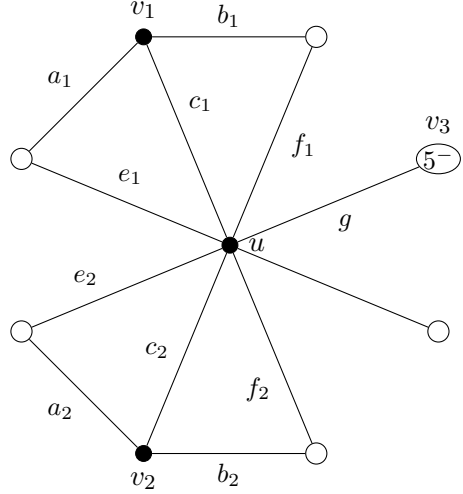

By minimality of $G$, we color $G \setminus \{v_1,v_2\}$. Since there are $9$ colors and every vertex is of degree at most $8$, we have $|\hat{a_1}|,|\hat{a_2}|,|\hat{b_1}|,|\hat{b_2}| \geq 2$ and $|\hat{c_1}|, |\hat{c_2}| \geq 3$. We first prove the following.
\begin{claimb}\label{clb:rec}
If $\hat{a_1}=\hat{b_1}$, $\hat{a_2}=\hat{b_2}$, and $|\hat{a_1}|=|\hat{b_1}|=|\hat{a_2}|=|\hat{b_2}|=2$. Then we can recolor $G \setminus \{v_1,v_2\}$ so that the hypothesis is not verified anymore.
\end{claimb}

\begin{proofclaim}
For $i \in \{1,2\}$, let $\hat{a_i}=\hat{b_i}=\{\alpha_i,\beta_i\}$. For $i \in \{1,2\}$, let $\gamma_i$ and $\delta_i$ be the color of $e_i$ and $f_i$, respectively. Note that $\gamma_i \in L(a_i)$ and $\delta_i \in L(b_i)$ since $|\hat{a_i}|=|\hat{b_i}|=2$. Note that for a given $i \in \{1,2\}$, the colors $\alpha_i, \beta_i, \gamma_i$ and $\delta_i$ are all different. 

We claim that any recoloring of $\{e_1,f_1,e_2,f_2,g\}$ such that the color of at least one of $\{e_1,f_1,$ $e_2,f_2\}$ has been changed breaks the hypothesis of (\ref{clb:rec}). Indeed, assume w.l.o.g. that the color of $e_1$ can be changed while recoloring only edges of $\{e_1,f_1,e_2,f_2,g\}$, and consider such a coloring. We have $\gamma_1 \in \hat{a_1}$ since $\gamma_1 \in L(a_1)$ and the only edge of $\{e_1,f_1,e_2,f_2,g\}$ that is incident to $a_1$ is $e_1$, which is not colored in $\gamma_1$ anymore. We have $\gamma_1 \not\in \hat{b_1}$ since $\gamma_1 \not\in \{\alpha_1,\beta_1,\delta_1\}$ and the only edge of $\{e_1,f_1,e_2,f_2,g\}$ that is incident to $b_1$ is $f_1$, which was colored in $\delta_1$. Thus $\hat{a_1} \neq \hat{b_1}$, and the hypothesis of (\ref{clb:rec}) is broken.

We prove now that there exists such a recoloring. Aside from the constraints derived from $\{e_1,e_2,f_1,f_2,g\}$, each edge $e_i$ or $f_i$ has at most $(8-2)+(8-7)=7$ constraints, and $g$ has at most $(5-1)+(8-7)=5$ constraints. Let $L'$ be the list assignment of the colors available for those edges, when ignoring the constraints derived from $\{e_1,e_2,f_1,f_2,g\}$. Note that $|L'(e_i)|, |L'(f_i)| \geq 2$ and $|L'(g)| \geq 4$. Let us build the directed graph $D$ whose vertex set is $V(D)=\{e_1,e_2,f_1,f_2,g\}$ and where for any two distinct $u,v \in V(D)$, there is an edge from $u$ to $v$ if the color of $u$ belongs to $L'(v)$. We consider two cases depending on whether there is a cycle in $D$.
\begin{itemize}
\item \emph{There is a cycle in $D$}.\\Then we recolor accordingly the edges in $G$ (for any edge from $u$ to $v$ in the cycle, $v$ takes the initial color of $u$, which belongs by definition to $L'(v)$). Since a cycle contains at least two vertices, at least one of $\{e_1,e_2,f_1,f_2\}$ has been recolored.
\item \emph{There is no cycle in $D$}.\\Then some vertex has in-degree $0$. We consider two cases depending on whether some $e_i$ or $f_i$ has in-degree $0$.
\begin{itemize}
\item \emph{Some $e_i$ or $f_i$ has in-degree $0$}.\\Then it can be recolored without conflict (i.e. without recoloring the other vertices of $D$).
\item \emph{Every $e_i$ and $f_i$ has in-degree at least $1$}.\\Then $g$ has in-degree $0$. So $g$ can be recolored without conflict. Since there is no cycle in $D$ and every $e_i$ and $f_i$ is of in-degree at least $1$, there is necessarily an edge from $g$ to some $e_i$ or $f_i$, which can now be recolored without conflict. 
\end{itemize}
\end{itemize}
\end{proofclaim}

By (\ref{clb:rec}), we can assume that we have a coloring of $G \setminus\{v_1,v_2\}$ that does not verify the hypothesis of (\ref{clb:rec}). W.l.o.g., we consider the case where $\hat{a_1}\neq \hat{b_1}$ or $|\hat{a_1}|\geq 3$. We color $a_2, b_2$ and $c_2$. Then $|\hat{c_1}|\geq 2$ and $\hat{a_1}$ and $\hat{b_1}$ have not been modified. So we apply Lemma~\ref{lem:3vertex} to the edges incident to $v_1$.
\end{proof}

\begin{claim}\label{claim:C4}
$G$ cannot contain \textbf{($C_4$)}.
\end{claim}
\begin{proof}
We prove Claim~\ref{claim:C4} similarly as Claim~\ref{claim:C3}.
By Claim~\ref{claim:C2}, vertices $v_1$ and $v_2$ have no common neighbor other than $u$. By Claim~\ref{claim:C1}, for $i \in \{1,2,3,4\}$, vertex $v_i$ is adjacent only to vertices of degree at least $6$. So the $v_i$'s are pairwise non-adjacent. We name the edges according to Figure~\ref{fig:c4}. Note that among the edges named here, the edge $b_2$ is incident only to $a_2$ and $c_2$.

By minimality of $G$, we color $G \setminus \{v_1,v_2\}$. Since there are $9$ colors and every vertex is of degree at most $8$, we have $|\hat{a_1}|,|\hat{a_2}|,|\hat{b_1}|,|\hat{b_2}| \geq 2$ and $|\hat{c_1}|, |\hat{c_2}| \geq 3$. We proceed as for Claim~\ref{claim:C3} and prove the following.
\begin{claimb}\label{clb:rec4}
If $\hat{a_1}=\hat{b_1}$, $\hat{a_2}=\hat{b_2}$, and $|\hat{a_1}|=|\hat{b_1}|=|\hat{a_2}|=|\hat{b_2}|=2$. Then we can recolor $G \setminus \{v_1,v_2\}$ so that the hypothesis is not verified anymore.
\end{claimb}

\begin{proofclaim}
For $i \in \{1,2\}$, let $\hat{a_i}=\hat{b_i}=\{\alpha_i,\beta_i\}$. For $i \in \{1,2\}$, let $\gamma_i$ be the color of $e_i$. Let $\delta_1$ be the color of $f_1$. Note that $\gamma_i \in L(a_i)$ since $|\hat{a_i}|=2$. Similarly, $\delta_1 \in L(b_1)$. Note that for a given $i \in \{1,2\}$, the colors $\alpha_i, \beta_i, \gamma_i$ (and $\delta_1$ if $i=1$) are all different. 

We claim that any recoloring of $\{e_1,f_1,e_2,g_1,g_2\}$ such that the color of at least one of $\{e_1,f_1,e_2\}$ has been changed breaks the hypothesis of (\ref{clb:rec4}). Indeed, assume that the color of $e_1$ can be changed while recoloring only edges of $\{e_1,f_1,e_2,g_1,g_2\}$, and consider such a coloring. (The cases where the color of $f_1$ or $e_2$ can be changed are similar). We have $\gamma_1 \in \hat{a_1}$ since $\gamma_1 \in L(a_1)$ and the only edge of $\{e_1,f_1,e_2,g_1,g_2\}$ that is incident to $a_1$ is $e_1$, which is not colored in $\gamma_1$ anymore. We have $\gamma_1 \not\in \hat{b_1}$ since $\gamma_1 \not\in \{\alpha_1,\beta_1,\delta_1\}$ and the only edge of $\{e_1,f_1,e_2,f_2,g\}$ that is incident to $b_1$ is $f_1$, which was colored in $\delta_1$. Thus $\hat{a_1} \neq \hat{b_1}$.

We prove now that there exists such a recoloring. Aside from the constraints derived from $\{e_1,e_2,f_1,g_1,g_2\}$, each edge $e_1$, $e_2$ and $f_1$ has at most $(8-2)+(8-7)=7$ constraints, and each $g_i$ has at most $4+1=5$ constraints. Let $L'$ be the list assignment of the colors available for those edges, when ignoring the constraints derived from $\{e_1,e_2,f_1,g_1,g_2\}$. Note that $|L'(e_i)|, |L'(f_1)| \geq 2$, and $|L'(g_i)| \geq 4$. We consider w.l.o.g. the worst case, i.e. $|L'(e_1)|=|L'(e_2)|=|L'(f_1)|=2$. Let us build the directed graph $D$ whose vertex set is $V(D)=\{e_1,e_2,f_1,g_1,g_2\}$ and where there is an edge from $u$ to $v$ if the color of $u$ belongs to $L'(v)$. 

First note that if there is an edge from some $g_i$ to some $v \in \{e_1,e_2,f_1\}$, then there are all edges from $\{e_1,e_2,f_1\} \setminus \{v\}$ to $g_i$. Indeed, if vertex $g_i$ has in-degree at most $2$, we recolor $g_i$ and recolor $v$ into the former color of $g_i$. So we assume vertex $g_i$ has in-degree at least $3$. If there is an edge from $v$ to $g_i$, we exchange the colors of $v$ and $g_i$. Thus there are all possible edges from $\{e_1,e_2,f_1\} \setminus \{v\}$ to $g_1$.

If some $e_i$ or $f_i$ has in-degree $0$, se can recolor it without conflict. So we can assume that all of $e_1$, $e_2$ and $f_1$ have in-degree at least $1$. If there is no edge from $\{g_1,g_2\}$ to $\{e_1,e_2,f_1\}$, then there is a directed cycle in $\{e_1,e_2,f_1\}$, and we recolor accordingly the edges in $G$. So there is at least an edge from $\{g_1,g_2\}$ to $\{e_1,e_2,f_1\}$. We consider w.l.o.g. that there is an edge from $g_1$ to $e_1$. By the previous remark, there is an edge from $e_2$ and $f_1$ to $g_1$. Both $e_2$ and $f_1$ have in-degree at least $1$. If there is an edge from $e_2$ to $f_1$ and an edge from $f_1$ to $e_2$, we exchange their colors. So we assume w.l.o.g. that there is an edge from $\{e_1,g_1,g_2\}$ to $e_2$. If there is an edge from $e_1$ to $e_2$, there is a directed cycle on $\{e_1,e_2,g_1\}$, and we recolor accordingly the edges in $G$. If there is an edge from $g_1$ to $e_2$, we exchange the colors of $g_1$ and $e_2$. If there is an edge from $g_2$ to $e_2$, then by the previous remark, there is an edge from $e_1$ to $g_2$. Thus there is a directed cycle on $\{e_1,g_2,e_2,g_1\}$ and we recolor accordingly the edges in $G$. 
\end{proofclaim}

By (\ref{clb:rec4}), we can assume that we have a coloring of $G \setminus\{v_1,v_2\}$ that does not verify the hypothesis of (\ref{clb:rec4}). W.l.o.g., we consider the case where $\hat{a_1}\neq \hat{b_1}$ or $|\hat{a_1}|\geq 3$. We color $a_2, b_2, c_2$ and apply Lemma~\ref{lem:3vertex} to the edges incident to $v_1$.
\end{proof}

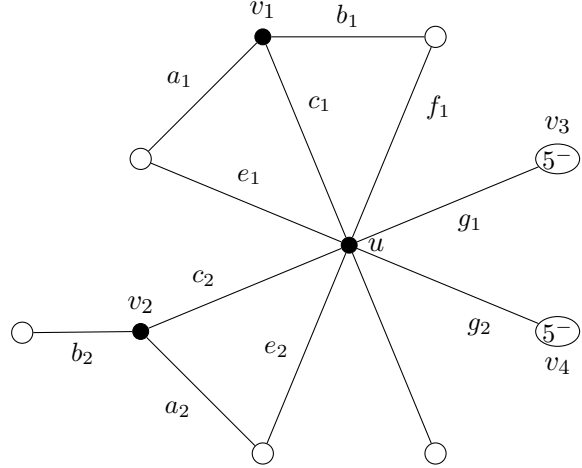
\begin{figure}
\centering
\begin{tikzpicture}[scale=1.5,auto]
\tikzstyle{whitenode}=[draw,circle,fill=white,minimum size=8pt,inner sep=0pt]
\tikzstyle{blacknode}=[draw,circle,fill=black,minimum size=6pt,inner sep=0pt]
\tikzstyle{tnode}=[draw,ellipse,fill=white,minimum size=8pt,inner sep=0pt]
\tikzstyle{texte} =[fill=white, text=black]

\foreach \pos/\name/\angle/\nom in {{(0,0)/u/0/u}, {(22.5+90:2cm)/v1/90/v_1}, {(-22.5-135:2cm)/w2/90/v_2}}
	\node[blacknode] (\name) [label=\angle:$\nom$] at \pos {};

\foreach \pos/\name in {{(22.5+45:2cm)/x1}, {(22.5+135:2cm)/w1}, {(-22.5-45:2cm)/x2}, {(-22.5-90:2cm)/v2}, {(195:3cm)/w3}}
	\node[whitenode] (\name) at \pos {};

\draw (22.5:2cm) node[tnode] (v3) [label=90:$v_3$] {\small{$5^-$}};
\draw (-22.5:2cm) node[tnode] (v4) [label=-90:$v_4$] {\small{$5^-$}};

\foreach \source/ \dest /\weight in {u/v1/c_1, u/w1/e_1, u/v2/e_2,u/w2/c_2}
\draw[swap] (\source) edge node[font=\small,pos=0.6] {$\weight$} (\dest);

\foreach \source/ \dest /\weight in {v1/w1/a_1, x1/v1/b_1, u/v3/g_1, w2/v2/a_2, w3/w2/b_2}
\draw[swap] (\source) edge node[font=\small,pos=0.5] {$\weight$} (\dest);

\draw[swap] (u) edge node[font=\small,pos=0.8] {$g_2$} (v4);
\draw[swap] (u) edge node[font=\small,pos=0.8] {$f_1$} (x1);
\draw[swap] (u) edge node[font=\small] {} (x2);

\end{tikzpicture}
\caption{Notations of Claim~\ref{claim:C4}}
\label{fig:c4}
\end{figure}

\begin{claim}\label{claim:C5}
$G$ cannot contain \textbf{($C_5$)}.
\end{claim}
\begin{proof}
By Claim~\ref{claim:C1}, no two $v_i$ are adjacent. Since every $v_i$ is a weak neighbor of $u$, and $d(u)=8$, the neighborhood of $u$ forms a cycle (see Figure~\ref{fig:c5}). We consider two cases depending on whether there is a vertex $x$ such that $v_2, x$ and $v_3$ appear consecutively around $u$.
\begin{itemize}
\item \emph{There is a vertex $x$ such that $v_2, x$ and $v_3$ appear consecutively around $u$}.\\We consider without loss of generality that the neighbors of $u$ are, clockwise, $v_1$, $w_1$, $v_2$, $w_2$, $v_3$, $w_3$, $v_4$ and $w_4$. We name the edges according to Figure~\ref{fig:c5b}. Note that the edges $l$ and $o$ are distinct. By minimality of $G$, we color $G \setminus \{a,\ldots,r\}$.

Without loss of generality, we consider the worst case, i.e. $|\hat{l}|=|\hat{o}|=|\hat{q}|=|\hat{r}|=2$, $|\hat{b}|=|\hat{d}|=|\hat{f}|=|\hat{h}|=|\hat{i}|=|\hat{j}|=|\hat{k}|=|\hat{m}|=|\hat{n}|=|\hat{p}|=4$, $|\hat{g}|=7$, and $|\hat{a}|=|\hat{c}|=|\hat{e}|=9$. We consider two cases depending on whether $\hat{i}=\hat{j}$ and $\hat{i} \cap \hat{h} \neq \emptyset$.
\begin{itemize}
\item \emph{$\hat{i}\neq \hat{j}$ or $\hat{i} \cap \hat{h} = \emptyset$}.\\If $\hat{i} \neq \hat{j}$, we color $i$ in a color that does not belong to $\hat{j}$. Otherwise $\hat{h} \cap \hat{i} = \emptyset$ and $i$ can be deleted. In any case $|\hat{j}|=4$ and $j$ has exactly $3$ uncolored incident edges, so we can delete it. Then $|\hat{a}|\geq 8$ and $a$ has $7$ uncolored incident edges, so we can delete it. Since $|\hat{b}|+|\hat{l}|>|\hat{k}|$, there exists a color $\alpha \in (\hat{b} \cap \hat{l})\cup ((\hat{b} \cup \hat{l}) \setminus \hat{k})$. Note that $b$ and $l$ are not incident. We color $b$ and $l$ in $\alpha$ if possible, in an arbitrary color otherwise. If $\alpha \in \hat{b}\cap\hat{l}$, then $b$ and $l$ are colored in $\alpha$ and $|\hat{k}|\geq 3$. If $\alpha \in (\hat{b} \cup \hat{l}) \setminus \hat{k}$, then at least one of $b$ and $l$ is colored in $\alpha$ and $|\hat{k}|\geq 3$. So we can delete $k$. Then, successively, $c$, $m$, $e$, $n$, $o$, $p$, $g$, $d$, $f$, $h$, $q$ and $r$ can be deleted.

\item \emph{$\hat{i}=\hat{j}$ and $\hat{i} \cap \hat{h} \neq \emptyset$}.\\Then let $\alpha \in \hat{j} \cap \hat{h}$. Note that $j$ and $h$ are not incident. We color $j$ and $h$ in $\alpha$. Since $i$ (resp. $a$) is incident to both $j$ and $h$, we can successively delete $i$ and $a$. We color successively $r$ and $q$. Without loss of generality, we consider the worst case, i.e. $|\hat{f}|=|\hat{l}|=|\hat{o}|=2$, $|\hat{b}|=|\hat{d}|=|\hat{k}|=|\hat{p}|=3$, $|\hat{g}|=|\hat{m}|=|\hat{n}|=4$, and $|\hat{c}|=|\hat{e}|=8$.
We consider three cases depending on whether $\hat{f} \cap \hat{n} =\emptyset$ and $\hat{p}\setminus \hat{o} \subset \hat{n}$.
\begin{itemize}
\item $\hat{f} \cap \hat{n} \neq \emptyset$.\\Then let $\beta \in \hat{f} \cap \hat{n}$. We color $f$ and $n$ in $\beta$. We delete successively $e$, $p$, $o$, $c$, and $g$. we color $l$. We apply Lemma~\ref{lem:evencycle} on $(b,k,m,d)$.
\item $\hat{p}\setminus \hat{o} \not\subset \hat{n}$.\\Then let $\beta \in \hat{p}\setminus (\hat{o} \cup \hat{n})$. We color $p$ in $\beta$. We color $f$. Since $|\hat{m}|+|\hat{o}|>|\hat{n}|$, there exists $\gamma \in (\hat{m}\cap \hat{o})\cup((\hat{m}\cup \hat{o})\setminus \hat{n})$. If $\gamma \in \hat{d}$ or $\gamma \not\in \hat{m}$, we color $d$ and $o$ in $\gamma$ if possible, in an arbitrary color otherwise. We delete successively $n$, $e$, $c$, $m$, $l$, $k$, $g$ and $b$. If $\gamma \not\in \hat{d}$ and $\gamma \in \hat{m}$, we color $m$ and $o$ in $\gamma$ if possible, in an arbitrary color otherwise. Note that $|\hat{d}|\geq 2$. We delete successively $e$, $c$, $g$, $d$, $b$, $k$ and $l$.
\item \emph{$\hat{f} \cap \hat{n} =\emptyset$ and $\hat{p}\setminus \hat{o} \subset \hat{n}$}.\\Then let $\beta \in \hat{p} \setminus \hat{o}$. We color $p$ in $\beta$. By assumption, $\beta \not\in \hat{f} \cup \hat{o}$. We color $n$ in a color that does not belong to $o$. We delete successively $o$, $e$, $c$ and $g$. We color $l$, and apply Lemma~\ref{lem:2322} on $(f,b,k,m,d)$. 
\end{itemize}
\end{itemize}
\item \emph{There is no vertex $x$ such that $v_2, x$ and $v_3$ appear consecutively around $u$}.\\We consider without loss of generality that the neighbors of $u$ are, clockwise, $v_1$, $w_1$, $v_2$, $w_2$, $v_4$, $w_3$, $v_3$ and $w_4$. We name the edges according to Figure~\ref{fig:c5a}. By minimality of $G$, we color $G \setminus \{a,\ldots,r\}$.

Without loss of generality, we consider the worst case, i.e. $|\hat{l}|=|\hat{n}|=|\hat{o}|=|\hat{q}|=2$, $|\hat{b}|=|\hat{d}|=|\hat{f}|=|\hat{h}|=|\hat{i}|=|\hat{j}|=|\hat{k}|=|\hat{m}|=|\hat{p}|=|\hat{r}|=4$, $|\hat{e}|=7$, and $|\hat{a}|=|\hat{c}|=|\hat{g}|=9$. We consider two cases depending on whether $\hat{i}=\hat{j}$ and $\hat{i} \cap \hat{h} \neq \emptyset$.
\begin{figure}[!h]
\centering
\subfloat[][]{
\begin{tikzpicture}[scale=1.25,auto]
\tikzstyle{whitenode}=[draw,circle,fill=white,minimum size=8pt,inner sep=0pt]
\tikzstyle{blacknode}=[draw,circle,fill=black,minimum size=6pt,inner sep=0pt]
\tikzstyle{tnode}=[draw,ellipse,fill=white,minimum size=8pt,inner sep=0pt]
\tikzstyle{texte} =[fill=white, text=black]
\foreach \pos/\name/\angle/\nom in {{(40:0cm)/u/5/}, {(90:2cm)/v1/90/v_1}, {(0:2cm)/v2/45/v_2}, {(-90:2cm)/v3/(-45)/v_3}}
	\node[blacknode] (\name) [label=\angle: $\nom$] at \pos {};

\foreach \pos/\name/\angle/\nom in {{(45:2cm)/w1/45/w_1}, {(-45:2cm)/w2/-45/w_2}, {(-135:2cm)/w3/-135/w_3}, {(135:2cm)/w4/135/w_4}, {(0:3cm)/x2/90/}, {(-90:3cm)/x3/0/}}
	\node[whitenode] (\name) [label=\angle: $\nom$] at \pos {};

\draw (180:2cm) node[tnode] (v4) [label=180:$v_4$] {\small{$5^-$}};

\foreach \source/ \dest /\weight in {u/v1/a, u/w1/b, u/v2/c, u/w2/d, u/v3/e, u/w3/f, u/v4/g, u/w4/h}
\draw (\source) edge node[font=\small,pos=0.6] {$\weight$} (\dest);

\foreach \source/ \dest /\weight in {w4/v1/i, v1/w1/j, w1/v2/k, x2/v2/l, v2/w2/m, w2/v3/n, x3/v3/o, v3/w3/p, w3/v4/q, v4/w4/r}
\draw (\source) edge node[font=\small,pos=0.5] {$\weight$} (\dest);

\draw (u) edge node[pos=0.2] {$u$} (v2);
\end{tikzpicture}
\label{fig:c5b}
}
\qquad
\subfloat[][]{
\begin{tikzpicture}[scale=1.25,auto]
\tikzstyle{whitenode}=[draw,circle,fill=white,minimum size=8pt,inner sep=0pt]
\tikzstyle{blacknode}=[draw,circle,fill=black,minimum size=6pt,inner sep=0pt]
\tikzstyle{tnode}=[draw,ellipse,fill=white,minimum size=8pt,inner sep=0pt]
\tikzstyle{texte} =[fill=white, text=black]
\foreach \pos/\name/\angle/\nom in {{(40:0cm)/u/5/}, {(90:2cm)/v1/90/v_1}, {(0:2cm)/v2/85/v_2}, {(180:2cm)/v4/95/v_3}}
	\node[blacknode] (\name) [label=\angle: $\nom$] at \pos {};

\foreach \pos/\name/\angle/\nom in {{(45:2cm)/w1/45/w_1}, {(-45:2cm)/w2/-45/w_2}, {(-135:2cm)/w3/-135/w_3}, {(135:2cm)/w4/135/w_4}, {(0:3cm)/x2/90/}, {(180:3cm)/x3/0/}}
	\node[whitenode] (\name) [label=\angle: $\nom$] at \pos {};

\draw (-90:2cm) node[tnode] (v3) [label=-90:$v_4$] {\small{$5^-$}};

\foreach \source/ \dest /\weight in {u/v1/a, u/w1/b, u/v2/c, u/w2/d, u/v3/e, u/w3/f, u/v4/g, u/w4/h}
\draw (\source) edge node[font=\small,pos=0.6] {$\weight$} (\dest);

\foreach \source/ \dest /\weight in {w4/v1/i, v1/w1/j, w1/v2/k, x2/v2/l, v2/w2/m, w2/v3/n, v4/x3/q, v3/w3/o, w3/v4/p, v4/w4/r}
\draw (\source) edge node[font=\small,pos=0.5] {$\weight$} (\dest);

\draw (u) edge node[pos=0.2] {$u$} (v2);
\end{tikzpicture}
\label{fig:c5a}
}

\caption{Notations of Claim~\ref{claim:C5}}
\label{fig:c5}
\end{figure}
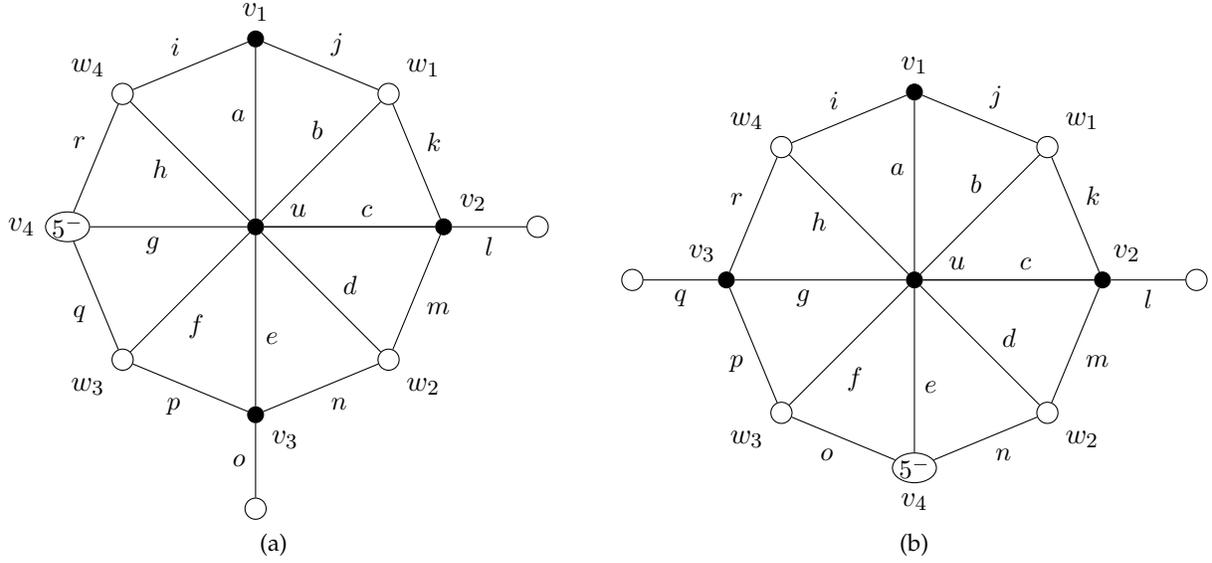
\begin{itemize}
\item \emph{$\hat{i} \neq \hat{j}$ or $\hat{i} \cap \hat{h} = \emptyset$}.\\If $\hat{i} \neq \hat{j}$, we color $i$ in a color that does not belong to $\hat{j}$. Otherwise $\hat{i} \cap \hat{h} = \emptyset$, we can delete $i$. In both cases, we can delete successively $j$ and $a$. 

We consider three cases depending on whether $\hat{q}\cap\hat{h}=\emptyset$ and $\hat{q}\subset \hat{r}$.
\begin{itemize}
\item $\hat{q}\cap\hat{h}\neq\emptyset$.\\Then let $\alpha \in \hat{q}\cap\hat{h}$. We color $q$ and $h$ in $\alpha$. We delete successively $g$, $c$, $e$, $r$, $p$, $k$, $m$, $l$, $b$, $d$, $n$, $p$ and $f$.
\item $\hat{q}\not\subset \hat{r}$.\\Then let $\alpha \in \hat{q} \setminus \hat{r}$. We color $q$ in $\alpha$. Since $|\hat{h}|+|\hat{p}|>|\hat{r}|$, there exists a color $\beta \in (\hat{h}\cap \hat{p})\cup((\hat{h}\cup \hat{p}) \setminus \hat{r})$. We color $h$ and $p$ in $\beta$ if possible, in an arbitrary color otherwise. We delete successively $r$, $g$, $c$, $e$, $k$, $l$, $m$, $b$, $d$, $f$, $n$ and $o$.
\item \emph{$\hat{q}\cap\hat{h}=\emptyset$ and $\hat{q}\subset \hat{r}$}.\\Then let $\alpha \in \hat{q}$. By assumption, $\alpha \in \hat{r} \setminus \hat{h}$. We color $r$ in $\alpha$, and color $q$. Since $|\hat{b}|+|\hat{l}|>|\hat{k}|$, there exists a color $\beta \in (\hat{b}\cap \hat{l})\cup((\hat{b}\cup \hat{l}) \setminus \hat{k})$. We color $b$ and $l$ in $\beta$ if possible, in an arbitrary color otherwise. We delete successively $k$, $c$, $g$, $e$, and $m$. We color $h$ in such a way that afterwards, $|\hat{f}|\geq 3$ or $\hat{f}\neq \hat{p}$. Then we apply Lemma~\ref{lem:2322} on $(p,f,d,n,o)$.
\end{itemize}
\item \emph{$\hat{i}=\hat{j}$ and $\hat{i} \cap \hat{h} \neq \emptyset$}.\\Since $\hat{i}=\hat{j}$, there exists $\alpha \in \hat{j} \cap \hat{h}$, we color $j$ and $h$ in $\alpha$, and delete $i$ and $a$. When we say that we color $q|r$ in a color $\alpha$, it means that we color $q$ in $\alpha$ if possible, otherwise we color $r$ in $\alpha$.

Let $C=\hat{c}$ and $G=\hat{g}$. If $\hat{q}\cup\hat{r} \not\subset \hat{g}$, we consider $\alpha \in (\hat{q}\cup\hat{r}) \setminus \hat{g}$, and color $q|r$ in $\alpha$. Assume that $\hat{q}\cup\hat{r}\subset \hat{g}$. Note that $|((\hat{q}\cup\hat{r})\cap\hat{c})\cup(\hat{c}\setminus\hat{g})|\geq |\hat{q}\cup\hat{r}|\geq 3$, and that $|\hat{l}|=2$. We consider $\alpha \in (((\hat{q}\cup\hat{r})\cap\hat{c})\cup(\hat{c}\setminus\hat{g}))\setminus \hat{l}$. We color $q|r$ in $\alpha$ if possible, in an arbitrary color otherwise. 

Note that since $q$ and $r$ have the same incidencies in the resulting graph, and since $|\hat{r}|\geq|\hat{q}|-1$, the identity of the edge that is colored has no impact, and we can consider w.l.o.g. that $q$ is colored and $r$ remains uncolored. We remove color $\alpha$ from $\hat{k}$ and $\hat{m}$. We consider w.l.o.g. the worst case, i.e. $|\hat{k}|=|\hat{l}|=|\hat{n}|=|\hat{o}|=|\hat{r}|=2$, $|\hat{b}|=|\hat{d}|=|\hat{f}|=|\hat{m}|=|\hat{p}|=3$, $|\hat{e}|=6$, $|\hat{g}|=7$ and $|\hat{c}|=8$.

We consider two cases depending on whether $\hat{k}=\hat{l}$.
\begin{itemize}
\item $\hat{k}=\hat{l}$. Then we color $m$ in a color that does not belong to $\hat{l}$. We color successively $n$, $o$, $d$, $f$, $b$, $k$ and $l$.
\item $\hat{k}\neq\hat{l}$. Then we color $l$ in a color that does not belong to $\hat{k}$. If $\hat{m}=\hat{n}$, then we color $d$ in a color that does not belong to $\hat{m}$, and apply Lemma~\ref{lem:evencycle} on $(b,k,m,n,o,f)$. If $\hat{m}\neq \hat{n}$, then we color $m$ in a color that does not belong to $\hat{n}$, we color $k$ and we apply Lemma~\ref{lem:2322} on $(b,f,o,n,d)$.
\end{itemize}

We then color $p$, $q$ and $e$. We claim that $\hat{c} \neq \hat{g}$ if $|\hat{c}|=|\hat{g}|=1$. Indeed, assume $|\hat{c}|=|\hat{g}|=1$. Then, all the edges incident to $g$ are colored differently, and their colors belong to $G$. We consider two cases depending on whether $q$ is colored in $\alpha$.
\begin{itemize}
\item \emph{Edge $q$ is colored in $\alpha$}. Then $\alpha \in G$, which implies $\alpha \in C$ by choice of $\alpha$. Since the edges incident to $g$ are all colored differently and $q$ is colored in $\alpha$, none of $\{b,d,e,f,h\}$ is colored in $\alpha$. By construction, none of $\{k,m\}$ is colored in $\alpha$. By choice of $\alpha$, $l$ is not colored in $\alpha$. Thus $\alpha \in \hat{c}$ and $\alpha \not\in \hat{g}$, so $\hat{c}\neq\hat{g}$. 
\item \emph{Edge $q$ is not colored in $\alpha$}. Then, by choice of $\alpha$, we have $\alpha \in C \setminus G$. Since the colors of the edges incident to $g$ all belong to $G$, none of $\{b,d,e,f,h\}$ is colored in $\alpha$. By construction, none of $\{k,m\}$ is colored in $\alpha$. By choice of $\alpha$, $l$ is not colored in $\alpha$. Thus $\alpha \in \hat{c}$ and $\alpha \not\in \hat{g}$, so $\hat{c}\neq\hat{g}$.
\end{itemize}

Note that $|\hat{c}|\geq 1$ and $|\hat{g}|\geq 1$. If $|\hat{c}|=|\hat{g}|=1$, then $\hat{c} \neq \hat{g}$, so we color $c$ and $g$ independently. If not, assume w.l.o.g. that $|\hat{c}|\geq 2$, and color successively $g$ and $c$. 
\end{itemize}
\end{itemize}
\end{proof}

\begin{claim}\label{claim:C6}
$G$ cannot contain \textbf{($C_6$)}.
\end{claim}
\begin{proof}
We prove Claim~\ref{claim:C6} similarly as Claim~\ref{claim:C3}.
By Claim~\ref{claim:C1}, for $i \in \{1,2,3,4,5\}$, vertex $v_i$ is adjacent only to vertices of degree at least $11-d(v_i)$. We name the edges according to Figure~\ref{fig:c6}.\begin{figure}[!h]
\centering
\begin{tikzpicture}[scale=1.25,auto]
\tikzstyle{whitenode}=[draw,circle,fill=white,minimum size=8pt,inner sep=0pt]
\tikzstyle{blacknode}=[draw,circle,fill=black,minimum size=6pt,inner sep=0pt]
\tikzstyle{tnode}=[draw,ellipse,fill=white,minimum size=8pt,inner sep=0pt]
\tikzstyle{texte} =[fill=white, text=black]
\foreach \pos/\name/\angle/\nom in {{(40:0cm)/u/5/}, {(90:2cm)/v1/90/v_1}}
	\node[blacknode] (\name) [label=\angle: $\nom$] at \pos {};

\foreach \pos/\name in {{(45:2cm)/w1}, {(135:2cm)/w4}, {(180:2cm)/w/}}
	\node[whitenode] (\name) at \pos {};

\draw (0:2cm) node[tnode] (v2) [label=-90:$v_2$] {\small{$4^-$}};
\draw (-45:2cm) node[tnode] (v3) [label=-90:$v_3$] {\small{$5^-$}};
\draw (-90:2cm) node[tnode] (v4) [label=-90:$v_4$] {\small{$5^-$}};
\draw (-135:2cm) node[tnode] (v5) [label=-90:$v_5$] {\small{$7^-$}};

\foreach \source/ \dest /\weight in {u/v1/c, u/w1/f, u/v2/g_1, u/v3/g_2, u/v4/g_3, u/w4/e, u/v5/g_4, u/w/}
\draw (\source) edge node[font=\small,pos=0.6] {$\weight$} (\dest);

\foreach \source/ \dest /\weight in {w4/v1/a, v1/w1/b}
\draw (\source) edge node[font=\small,pos=0.5] {$\weight$} (\dest);

\draw (u) edge node[pos=0.2] {$u$} (v2);
\end{tikzpicture}
\caption{Notations of Claim~\ref{claim:C6}}
\label{fig:c6}
\end{figure}
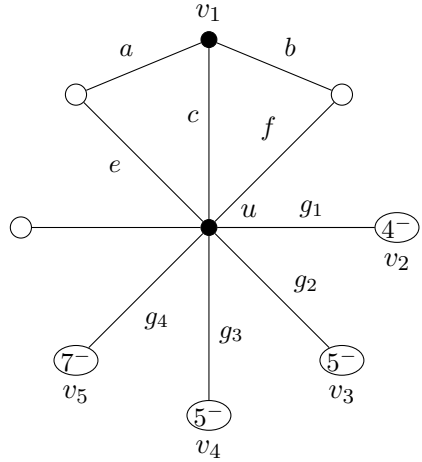 By minimality of $G$, we color $G \setminus \{v_1\}$. Since there are $9$ colors and every vertex is of degree at most $8$, we have $|\hat{a}|,|\hat{b}|, |\hat{c}| \geq 2$. We proceed as for Claim~\ref{claim:C3} and prove the following.
\begin{claimb}\label{clb:rec6}
If $\hat{a}=\hat{b}$ and $|\hat{a}|=|\hat{b}|=2$. Then we can recolor $G \setminus \{v_1\}$ so that the hypothesis is not verified anymore.
\end{claimb}

\begin{proofclaim}
Let $\hat{a}=\hat{b}=\{\alpha,\beta\}$. Let $\gamma$ be the color of $e$ and $\delta$ the color of $f$. Note that $\gamma \in L(a)$ and $\delta \in L(b)$ since $|\hat{a}|=|\hat{b}|=2$. Note also that $\alpha, \beta, \gamma$ and $\delta$ are all different.

We claim that any recoloring of $\{e,f,g_1,g_2,g_3,g_4\}$ such that the color of at least one of ${e,f}$ has been changed breaks the hypothesis of (\ref{clb:rec6}). Indeed, assume w.l.o.g. that the color of $e$ can be changed while recoloring only edges of $\{e,f,g_1,g_2,g_3,g_4\}$, and consider such a coloring. We have $\gamma \in \hat{a}$ since $\gamma \in L(a)$ and the only edge of $\{e,f,g_1,g_2,g_3,g_4\}$ that is incident to $a$ is $e$, which is not colored in $\gamma$ anymore. We have $\gamma \not\in \hat{b}$ since $\gamma \not\in \{\alpha,\beta,\delta\}$ and the only edge of $\{e,f,g_1,g_2,g_3,g_4\}$ that is incident to $b$ is $f$, which was colored in $\delta$. Thus $\hat{a}\neq \hat{b}$.

We prove that there exists such a recoloring. Aside from the constraints derived from $\{e,f,$ $g_1,g_2,$ $g_3,g_4\}$, both $e$ and $f$ have at most $7$ constraints, edge $g_1$ has at most $4$ constraints, edges $g_2$ and $g_3$ have at most $5$ constraints, and $g_4$ has at most $7$ constraints. Let $L'$ be the list assignment of the colors available for those edges, when ignoring the constraints derived from $\{e,f,g_1,g_2,g_3,g_4\}$. Note that $|L'(e)|, |L'(f)|, |L'(g_4)| \geq 2$, $|L'(g_2)|, |L'(g_3)| \geq 4$ and $|L'(g_1)|\geq 5$. Let us build the directed graph $D$ whose vertex set is $V(D)=\{e,f,g_1,g_2,g_3,g_4\}$ and where there is an edge from $u$ to $v$ if the color of $u$ belongs to $L'(v)$. Let $D_1$ be the graph obtained from $D$ by removing any vertex $v$ such that there is no directed path from $v$ to $e$. Let $D_2$ be the graph obtained from $D$ by removing any vertex $v$ such that there is no directed path from $v$ to $f$. If $e \in D_2$ and $f \in D_1$, then there is a directed path from $e$ to $f$ and a directed path from $f$ to $e$. So there exists a directed cycle that contains $e$, which we recolor accordingly. So we can assume that $e \not\in D_2$ or $f \not\in D_1$. We consider w.l.o.g. the case $f \not\in D_1$.
We consider four cases depending on the structure of $D_1$.
\begin{itemize}
\item $V(D_1)=\{e\}$. Then we recolor $e$ without conflict.
\item \emph{$|V(D_1)|\geq 2$, and some vertex $v \neq e$ has in-degree at most $L'(v)-2$}. Then we recolor $v$, and recolor accordingly the path from $v$ to $e$.
\item \emph{$|V(D_1)|\geq 2$, and there is an edge from $e$ to a vertex $v$}. Then by definition of $D_1$, there is a directed cycle that contains $e$, which we recolor accordingly.
\item \emph{$|V(D_1)|\geq 2$, every vertex $v \neq e$ has in-degree at least $L'(v)-1$, and $e$ has out-degree $0$}. Since $f \not\in D_1$, we have $\{g_1,g_2,g_3,g_4\}\cap D_1 \neq \emptyset$. Let $j$ be the minimum $i$ such that $g_i \in D_1$. Vertex $g_j$ has in-degree at least $L'(g_j)-1 \geq |V(D) \setminus \{f,g_1,\ldots,g_{j}\}|$, and there is no edge from $e$ to $g_j$, a contradiction. 
\end{itemize}
\end{proofclaim}

By (\ref{clb:rec6}), we can assume that we have a coloring of $G \setminus\{v_1\}$ that does not verify the hypothesis of (\ref{clb:rec6}). We apply Lemma~\ref{lem:3vertex} to the edges incident to $v_1$.
\end{proof}

\begin{claim}\label{claim:C7}
$G$ cannot contain \textbf{($C_7$)}.
\end{claim}
\begin{proof}
By Claim~\ref{claim:C1}, no two $v_i$ are adjacent, nor is $v_1$ adjacent to a vertex of degree at most $7$. Since every $v_i$ is a weak neighbor of $u$, and $d(u)=8$, the neighborhood of $u$ forms a cycle (see Figure~\ref{fig:c7}). We consider two cases depending on whether there is a vertex $x$ such that $v_2, x$ and $v_3$ appear consecutively around $u$.
\begin{itemize}
\item \emph{There is a vertex $x$ such that $v_3, x$ and $v_4$ appear consecutively around $u$}.\\
W.l.o.g. the neighbors of $u$ are, clockwise, $v_1$, $x_1$, $v_2$, $x_2$, $v_3$, $x_3$, $v_4$, $x_4$. Since $v_2$ is an $E_2$-neighbor of $u$ and $d(x_1)=8$, we have $d(x_2)=6$ and there is a vertex $y$ of degree $6$ such that $(x_2,v_2,y)$ is a face. We name the edges according to Figure~\ref{fig:c7a}. By minimality, we color $G\setminus\{a,\ldots,s\}$. Without loss of generality, we consider the worst case, i.e. $|\hat{n}|=|\hat{o}|=|\hat{p}|=|\hat{q}|=2$, $|\hat{s}|=3$, $|\hat{b}|=|\hat{f}|=|\hat{h}|=|\hat{i}|=|\hat{j}|=|\hat{k}|=4$, $|\hat{m}|=|\hat{r}|=5$, $|\hat{d}|=|\hat{e}|=|\hat{g}|=|\hat{l}|=7$, and $|\hat{a}|=|\hat{c}|=9$. Note that the edges $k$ and $s$ are not incident. Since $|\hat{k}|+|\hat{s}|>|\hat{r}|$, there exists $\alpha \in (\hat{k} \cap \hat{s})\cup((\hat{k} \cup \hat{s})\setminus \hat{r})$. We color $k$ and $s$ in $\alpha$ if possible, in an arbitrary color otherwise. We can delete successively $r$, $l$ and $m$. We color $q$. Note that the edges $p$ and $j$ are not incident. Since $|\hat{p}|+|\hat{j}|>|\hat{i}|$, there exists $\beta \in (\hat{p} \cap \hat{j})\cup((\hat{p} \cup \hat{j})\setminus \hat{i})$. Thus we color $p$ and $j$ in $\beta$ if possible, in an arbitrary color otherwise. We delete successively $i$, $a$, $c$, $e$, $g$, $d$, $h$, $b$, $f$, $n$ and $o$.

\item \emph{There is no vertex $x$ such that $v_2, x$ and $v_3$ appear consecutively around $u$}.\\
W.l.o.g. the neighbors of $u$ are, clockwise, $v_1$, $x_1$, $v_3$, $x_2$, $v_2$, $x_3$, $v_4$, $x_4$, with $d(x_2) \geq d(x_3)$. We consider two cases depending on whether $d(x_2)=6$.
\begin{itemize}
\item $d(x_2)=6$.\\W.l.o.g., since $v_2$ is an $E_2$-vertex, there is a vertex $y$ of degree $6$ or $7$ such that $(y,v_2,x_3)$ is a face. We name the edges according to Figure~\ref{fig:c7b}. By minimality, we color $G\setminus\{a,\ldots,s\}$. W.l.o.g., we consider the worst case, i.e. $|\hat{k}|=|\hat{p}|=|\hat{q}|=|\hat{s}|=2$, $|\hat{b}|=|\hat{h}|=|\hat{i}|=|\hat{j}|=|\hat{l}|=|\hat{r}|=4$, $|\hat{o}|=5$, $|\hat{d}|=|\hat{m}|=6$, $|\hat{c}|=|\hat{f}|=|\hat{g}|=|\hat{n}|=7$, and $|\hat{a}|=|\hat{e}|=9$.

Note that the edges $r$ and $l$ cannot be incident. Since $|\hat{r}|+|\hat{l}|>|\hat{m}|$, there exists $\alpha \in (\hat{r} \cap \hat{l})\cup((\hat{r} \cup \hat{l})\setminus \hat{m})$. We color $r$ and $l$ in $\alpha$ if possible, in an arbitrary color otherwise. We can delete successively $m$, $n$ and $o$. We color $q$, $s$ and $k$ successively. Note that $p$ and $j$ cannot be incident. Since $|\hat{p}|+|\hat{j}|>|\hat{i}|$, there exists $\beta \in (\hat{p} \cap \hat{j})\cup((\hat{p} \cup \hat{j})\setminus \hat{i})$. Thus we color $p$ and $j$ in $\beta$ if possible, in an arbitrary color otherwise. We delete successively $i$, $a$, $e$, $g$, $f$, $c$, $d$, $h$ and $b$.

\item $d(x_2)\geq 7$.\\Since $v_2$ is an $E_2$-vertex, we have $d(x_3)=6$ and there is a vertex $y$ of degree $6$ such that $(y,v_2,x_3)$ is a triangle. We name the edges according to Figure~\ref{fig:c7c}. By minimality, we color $G\setminus\{a,\ldots,s\}$. Without loss of generality, we consider the worst case, i.e. $|\hat{k}|=|\hat{l}|=|\hat{p}|=|\hat{q}|=2$, $|\hat{s}|=3$, $|\hat{b}|=|\hat{d}|=|\hat{h}|=|\hat{i}|=|\hat{j}|=|\hat{m}|=4$, $|\hat{o}|=|\hat{r}|=5$, $|\hat{c}|=|\hat{f}|=|\hat{g}|=|\hat{n}|=7$, and $|\hat{a}|=|\hat{e}|=9$.
\begin{figure}[!h]
\centering
\subfloat[][]{
\begin{tikzpicture}[scale=1.25,auto]
\tikzstyle{whitenode}=[draw,circle,fill=white,minimum size=8pt,inner sep=0pt]
\tikzstyle{blacknode}=[draw,circle,fill=black,minimum size=6pt,inner sep=0pt]
\tikzstyle{tnode}=[draw,ellipse,fill=white,minimum size=8pt,inner sep=0pt]
\tikzstyle{texte} =[fill=white, text=black]
\foreach \pos/\name/\angle/\nom in {{(40:0cm)/u/5/}, {(90:2cm)/v1/90/v_1}, {(0:2cm)/v2/0/v_2}}
	\node[blacknode] (\name) [label=\angle: $\nom$] at \pos {};

\foreach \pos/\name/\angle/\nom in {{(45:2cm)/x1/45/x_1}, {(135:2cm)/x4/135/x_4}, {(-135:2cm)/x3/-135/x_3}}
	\node[whitenode] (\name) [label=\angle: $\nom$] at \pos {};

\draw (v2)
 ++(45:1.414cm) node[whitenode] (z) {};
 \draw (v2)
 ++(-45:1.414cm) node[whitenode] (y) [label=-45:$y$] {\small{$6$}};

\draw (-45:2cm) node[tnode] (x2) [label=-45:$x_2$] {\small{$6$}};
\draw (-90:2cm) node[tnode] (v3) [label=-90:$v_3$] {\small{$5$}};
\draw (180:2cm) node[tnode] (v4) [label=180:$v_4$] {\small{$5^-$}};

\foreach \source/ \dest /\weight in {u/v1/a, u/x1/b, u/v2/c, u/x2/d, u/v3/e, u/x3/f, u/v4/g, u/x4/h, v2/z/q, v2/y/r}
\draw (\source) edge node[font=\small,pos=0.6] {$\weight$} (\dest);

\foreach \source/ \dest /\weight in {x4/v1/i, v1/x1/j, x1/v2/k, v2/x2/l, x2/v3/m, v3/x3/n, x3/v4/o, v4/x4/p, y/x2/s}
\draw (\source) edge node[font=\small,pos=0.5] {$\weight$} (\dest);

\draw (u) edge node[pos=0.2] {$u$} (v2);
\end{tikzpicture}
\label{fig:c7a}
}
\qquad
\subfloat[][]{
\begin{tikzpicture}[scale=1.25,auto]
\tikzstyle{whitenode}=[draw,circle,fill=white,minimum size=8pt,inner sep=0pt]
\tikzstyle{blacknode}=[draw,circle,fill=black,minimum size=6pt,inner sep=0pt]
\tikzstyle{tnode}=[draw,ellipse,fill=white,minimum size=8pt,inner sep=0pt]
\tikzstyle{texte} =[fill=white, text=black]
\foreach \pos/\name/\angle/\nom in {{(40:0cm)/u/5/}, {(90:2cm)/v1/90/v_1}, {(-90:2cm)/v3/-90/v_2}}
	\node[blacknode] (\name) [label=\angle: $\nom$] at \pos {};

\foreach \pos/\name/\angle/\nom in {{(45:2cm)/x1/45/x_1}, {(135:2cm)/x4/135/x_4}}
	\node[whitenode] (\name) [label=\angle: $\nom$] at \pos {};

\draw (v3)
 ++(-45:1.414cm) node[whitenode] (z) {};
 \draw (v3)
 ++(-135:1.414cm) node[tnode] (y) [label=-135:$y$] {\small{$7^-$}};

\draw (-135:2cm) node[tnode] (x3) [label=-135:$x_3$] {\small{$6$}};
\draw (-45:2cm) node[tnode] (x2) [label=-45:$x_2$] {\small{$6$}};
\draw (0:2cm) node[tnode] (v2) [label=0:$v_3$] {\small{$5$}};
\draw (180:2cm) node[tnode] (v4) [label=180:$v_4$] {\small{$5$}};

\foreach \source/ \dest /\weight in {u/v1/a, u/x1/b, u/v2/c, u/x2/d, u/v3/e, u/x3/f, u/v4/g, u/x4/h, v3/z/q, v3/y/r}
\draw (\source) edge node[font=\small,pos=0.6] {$\weight$} (\dest);

\foreach \source/ \dest /\weight in {x4/v1/i, v1/x1/j, x1/v2/k, v2/x2/l, x2/v3/m, v3/x3/n, x3/v4/o, v4/x4/p, y/x3/s}
\draw (\source) edge node[font=\small,pos=0.5] {$\weight$} (\dest);

\draw (u) edge node[pos=0.2] {$u$} (v2);
\end{tikzpicture}
\label{fig:c7b}
}
\qquad
\subfloat[][]{
\begin{tikzpicture}[scale=1.25,auto]
\tikzstyle{whitenode}=[draw,circle,fill=white,minimum size=8pt,inner sep=0pt]
\tikzstyle{blacknode}=[draw,circle,fill=black,minimum size=6pt,inner sep=0pt]
\tikzstyle{tnode}=[draw,ellipse,fill=white,minimum size=8pt,inner sep=0pt]
\tikzstyle{texte} =[fill=white, text=black]
\foreach \pos/\name/\angle/\nom in {{(40:0cm)/u/5/}, {(90:2cm)/v1/90/v_1}, {(-90:2cm)/v3/-90/v_2}}
	\node[blacknode] (\name) [label=\angle: $\nom$] at \pos {};

\foreach \pos/\name/\angle/\nom in {{(45:2cm)/x1/45/x_1}, {(135:2cm)/x4/135/x_4}}
	\node[whitenode] (\name) [label=\angle: $\nom$] at \pos {};

\draw (v3)
 ++(-45:1.414cm) node[whitenode] (z) {};
 \draw (v3)
 ++(-135:1.414cm) node[whitenode] (y) [label=-135:$y$] {\small{$6$}};

\draw (-135:2cm) node[tnode] (x3) [label=-135:$x_3$] {\small{$6$}};
\draw (-45:2cm) node[tnode] (x2) [label=-45:$x_2$] {};
\draw (0:2cm) node[tnode] (v2) [label=0:$v_3$] {\small{$5$}};
\draw (180:2cm) node[tnode] (v4) [label=180:$v_4$] {\small{$5$}};

\foreach \source/ \dest /\weight in {u/v1/a, u/x1/b, u/v2/c, u/x2/d, u/v3/e, u/x3/f, u/v4/g, u/x4/h, v3/z/q, v3/y/r}
\draw (\source) edge node[font=\small,pos=0.6] {$\weight$} (\dest);

\foreach \source/ \dest /\weight in {x4/v1/i, v1/x1/j, x1/v2/k, v2/x2/l, x2/v3/m, v3/x3/n, x3/v4/o, v4/x4/p, y/x3/s}
\draw (\source) edge node[font=\small,pos=0.5] {$\weight$} (\dest);

\draw (u) edge node[pos=0.2] {$u$} (v2);
\end{tikzpicture}
\label{fig:c7c}
}
\caption{Notations of Claim~\ref{claim:C7}}
\label{fig:c7}
\end{figure}
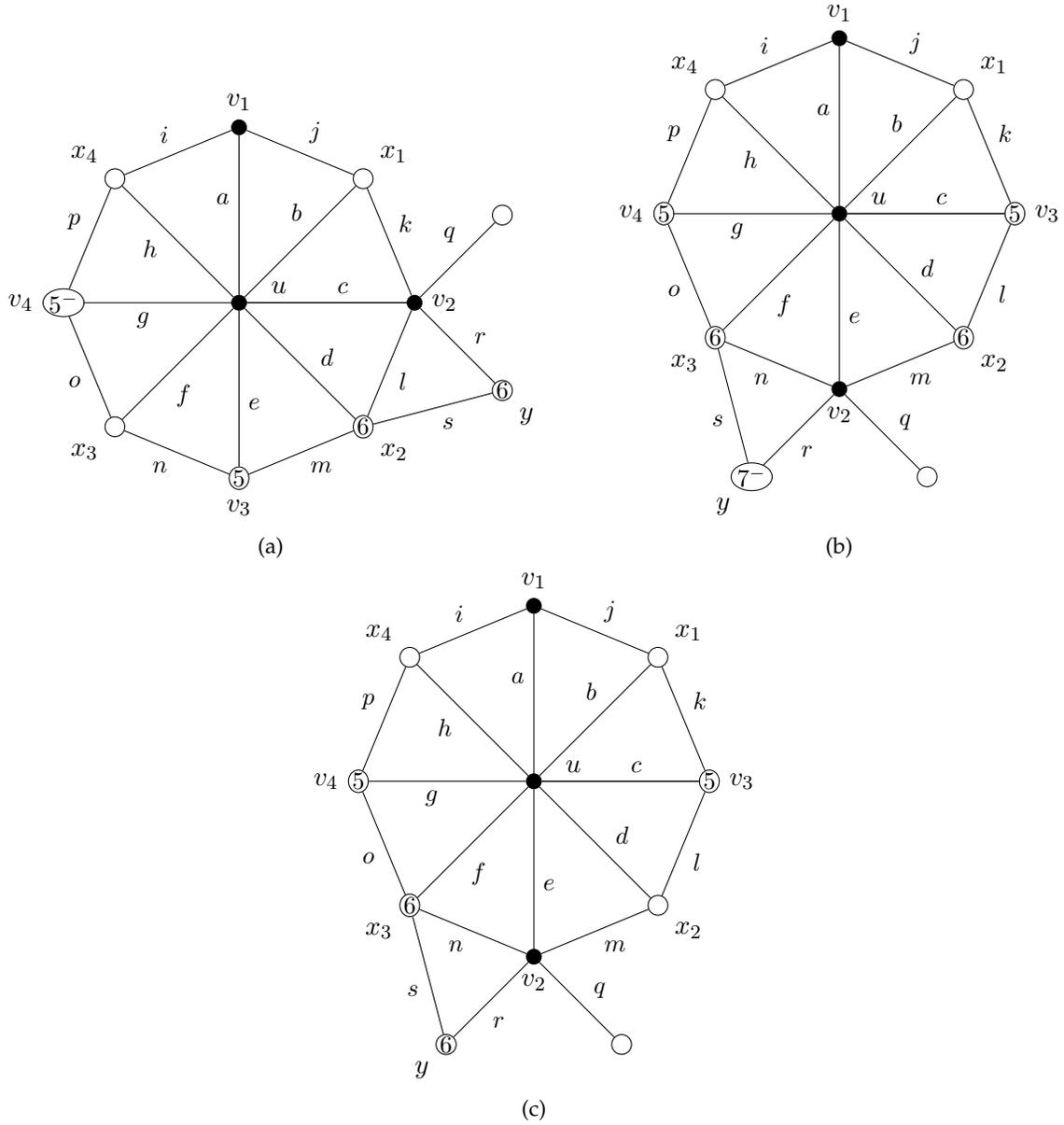
Note that since $G$ is a simple graph, the edges $s$ and $m$ cannot be incident. Since $|\hat{s}|+|\hat{m}|>|\hat{r}|$, there exists $\alpha \in (\hat{s} \cap \hat{m})\cup((\hat{s} \cup \hat{m})\setminus \hat{r})$. We color $m$ and $s$ in $\alpha$ if possible, in an arbitrary color otherwise. We can delete successively $r$, $n$ and $o$. We color successively $q$, $l$ and $k$. Note that $p$ and $j$ cannot be incident. Since $|\hat{p}|+|\hat{j}|>|\hat{i}|$, there exists $\beta \in (\hat{p} \cap \hat{j})\cup((\hat{p} \cup \hat{j})\setminus \hat{i})$. Thus we color $p$ and $j$ in $\beta$ if possible, in an arbitrary color otherwise. We delete successively $i$, $a$, $e$, $f$, $g$, $c$, $h$, $b$ and $d$.
\end{itemize}
\end{itemize}
\end{proof}

\begin{claim}\label{claim:C8}
$G$ cannot contain \textbf{($C_8$)}.
\end{claim}
\begin{proof}
Not that since $G$ is simple and $d(y) \neq d(v)$, all the vertices named here are distinct. We name the edges according to Figure~\ref{fig:c8}. By minimality, we color $G\setminus\{a,\ldots,f\}$. Without loss of generality, we consider the worst case, i.e. $|\hat{a}|=|\hat{d}|=|\hat{f}|=2$, $|\hat{c}|=|\hat{e}|=3$ and $|\hat{b}|=4$.\\
\begin{figure}[!h]
\centering
\begin{tikzpicture}[scale=2,auto,rotate=90]
\tikzstyle{whitenode}=[draw,circle,fill=white,minimum size=8pt,inner sep=0pt]
\tikzstyle{blacknode}=[draw,circle,fill=black,minimum size=6pt,inner sep=0pt]
\tikzstyle{tnode}=[draw,ellipse,fill=white,minimum size=8pt,inner sep=0pt]
\tikzstyle{texte} =[fill=white, text=black]

\draw (0,0) node[tnode] (u) [label=-90:$u$] {\small{$7$}}
 ++(45:1cm) node[tnode] (v) [label=left:$v$] {\small{$5$}};
\draw (u)
 ++(0:1.414cm) node[tnode] (w) [label=90:$w$] {\small{$6$}};
\draw (u)
 ++(-45:1cm) node[tnode] (x) [label=left:$x$] {\small{$5$}};

\draw (x)
 ++(-90:1cm) node[tnode] (y) [label=right:$y$] {\small{$6$}};

\foreach \source/ \dest /\weight in {x/y/a, w/x/b, v/w/c, u/v/d, x/u/e, u/w/f}
\draw (\source) edge node[font=\small,pos=0.5] {$\weight$} (\dest);
\end{tikzpicture}
\caption{Notations of Claim~\ref{claim:C8}}
\label{fig:c8}
\end{figure}
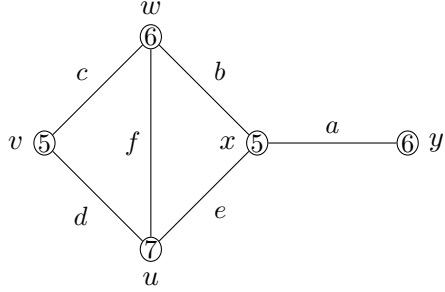

We consider two cases depending on whether $\hat{f}=\hat{d}$.
\begin{itemize}
\item $\hat{f} \neq \hat{d}$. We color $f$ in a color that does not belong to $\hat{d}$. We apply Lemma~\ref{lem:2322} on $(a,b,c,d,e)$.
\item $\hat{f}=\hat{d}$. We color $e$ and $c$ in a color that does not belong to $\hat{f}$. We color successively $a$, $b$, $f$ and $d$.
\end{itemize}
\end{proof}

\begin{claim}\label{claim:C9}
$G$ cannot contain \textbf{($C_9$)}.
\end{claim}
\begin{proof}
Note that by Claim~\ref{claim:C1}, $\{v_1,v_2,v_3\}$ forms a stable set. We consider two cases depending on whether there are two weak neighbors $w_1$ and $w_2$ of $u$ with $d(w_1)=d(w_2)=4$ and a vertex $x$, such that $(w_1,x,u)$ and $(w_2,x,u)$ are faces. 
\begin{itemize}
\item \emph{There are two weak neighbors $w_1$ and $w_2$ of $u$ with $d(w_1)=d(w_2)=4$ and a vertex $x$, such that $(w_1,x,u)$ and $(w_2,x,u)$ are faces}.\\We assume w.l.o.g. that the neighborhood of $u$ is, clockwise, $y_1$, $v_1$, $x$, $v_2$, $y_2$, $v_3$ and $z$. We are in one of the following three cases: either $d(z)\leq 7$, or $d(y_2)\leq 7$, or $d(z)=d(y_2)=8$.
\begin{itemize}
\item $d(z)\leq 7$.\\We name the edges according to Figure~\ref{fig:c9b}. By minimality, we color $G\setminus\{a,\ldots,o\}$. Without loss of generality, we consider the worst case, i.e. $|\hat{i}|=|\hat{l}|=|\hat{n}|=|\hat{o}|=2$, $|\hat{a}|=|\hat{h}|=3$, $|\hat{c}|=|\hat{e}|=|\hat{g}|=|\hat{j}|=|\hat{k}|=|\hat{m}|=4$, $|\hat{f}|=7$ and $|\hat{b}|=|\hat{d}|=9$. We first prove the following.
\begin{claimb}\label{clb:rec9a}
We can color $a,c,e,f,g,h,i,j,k,l,m,n$ and $o$ in such a way that $\hat{b} \neq \hat{d}$ if $|\hat{b}|=|\hat{d}|=1$.
\end{claimb}
\begin{proofclaim}
Let $B=\hat{b}$ and $D=\hat{d}$. If $\hat{b}=\hat{d}$, then we consider $\alpha \in \hat{i}$, and color $i$ in $\alpha$. If $\hat{b}\neq\hat{d}$, then we consider $\alpha \in \hat{d} \setminus \hat{b}$, and color $i$ arbitrarily. 
We remove color $\alpha$ from $\hat{k}$, $\hat{l}$ and $\hat{m}$.
We color $l$. We consider two cases depending on whether $\hat{m}=\hat{n}$.
\begin{itemize}
\item $\hat{n}\neq \hat{m}$.\\Then we color $m$ in a color that does not belong to $\hat{n}$. We color $k$. Since $|\hat{h}|+|\hat{c}|>|\hat{a}|$, there exists $\beta \in (\hat{h} \cap \hat{c})\cup((\hat{h} \cup \hat{c})\setminus \hat{a})$. We color $h$ and $c$ in $\beta$ if possible. We color successively $j$, and $h$ or $c$ if not colored already. We apply Lemma~\ref{lem:2322} on $(a,g,o,n,e)$. We color $f$.
\item $\hat{n}=\hat{m}$.\\Since $|\hat{a}|+|\hat{o}|>|\hat{g}|$, there exists $\beta \in (\hat{a} \cap \hat{o})\cup ((\hat{a} \cup \hat{o}) \setminus \hat{g})$. If $\beta \in \hat{e} \setminus \hat{m}$ or $\beta \not\in \hat{a}$, we color $e$ and $o$ in $\beta$ if possible, in an arbitrary color otherwise ($\not\in \hat{m}$ in the case of $e$). We color $n, m$ and $k$. We delete $g$, and we apply Lemma~\ref{lem:evencycle} on $(h,j,c,a)$. If $\beta \not\in \hat{e} \setminus \hat{m}$ and $\beta \in \hat{a}$, we color $a$ and $o$ in $\beta$ if possible, in an arbitrary color otherwise. Note that $a$ is colored in $\beta$, and that $\beta$ does not belong to $\hat{e}$ or belongs to $\hat{m}$, in which case one of $\{m,n\}$ will be colored in $\beta$. We color successively $n, m, k, h, j, c$, and $e$. We color $g$, and $f$.
\end{itemize}
Assume $|\hat{b}|=|\hat{d}|=1$. Then the colors of the edges incident to $b$ are all different and belong to $B$.
We consider two cases depending on whether $B=D$.
\begin{itemize}
\item \emph{$B=D$}. Since $i$ is colored in $\alpha$, no edge in $\{a,c,e,g\}$ is colored in $\alpha$, and $\alpha \in D$. By construction, none of $\{k,l,m\}$ is colored in $\alpha$. Thus $\alpha \in \hat{d}$ and $\alpha \not\in \hat{b}$, so $\hat{b}\neq\hat{d}$.
\item \emph{$B \neq D$}. Since $\alpha \not\in B$, no edge in $\{a,c,e,g\}$ is colored in $\alpha$. By construction, none of $\{k,l,m\}$ is colored in $\alpha$. Thus $\alpha \in \hat{d}$ and $\alpha \not\in \hat{b}$, so $\hat{b}\neq\hat{d}$.
\end{itemize}  
\end{proofclaim}

By (\ref{clb:rec9a}), we color $a,c,e,f,g,h,i,j,k,l,m,n$ and $o$ in such a way that $\hat{b} \neq \hat{d}$ if $|\hat{b}|=|\hat{d}|=1$. We color $b$ and $d$.
 
\item $d(y_2)\leq 7$.\\We name the edges according to Figure~\ref{fig:c9a}. By minimality, we color $G\setminus\{a,\ldots,m\}$. Without loss of generality, we consider the worst case, i.e. $|\hat{g}|=|\hat{i}|=|\hat{l}|=2$, $|\hat{a}|=|\hat{h}|=3$, $|\hat{c}|=|\hat{e}|=|\hat{j}|=|\hat{k}|=|\hat{m}|=4$, $|\hat{f}|=5$ and $|\hat{b}|=|\hat{d}|=9$. We first prove the following.
\begin{claimb}\label{clb:rec9b}
We can color $a,c,e,f,g,h,i,j,k,l$ and $m$ in such a way that, afterwards, $\hat{b} \neq \hat{d}$ if $|\hat{b}|=|\hat{d}|=1$.
\end{claimb}
\begin{proofclaim}
If $\hat{b}=\hat{d}$, then we consider $\alpha \in \hat{l}$, and color $l$ in $\alpha$. If $\hat{b}\neq\hat{d}$, then we consider $\alpha \in \hat{b} \setminus \hat{d}$, and color $l$ arbitrarily. We remove color $\alpha$ from $\hat{h}$, $\hat{i}$ and $\hat{j}$. We color successively $i, h, j, a, g, c, k, e, m$ and $f$. By the same analysis as in the previous case, $\hat{b} \neq \hat{d}$ if $|\hat{b}|=|\hat{d}|=1$.
\end{proofclaim}

By (\ref{clb:rec9b}), we color $a,c,e,f,g,h,i,j,k,l$ and $m$ in such a way that $\hat{b} \neq \hat{d}$ if $|\hat{b}|=|\hat{d}|=1$. We color $b$ and $d$.

\item $d(z)=d(y_2)=8$.\\Then either $v_3$ is a weak neighbor of $u$ of degree $4$, or $v_3$ is a weak neighbor of $u$ of degree $5$ adjacent to a vertex of degree $6$. We will deal with the two cases at once. We consider that $v_3$ is of degree $5$ in both cases, by adding a neighbor of degree $6$ to $v_3$ if it is of degree $4$: a proper coloring of this graph will yield a proper coloring of the initial graph. We name the edges according to Figure~\ref{fig:c9a}.
 
By minimality, we color $G\setminus\{a,\ldots,q\}$. Without loss of generality, we consider the worst case, i.e. $|\hat{i}|=|\hat{l}|=|\hat{p}|=2$, $|\hat{a}|=|\hat{g}|=|\hat{h}|=|\hat{o}|=3$, $|\hat{c}|=|\hat{e}|=|\hat{j}|=|\hat{k}|=|\hat{m}|=|\hat{n}|=|\hat{q}|=4$ and $|\hat{b}|=|\hat{d}|=|\hat{f}|=9$. We first prove the following.
\begin{claimb}\label{clb:rec9c}
We can color $a,c,e,f,g,h,i,j,k,l,m,n,o,p$ and $q$ in such a way that, afterwards, $\hat{b} \neq \hat{d}$ if $|\hat{b}|=|\hat{d}|=1$.
\end{claimb}
\begin{proofclaim}
If $\hat{b}=\hat{d}$, then we consider $\alpha \in \hat{l}$, and color $i$ in $\alpha$. If $\hat{b}\neq\hat{d}$, then we consider $\alpha \in \hat{d} \setminus \hat{b}$, and color $i$ arbitrarily. We remove color $\alpha$ from $\hat{k}$, $\hat{l}$ and $\hat{m}$.

We color $l$. Since $|\hat{g}|+|\hat{p}|>|\hat{o}|$, there exists $\beta \in (\hat{g} \cap \hat{p})\cup ((\hat{g} \cup \hat{p})\setminus \hat{o})$. We color $g$ and $p$ in $\beta$ if possible, in an arbitrary color otherwise. We color $m$ so that $\hat{e}\neq\hat{a}$ if $|\hat{a}|=|\hat{e}|=2$, which is possible as $|\hat{m}|\geq 2$. We color $k$, and we apply Lemma~\ref{lem:2322} on $(e,a,h,j,c)$. We color $n, o, q$ and $f$.

By the same analysis as in the two previous cases, we have $\hat{b}\neq\hat{d}$ if $|\hat{b}|=|\hat{d}|=1$.
\end{proofclaim}

By (\ref{clb:rec9c}), we color $a,c,e,f,g,h,i,j,k,l,m,n,o,p$ and $q$ in such a way that $\hat{b} \neq \hat{d}$ if $|\hat{b}|=|\hat{d}|=1$. We color $b$ and $d$.
\end{itemize}

\item \emph{There are no two weak neighbors $w_1$ and $w_2$ of $u$ with $d(w_1)=d(w_2)=4$ for which there exists a vertex $x$ such that $(w_1,x,u)$ and $(w_2,x,u)$ are faces}.\\Then $v_3$ must be a vertex of degree $5$. W.l.o.g., the neighborhood of $u$ is, clockwise, $y_1$, $v_1$, $y_2$, $v_3$, $y_3$, $v_2$, $y_4$. We consider two cases depending on whether $d(y_2)=d(y_3)=8$.
\begin{itemize}
\item \emph{$d(y_2)\leq 7$ or $d(y_3) \leq 7$}.\\Consider w.l.o.g. that $d(y_2) \leq 7$. We name the edges according to Figure~\ref{fig:c9e}. By minimality, we color $G\setminus\{a,\ldots,o\}$. Without loss of generality, we consider the worst case, i.e. $|\hat{i}|=|\hat{l}|=|\hat{n}|=2$, $|\hat{a}|=|\hat{g}|=|\hat{h}|=|\hat{k}|=|\hat{o}|=3$, $|\hat{e}|=|\hat{m}|=4$, $|\hat{c}|=|\hat{j}|=5$, $|\hat{d}|=7$ and $|\hat{b}|=|\hat{f}|=9$. \\

Note that the edges $k$ and $h$ are not incident. Since $|\hat{k}|+|\hat{h}|>|\hat{j}|$, there exists $\alpha \in (\hat{k} \cap \hat{h})\cup((\hat{k} \cup \hat{h})\setminus \hat{j})$. We color $k$ and $h$ in $\alpha$ if possible, in an arbitrary color otherwise. We can delete successively $j$, $b$, $i$, $f$, $d$ and $c$. We color $a$, $l$ and $n$. We apply Lemma~\ref{lem:evencycle} on $(e,m,o,g)$.

\item \emph{$d(y_2)=d(y_3)=8$}.\\Then $v_3$ must be a weak neighbor of degree $5$ whose two other neighbors are of degree $6$ and $7$, respectively. We name the edges according to Figure~\ref{fig:c9f}. By minimality, we color $G\setminus\{a,\ldots,q\}$. Without loss of generality, we consider the worst case, i.e. $|\hat{i}|=|\hat{n}|=2$, $|\hat{a}|=|\hat{g}|=|\hat{h}|=|\hat{o}|=|\hat{q}|=3$, $|\hat{c}|=|\hat{e}|=|\hat{j}|=|\hat{k}|=|\hat{l}|=|\hat{m}|=|\hat{p}|=4$, and $|\hat{b}|=|\hat{d}|=|\hat{f}|=9$. We first prove the following.
\begin{figure}
\centering
\subfloat[][]{
\begin{tikzpicture}[scale=1.2,auto]
\tikzstyle{whitenode}=[draw,circle,fill=white,minimum size=8pt,inner sep=0pt]
\tikzstyle{blacknode}=[draw,circle,fill=black,minimum size=6pt,inner sep=0pt]
\tikzstyle{tnode}=[draw,ellipse,fill=white,minimum size=8pt,inner sep=0pt]
\tikzstyle{texte} =[fill=white, text=black]
\foreach \pos/\name/\angle/\nom in {{(40:0cm)/u/5/}, {(45:2cm)/v2/90/v_2}, {(135:2cm)/v1/90/v_1}}
	\node[blacknode] (\name) [label=\angle: $\nom$] at \pos {};

\foreach \pos/\name/\angle/\nom in {{(180:2cm)/y1/180/y_1}, {(135:3.2cm)/w1/0/}, {(90:2cm)/x/90/x}, {(45:3.2cm)/w2/180/}, {(0:2cm)/y2/0/y_2}}
	\node[whitenode] (\name) [label=\angle: $\nom$] at \pos {};


\draw (-90:2cm) node[tnode] (z) [label=-90:$z$] {\small{$7^-$}};
\draw (-45:2cm) node[tnode] (v3) [label=-90:$v_3$] {\small{$5^-$}};

\foreach \source/ \dest /\weight in {u/y1/a, u/v1/b, u/x/c, u/v2/d, u/y2/e, u/v3/f, u/z/g, v1/w1/i, v2/w2/l}
\draw (\source) edge node[font=\small,pos=0.6] {$\weight$} (\dest);

\foreach \source/ \dest /\weight in {y1/v1/h, v1/x/j, x/v2/k, v2/y2/m, y2/v3/n, v3/z/o}
\draw (\source) edge node[font=\small,pos=0.5] {$\weight$} (\dest);

\draw (u) edge node[pos=0.2] {$u$} (y2);
\end{tikzpicture}
\label{fig:c9b}
}
\qquad
\subfloat[][]{
\begin{tikzpicture}[scale=1.2,auto]
\tikzstyle{whitenode}=[draw,circle,fill=white,minimum size=8pt,inner sep=0pt]
\tikzstyle{blacknode}=[draw,circle,fill=black,minimum size=6pt,inner sep=0pt]
\tikzstyle{tnode}=[draw,ellipse,fill=white,minimum size=8pt,inner sep=0pt]
\tikzstyle{texte} =[fill=white, text=black]
\foreach \pos/\name/\angle/\nom in {{(40:0cm)/u/5/}, {(45:2cm)/v2/90/v_2}, {(135:2cm)/v1/90/v_1}}
	\node[blacknode] (\name) [label=\angle: $\nom$] at \pos {};

\foreach \pos/\name/\angle/\nom in {{(180:2cm)/y1/180/y_1}, {(135:3.2cm)/w1/0/}, {(90:2cm)/x/90/x}, {(45:3.2cm)/w2/180/}, {(-90:2cm)/z/-90/z}}
	\node[whitenode] (\name) [label=\angle: $\nom$] at \pos {};


\draw (0:2cm) node[tnode] (y2) [label=0:$y_2$] {\small{$7^-$}};
\draw (-45:2cm) node[tnode] (v3) [label=-90:$v_3$] {\small{$5^-$}};

\foreach \source/ \dest /\weight in {u/y1/a, u/v1/b, u/x/c, u/v2/d, u/y2/e, u/v3/f, u/z/g, v1/w1/i, v2/w2/l}
\draw (\source) edge node[font=\small,pos=0.6] {$\weight$} (\dest);

\foreach \source/ \dest /\weight in {y1/v1/h, v1/x/j, x/v2/k, v2/y2/m,v3/y2/, v3/z/}
\draw (\source) edge node[font=\small,pos=0.5] {$\weight$} (\dest);

\draw (u) edge node[pos=0.2] {$u$} (y2);
\end{tikzpicture}
\label{fig:c9a}
}
\qquad
\subfloat[][]{
\begin{tikzpicture}[scale=1.2,auto]
\tikzstyle{whitenode}=[draw,circle,fill=white,minimum size=8pt,inner sep=0pt]
\tikzstyle{blacknode}=[draw,circle,fill=black,minimum size=6pt,inner sep=0pt]
\tikzstyle{tnode}=[draw,ellipse,fill=white,minimum size=8pt,inner sep=0pt]
\tikzstyle{texte} =[fill=white, text=black]
\foreach \pos/\name/\angle/\nom in {{(40:0cm)/u/5/}, {(45:2cm)/v2/90/v_2}, {(135:2cm)/v1/90/v_1}, {(-45:2cm)/v3/-45/v_3}}
	\node[blacknode] (\name) [label=\angle: $\nom$] at \pos {};

\foreach \pos/\name/\angle/\nom in {{(180:2cm)/y1/180/y_1}, {(135:3.2cm)/w1/0/}, {(90:2cm)/x/90/x}, {(45:3.2cm)/w2/180/}, {(0:2cm)/y2/0/y_2}, {(-90:2cm)/z/-90/z}}
	\node[whitenode] (\name) [label=\angle: $\nom$] at \pos {};

\draw (v3)
 ++(0:1.414cm) node[whitenode] (a) {};
 \draw (v3)
 ++(-90:1.414cm) node[whitenode] (b) {\small{$6$}};


\foreach \source/ \dest /\weight in {u/y1/a, u/v1/b, u/x/c, u/v2/d, u/y2/e, u/v3/f, u/z/g, v1/w1/i, v2/w2/l}
\draw (\source) edge node[font=\small,pos=0.6] {$\weight$} (\dest);

\foreach \source/ \dest /\weight/\test in {v3/a/p/swap, v3/b/q/}
\draw[\test] (\source) edge node[font=\small,pos=0.7] {$\weight$} (\dest);

\foreach \source/ \dest /\weight in {y1/v1/h, v1/x/j, x/v2/k, v2/y2/m, y2/v3/n, v3/z/o}
\draw (\source) edge node[font=\small,pos=0.5] {$\weight$} (\dest);

\draw (u) edge node[pos=0.2] {$u$} (y2);
\end{tikzpicture}
\label{fig:c9c}
}
\qquad
\subfloat[][]{
\begin{tikzpicture}[scale=1.2,auto]
\tikzstyle{whitenode}=[draw,circle,fill=white,minimum size=8pt,inner sep=0pt]
\tikzstyle{blacknode}=[draw,circle,fill=black,minimum size=6pt,inner sep=0pt]
\tikzstyle{tnode}=[draw,ellipse,fill=white,minimum size=8pt,inner sep=0pt]
\tikzstyle{texte} =[fill=white, text=black]
\foreach \pos/\name/\angle/\nom in {{(40:0cm)/u/5/}, {(135:2cm)/v1/90/v_1}, {(-45:2cm)/v3/right/v_2}}
	\node[blacknode] (\name) [label=\angle: $\nom$] at \pos {};

\foreach \pos/\name/\angle/\nom in {{(180:2cm)/y1/180/y_1}, {(135:3.2cm)/w1/0/}, {(-45:3.2cm)/w2/180/}, {(0:2cm)/y2/0/y_3}, {(-90:2cm)/z/-90/y_4}}
	\node[whitenode] (\name) [label=\angle: $\nom$] at \pos {};

\draw (45:2cm) node[tnode] (v2) [label=45:$v_3$] {\small{$5$}};
\draw (90:2cm) node[tnode] (x) [label=90:$y_2$] {\small{$7$}};

\foreach \source/ \dest /\weight in {u/y1/a, u/v1/b, u/x/c, u/v2/d, u/y2/e, u/v3/f, u/z/g, v1/w1/i, v3/w2/n}
\draw (\source) edge node[font=\small,pos=0.6] {$\weight$} (\dest);

\foreach \source/ \dest /\weight in {y1/v1/h, v1/x/j, x/v2/k, v2/y2/l, y2/v3/m, v3/z/o}
\draw (\source) edge node[font=\small,pos=0.5] {$\weight$} (\dest);

\draw (u) edge node[pos=0.2] {$u$} (y2);
\end{tikzpicture}
\label{fig:c9e}
}
\qquad
\subfloat[][]{
\begin{tikzpicture}[scale=1.2,auto]
\tikzstyle{whitenode}=[draw,circle,fill=white,minimum size=8pt,inner sep=0pt]
\tikzstyle{blacknode}=[draw,circle,fill=black,minimum size=6pt,inner sep=0pt]
\tikzstyle{tnode}=[draw,ellipse,fill=white,minimum size=8pt,inner sep=0pt]
\tikzstyle{texte} =[fill=white, text=black]
\foreach \pos/\name/\angle/\nom in {{(40:0cm)/u/5/}, {(45:2cm)/v2/45/v_3}, {(135:2cm)/v1/90/v_1}, {(-45:2cm)/v3/0/v_2}}
	\node[blacknode] (\name) [label=\angle: $\nom$] at \pos {};

\foreach \pos/\name/\angle/\nom in {{(180:2cm)/y1/180/y_1}, {(135:3.2cm)/w1/0/}, {(90:2cm)/x/90/y_2}, {(-45:3.2cm)/w2/180/}, {(0:2cm)/y2/0/y_3}, {(-90:2cm)/z/-90/y_4}}
	\node[whitenode] (\name) [label=\angle: $\nom$] at \pos {};

\draw (v2)
 ++(90:1.414cm) node[tnode] (a) {\small{$6$}};
 \draw (v2)
 ++(0:1.414cm) node[tnode] (b) {\small{$7$}};


\foreach \source/ \dest /\weight in {u/y1/a, u/v1/b, u/x/c, u/v2/d, u/y2/e, u/v3/f, u/z/g, v1/w1/i, v3/w2/n}
\draw (\source) edge node[font=\small,pos=0.6] {$\weight$} (\dest);

\foreach \source/ \dest /\weight/\test in {v2/a/p/swap, v2/b/q/}
\draw[\test] (\source) edge node[font=\small,pos=0.7] {$\weight$} (\dest);

\foreach \source/ \dest /\weight in {y1/v1/h, v1/x/j, x/v2/k, v2/y2/l, y2/v3/m, v3/z/o}
\draw (\source) edge node[font=\small,pos=0.5] {$\weight$} (\dest);

\draw (u) edge node[pos=0.2] {$u$} (y2);
\end{tikzpicture}
\label{fig:c9f}
}
\caption{Notations of Claim~\ref{claim:C9}}
\label{fig:c9}
\end{figure}
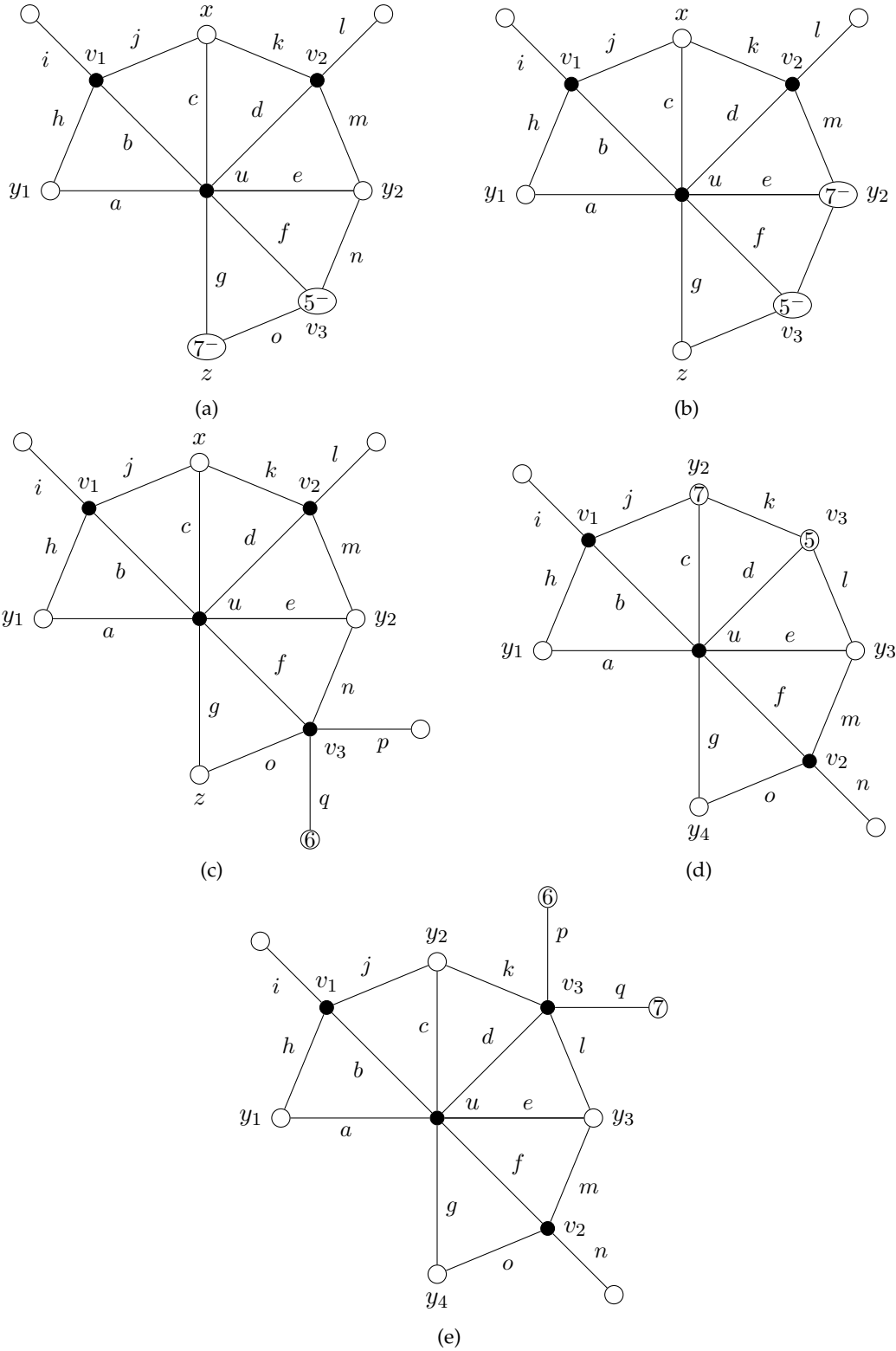
\begin{claimb}\label{clb:rec9d}
We can color $a,c,e,f,g,h,i,j,k,l,m,n,o,p$ and $q$ in such a way that, afterwards, $\hat{b} \neq \hat{f}$ if $|\hat{b}|=|\hat{f}|=1$.
\end{claimb}
\begin{proofclaim}
If $\hat{b}=\hat{f}$, then we consider $\alpha \in \hat{n}$, and color $n$ in $\alpha$. If $\hat{b}\neq\hat{f}$, then we consider $\alpha \in \hat{b} \setminus \hat{d}$, and color $n$ arbitrarily. We remove color $\alpha$ from $\hat{h}$, $\hat{i}$ and $\hat{j}$.
We color successively $i, h, j, k, c, a, g, e, o, m, l, q, p$ and $d$.
By the same analysis as in the previous cases, we have $\hat{b}\neq\hat{f}$ if $|\hat{b}|=|\hat{f}|=1$.
\end{proofclaim}

By (\ref{clb:rec9d}), we color $a,c,e,f,g,h,i,j,k,l,m,n,o,p$ and $q$ in such a way that $\hat{b} \neq \hat{f}$ if $|\hat{b}|=|\hat{f}|=1$. We color $b$ and $f$.
\end{itemize}
\end{itemize}
\end{proof}

\begin{claim}\label{claim:C10}
$G$ cannot contain \textbf{($C_{10}$)}.
\end{claim}
\begin{proof}
We consider two cases depending on whether $v_2$ and $u$ have a common neighbor of degree $6$. 
\begin{itemize}
\item \emph{Vertices $v_2$ and $u$ have a common neighbor $y$ of degree $6$}.\\By definition of an $S_3$-neighbor, vertex $v_2$ has two other neighbors of degree $7$ and $6$, respectively. We name the edges according to Figure~\ref{fig:c10a}.
Since the graph is simple, there is no $1 \leq i \leq 3$ such that the edges $e$ and $c_i$ are incident.

By minimality, we color $G\setminus\{a,b_1,b_2,c_1,c_2,c_3,d,e\}$. Without loss of generality, we consider the worst case, i.e. $|\hat{b_1}|=|\hat{c_1}|=|\hat{e}|=2$, $|\hat{b_2}|=|\hat{c_2}|=3$, $|\hat{c_3}|=4$, $|\hat{d}|=5$, and $|\hat{a}|=6$.

Since $|\hat{e}|+|\hat{c_3}|>|\hat{d}|$, there exists $\alpha \in (\hat{e} \cap \hat{c_3})\cup ((\hat{e} \cup \hat{c_3}) \setminus \hat{d})$. If $\alpha \in \hat{c_3}$, let $i$ be the minimum integer such that $\alpha \in \hat{c_i}$. If $\alpha \not\in \hat{c_3}$, then $\alpha \in \hat{e} \setminus (\hat{d}\cup \hat{c_3})$, let $i$ be $1$. We color $e$ and $c_i$ in $\alpha$ if possible, in an arbitrary color otherwise (by choice of $i$, if $c_i$ is not colored in $\alpha$ then $i=1$). We delete $d$. If $i \neq 3$, edge $c_3$ is not colored and we delete it. Then, if $i \neq 2$, edge $c_2$ is not colored, and either $i=3$ and $c_3$ is colored in $\alpha$ (which was not an available color for $c_2$ by choice of $i$), or $i=1$ and $c_3$ has been deleted; In both cases, we can delete $c_2$. Then, if $i \neq 1$, edge $c_1$ is not colored, and the edges $c_2$ and $c_3$ are deleted or colored in $\alpha$ (which was not an available color for $c_1$ by choice of $i$), so we can delete $c_1$. We delete successively $a$, $b_2$, $b_1$.

\item \emph{Vertices $v_2$ and $u$ have no common neighbor of degree $6$}.\\Then, by definition of an $S_3$-vertex, the neighborhood of $v_2$ is, clockwise, $(u,y_1,z_1,z_2,y_2)$, with $d(y_1)=d(y_2)=7$ and $d(z_1)=d(z_2)=6$. We name the edges according to Figure~\ref{fig:c10b}. By minimality, we color $G\setminus\{a,\ldots,k\}$. Without loss of generality, we consider the worst case, i.e. $|\hat{f}|=|\hat{g}|=|\hat{h}|=|\hat{j}|=2$, $|\hat{i}|=|\hat{k}|=3$, $|\hat{b}|=|\hat{e}|=5$, and $|\hat{a}|=|\hat{c}|=|\hat{d}|=6$. We first prove the following.
\begin{figure}[!h]
\centering
\subfloat[][]{
\begin{tikzpicture}[scale=2,auto]
\tikzstyle{whitenode}=[draw,circle,fill=white,minimum size=8pt,inner sep=0pt]
\tikzstyle{blacknode}=[draw,circle,fill=black,minimum size=6pt,inner sep=0pt]
\tikzstyle{invisible}=[draw=white,circle,fill=white,minimum size=6pt,inner sep=0pt]
\tikzstyle{tnode}=[draw,ellipse,fill=white,minimum size=8pt,inner sep=0pt]
\tikzstyle{texte} =[fill=white, text=black]

\draw (0,0) node[blacknode] (u) [label=right:$u$] {}
 ++(90:1cm) node[blacknode] (w) [label=90:$v_2$] {};
\draw (u)
 ++(45:1cm) node[tnode] (y) [label=right:$y$] {\small{$6$}};
\draw (u)
-- ++(180:1cm) node[whitenode] (y4) {};
\draw (u)
-- ++(-90:1cm) node[tnode] (v) [label=-90:$v_1$] {\small{$4$}};

\draw (u)
 ++(-135+90:1cm) node[tnode] (x) [label=-90:$v_3$] {\small{$5^-$}};
\draw (u)
-- ++(-90-30:1cm) node[whitenode] (y7) {};
\draw (u)
-- ++(-90-60:1cm) node[whitenode] (y8) {};

\draw (w)
 ++(-67.5-90:0.8cm) node[whitenode] (w4) {};
\draw (w)
 ++(-37.5+90:0.9cm) node[whitenode] (w2) {\small{$6$}};
\draw (w)
 ++(37.5+90:0.9cm) node[tnode] (w3) {\small{$7$}};

\foreach \source/ \dest /\weight/\test in {u/x/b_1/, u/v/b_2/, w/w4/c_1/swap, w/w3/c_2/swap, w/w2/c_3/swap}
\draw (\source) edge node[font=\small,pos=0.6,\test] {$\weight$} (\dest);

\foreach \source/ \dest /\weight in {u/w/a, y/u/e, w/y/d}
\draw (\source) edge node[font=\small,pos=0.5] {$\weight$} (\dest);

\end{tikzpicture}
\label{fig:c10a}
}
\qquad
\subfloat[][]{
\begin{tikzpicture}[scale=2,auto]
\tikzstyle{whitenode}=[draw,circle,fill=white,minimum size=8pt,inner sep=0pt]
\tikzstyle{blacknode}=[draw,circle,fill=black,minimum size=6pt,inner sep=0pt]
\tikzstyle{invisible}=[draw=white,circle,fill=white,minimum size=6pt,inner sep=0pt]
\tikzstyle{tnode}=[draw,ellipse,fill=white,minimum size=8pt,inner sep=0pt]
\tikzstyle{texte} =[fill=white, text=black]

\draw (0,0) node[blacknode] (u) [label=right:$u$] {}
 ++(90:1cm) node[blacknode] (w) [label=90:$v_2$] {};
\draw (u)
-- ++(45:1cm) node[tnode] (y3) [label=right:$y_2$] {\small{$7$}};
\draw (u)
-- ++(135:1cm) node[whitenode] (y4) [label=left:$y_1$] {\small{$7$}};
\draw (u)
-- ++(-90:1cm) node[tnode] (y5) [label=-90:$v_1$] {\small{$4$}};

\draw (u)
-- ++(-135:1cm) node[tnode] (y6) [label=-90:$v_3$] {\small{$5^-$}};
\draw (u)
-- ++(-30:1cm) node[whitenode] (y7) {};
\draw (u)
-- ++(-60:1cm) node[whitenode] (y8) {};

\draw (w)
 ++(-37.5+90:0.9cm) node[whitenode] (w2) [label=right:$z_2$] {\small{$6$}};
\draw (w)
 ++(37.5+90:0.9cm) node[tnode] (w3) [label=left:$z_1$] {\small{$6$}};
\draw (y4) edge node  {} (w3);
\draw (w2) edge node  {} (w3);
\draw (w2) edge node  {} (y3);

\foreach \source/ \dest /\weight in {w/y4/b, w/w3/c, w/w2/d, w/y3/e, u/y5/k}
\draw (\source) edge node[font=\small,pos=0.6] {$\weight$} (\dest);

\foreach \source/ \dest /\weight in {w/u/a, y3/u/f, u/y4/g, y4/w3/h, w3/w2/i, w2/y3/j}
\draw (\source) edge node[font=\small,pos=0.5] {$\weight$} (\dest);
\end{tikzpicture}
\label{fig:c10b}
}
\caption{Notations of Claim~\ref{claim:C10}}
\label{fig:c10}
\end{figure}
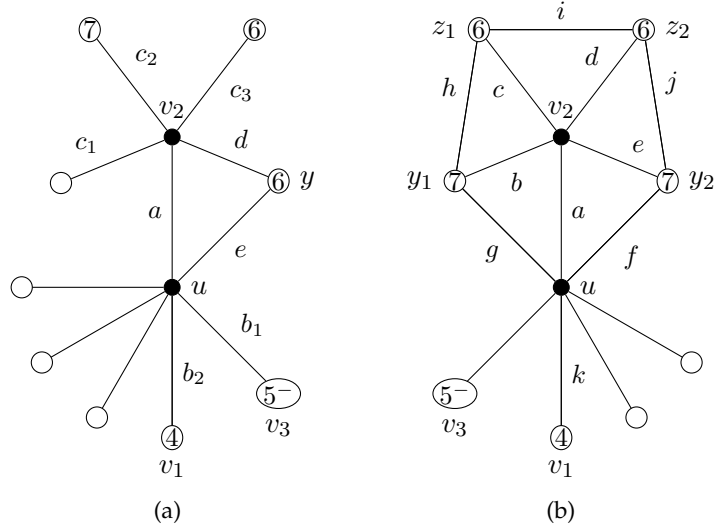
\begin{claimb}\label{clb:rec10b}
We can color $f,g,h,i,j$ and $k$ in such a way that, afterwards, $\hat{c} \neq \hat{d}$ if $|\hat{c}|=|\hat{d}|=4$.
\end{claimb}
\begin{proofclaim}
We consider two cases depending on whether $\hat{c}=\hat{d}$.
\begin{itemize}
\item $\hat{c}=\hat{d}$. We apply Lemma~\ref{lem:evencycle} on $(f,g,h,j)$ by considering that $f$ and $j$ are incident so they receive different colors. We color $i$ and $k$. The new constraints of $c$ are $i$ and $h$, and the new constraints of $d$ are $i$ and $j$. Since $h$ and $j$ receive distinct colors, we have $|\hat{c}|\geq 5$ or $ |\hat{d}|\geq 5$ or $\hat{c}\neq \hat{d}$.
\item $\hat{c} \neq \hat{d}$. Let $\alpha \in \hat{c}, \not\in \hat{d}$. We color $h$ in a color other than $\alpha$. We color $g, f, j, i$ and $k$ successively. Thus, either $|\hat{d}|\geq 5$ or $\alpha \in \hat{c}$ so $\hat{c} \neq \hat{d}$.
\end{itemize}
\end{proofclaim}

By (\ref{clb:rec10b}), we color $f,g,h,i,j$ and $k$ in such a way that $\hat{c} \neq \hat{d}$ if $|\hat{c}|=|\hat{d}|=4$. We color $a$, $b$ and $e$. Either $|\hat{c}| \geq 2$ (resp. $|\hat{d}| \geq 2$), and we color $d$ and $c$ (resp. $c$ and $d$). Or $|\hat{c}|=|\hat{d}|=1$ and $\hat{c} \neq \hat{d}$, we color $d$ and $c$ independently.
\end{itemize}
\end{proof}

\begin{claim}\label{claim:C11}
$G$ cannot contain \textbf{($C_{11}$)}.
\end{claim}
\begin{proof}
We name the edges according to Figure~\ref{fig:c11}. By minimality, we color $G\setminus\{a,\ldots,e\}$. Without loss of generality, we consider the worst case, i.e. $|\hat{d}|=|\hat{e}|=2$, $|\hat{a}|=|\hat{c}|=3$ and $|\hat{b}|=4$. We consider two cases depending on whether $\hat{e} \subset \hat{b}$.
\begin{figure}[!h]
\centering
\begin{tikzpicture}[scale=2,auto]
\tikzstyle{whitenode}=[draw,circle,fill=white,minimum size=8pt,inner sep=0pt]
\tikzstyle{blacknode}=[draw,circle,fill=black,minimum size=6pt,inner sep=0pt]
\tikzstyle{tnode}=[draw,ellipse,fill=white,minimum size=8pt,inner sep=0pt]
\tikzstyle{texte} =[fill=white, text=black]
 \draw (0,0) node[tnode] (u) [label=left:$u$] {\small{$5$}}
 ++(0:1cm) node[tnode] (w) [label=right:$w$] {\small{$6$}};
\draw (u)
 ++(60:1cm) node[tnode] (v) [label=90:$v$] {\small{$6$}};
\draw (u)
 ++(-60:1cm) node[tnode] (x) [label=-90:$x$] {\small{$6$}};

\foreach \source/ \dest /\weight in {u/v/a, v/w/d, w/x/e, x/u/c, u/w/b}
\draw (\source) edge node[font=\small,pos=0.5] {$\weight$} (\dest);
\end{tikzpicture}
\caption{Notations of Claim~\ref{claim:C11}}
\label{fig:c11}
\end{figure}
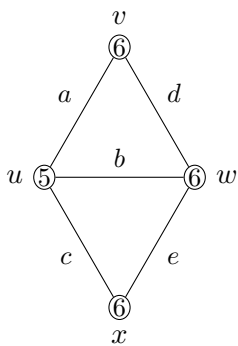
\begin{itemize}
\item $\hat{e} \not\subset \hat{b}$. Then we color $e$ in a color that does not belong to $\hat{b}$. We can delete successively $b$, $a$, $c$ and $d$.
\item $\hat{e} \subset \hat{b}$. Then, since $|\hat{c}|+|\hat{d}|>|\hat{b}|$ and $\hat{d} \subset \hat{b}$, there exists $\alpha \in (\hat{c} \cap \hat{d}) \cup ((\hat{c} \cup \hat{d})\setminus \hat{b})$. We color $c$ and $d$ in $\alpha$ if possible, in an arbitrary color otherwise. Note that since $\hat{e} \subset \hat{b}$, we have $|\hat{e}|\geq 1$ in both cases. We delete successively $b$, $e$ and $a$.
\end{itemize}
\end{proof}

Lemma~\ref{lem:config} holds by Claims~\ref{claim:C1} to \ref{claim:C11}.

\end{proof}

\section{Discharging rules}\label{sect:dis}

We design discharging rules $R_1$, $R_2$, $\ldots$, $R_{11}$ (see Figure~\ref{fig:dis}):\newline

For any face $f$ of degree at least $4$,
\begin{itemize}
\item Rule $R_1$ is when $d(f)=4$ and $f$ is incident to a vertex $v$ of degree $d(v)\leq 5$. Then $f$ gives $1$ to $v$.
\item Rule $R_2$ is when $d(f)\geq 5$ and $f$ is incident to a vertex $v$ of degree $d(v)\leq 5$. Then $f$ gives $2$ to $v$.
\end{itemize}

For any vertex $u$ of degree at least $7$,
\begin{itemize}
\item Rule $R_3$ is when $u$ has a weak neighbor $v$ of degree $3$. Then $u$ gives $1$ to $v$.
\item Rule $R_4$ is when $u$ has a semi-weak neighbor $v$ of degree $3$. Then $u$ gives $\frac{1}{2}$ to $v$.
\item Rule $R_5$ is when $u$ has a weak neighbor $v$ of degree $4$. Then $u$ gives $\frac{1}{2}$ to $v$.
\end{itemize}

For any vertex $u$ of degree $8$,
\begin{itemize}
\item Rule $R_6$ is when $u$ has an $E_2$-neighbor $v$. Then $u$ gives $\frac{1}{2}$ to $v$.
\item Rule $R_7$ is when $u$ has an $E_3$-neighbor $v$. Then $u$ gives $\frac{1}{3}$ to $v$.
\item Rule $R_8$ is when $u$ has an $E_4$-neighbor $v$. Then $u$ gives $\frac{1}{4}$ to $v$.
\end{itemize}

For any vertex $u$ of degree $7$,
\begin{itemize}
\item Rule $R_9$ is when $u$ has an $S_2$-neighbor $v$. Then $u$ gives $\frac{1}{2}$ to $v$.
\item Rule $R_{10}$ is when $u$ has an $S_3$-neighbor $v$. Then $u$ gives $\frac{1}{3}$ to $v$.
\item Rule $R_{11}$ is when $u$ has an $S_4$-neighbor $v$. Then $u$ gives $\frac{1}{4}$ to $v$.
\end{itemize}

\captionsetup[subfloat]{labelformat=empty}
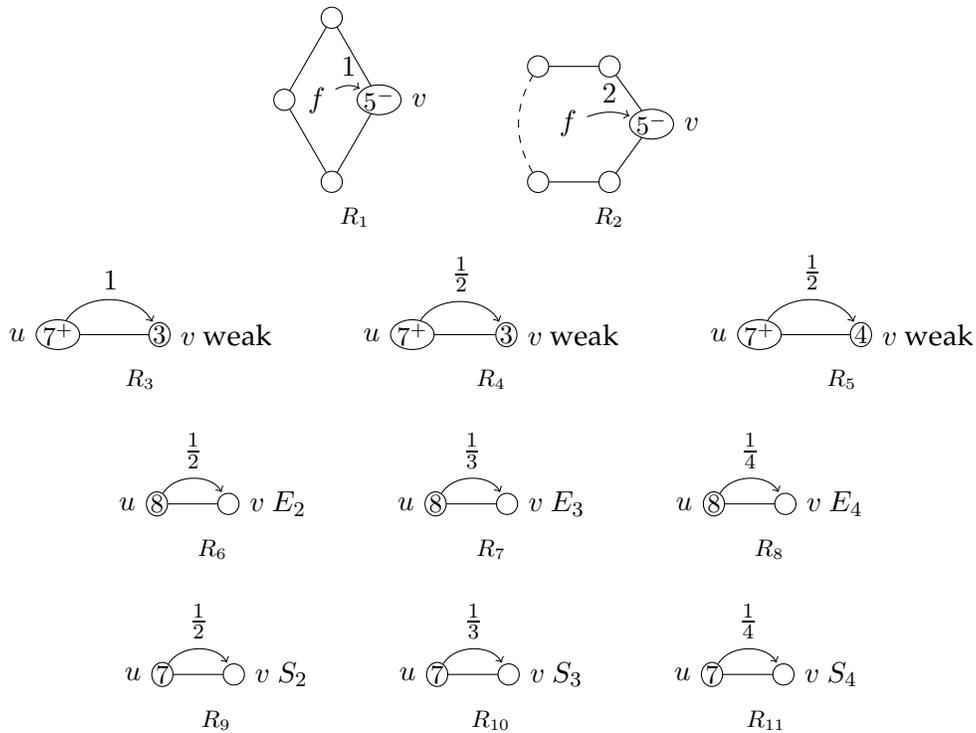
\begin{figure}[!h]
\centering
\subfloat[][$R_1$]{
\centering
\begin{tikzpicture}[scale=1.05]
\tikzstyle{whitenode}=[draw,circle,fill=white,minimum size=8pt,inner sep=0pt]
\tikzstyle{blacknode}=[draw,circle,fill=black,minimum size=6pt,inner sep=0pt]
\tikzstyle{tnode}=[draw,ellipse,fill=white,minimum size=8pt,inner sep=0pt]
\tikzstyle{texte} =[fill=white, text=black]
\draw (0,0) node[whitenode] (u) {}
-- ++(60:1.2cm) node[whitenode] (w) {}
-- ++(-60:1.2cm) node[tnode] (v) [label=right:$v$] {\small{$5^-$}}
-- ++(-120:1.2cm) node[whitenode] (x) {};

\draw (0.3,-0.1) node[anchor=text] (a) {$f$};
\draw (x) edge node  {} (u);

\draw (a) edge [post,bend left]  node [label=90:$1$] {} (v);
\end{tikzpicture}
\label{fig:r1}
}
\qquad
\subfloat[][$R_2$]{
\centering
\begin{tikzpicture}[scale=0.95]
\tikzstyle{whitenode}=[draw,circle,fill=white,minimum size=8pt,inner sep=0pt]
\tikzstyle{blacknode}=[draw,circle,fill=black,minimum size=6pt,inner sep=0pt]
\tikzstyle{tnode}=[draw,ellipse,fill=white,minimum size=8pt,inner sep=0pt]
\tikzstyle{texte} =[fill=white, text=black] 

\draw (0,0) node[whitenode] (u) {}
-- ++(0:1cm) node[whitenode] (w) {}
-- ++(-90+36:1cm) node[tnode] (v) [label=right:$v$] {\small{$5^-$}}
-- ++(-90-36:1cm) node[whitenode] (x) {}
-- ++(180:1cm) node[whitenode] (y) {};

\draw (0.3,-0.9) node[anchor=text] (a) {$f$};
\draw (y) edge [dashed, bend left] node  {} (u);

\draw (a) edge [post,bend left=20]  node [label=90:$2$] {} (v);
\end{tikzpicture}
\label{fig:r2}
}\\
\subfloat[][$R_3$]{
\centering
\begin{tikzpicture}[scale=1.35]
\tikzstyle{whitenode}=[draw,circle,fill=white,minimum size=8pt,inner sep=0pt]
\tikzstyle{blacknode}=[draw,circle,fill=black,minimum size=6pt,inner sep=0pt]
\tikzstyle{tnode}=[draw,ellipse,fill=white,minimum size=8pt,inner sep=0pt]
\tikzstyle{texte} =[fill=white, text=black]
\draw (0,0) node[tnode] (u) [label=left:$u$] {\small{$7^+$}}
-- ++(0:1cm) node[tnode] (v) [label=right:$v$ weak] {\small{$3$}};




\draw (u) edge [post,bend left=60]  node [label=90:$1$] {} (v);
\end{tikzpicture}
\label{fig:r3}
}
\qquad
\subfloat[][$R_4$]{
\centering
\begin{tikzpicture}[scale=1.25]
\tikzstyle{whitenode}=[draw,circle,fill=white,minimum size=8pt,inner sep=0pt]
\tikzstyle{blacknode}=[draw,circle,fill=black,minimum size=6pt,inner sep=0pt]
\tikzstyle{tnode}=[draw,ellipse,fill=white,minimum size=8pt,inner sep=0pt]
\tikzstyle{texte} =[fill=white, text=black]

\draw (0,0) node[tnode] (u) [label=left:$u$] {\small{$7^+$}}
-- ++(0:1cm) node[tnode] (v) [label=right:$v$ weak] {\small{$3$}};


\draw (u) edge [post,bend left=60]  node [label=90:$\frac{1}{2}$] {} (v);
\end{tikzpicture}
\label{fig:r4}
}
\qquad
\subfloat[][$R_5$]{
\centering
\begin{tikzpicture}[scale=1.35]
\tikzstyle{whitenode}=[draw,circle,fill=white,minimum size=8pt,inner sep=0pt]
\tikzstyle{blacknode}=[draw,circle,fill=black,minimum size=6pt,inner sep=0pt]
\tikzstyle{tnode}=[draw,ellipse,fill=white,minimum size=8pt,inner sep=0pt]
\tikzstyle{texte} =[fill=white, text=black]
\draw (0,0) node[tnode] (u) [label=left:$u$] {\small{$7^+$}}
-- ++(0:1cm) node[tnode] (v) [label=right:$v$ weak] {\small{$4$}};


\draw (u) edge [post,bend left=60]  node [label=90:$\frac{1}{2}$] {} (v);
\end{tikzpicture}
\label{fig:r5}
}\\
\subfloat[][$R_6$]{
\centering
\begin{tikzpicture}[scale=0.95]
\tikzstyle{whitenode}=[draw,circle,fill=white,minimum size=8pt,inner sep=0pt]
\tikzstyle{blacknode}=[draw,circle,fill=black,minimum size=6pt,inner sep=0pt]
\tikzstyle{invisible}=[draw=white,circle,fill=white,minimum size=6pt,inner sep=0pt]
\tikzstyle{tnode}=[draw,ellipse,fill=white,minimum size=8pt,inner sep=0pt]
\tikzstyle{texte} =[fill=white, text=black]
\draw (0,0) node[tnode] (u) [label=left:$u$] {\small{$8$}}
-- ++(0:1cm) node[tnode] (v) [label=right:$v$ $E_2$] {};

\draw (u) edge [post,bend left=60]  node [label=90:$\frac{1}{2}$] {} (v);
\end{tikzpicture}
\label{fig:r6}
}
\qquad
\subfloat[][$R_7$]{
\centering
\begin{tikzpicture}[scale=0.95]
\tikzstyle{whitenode}=[draw,circle,fill=white,minimum size=8pt,inner sep=0pt]
\tikzstyle{blacknode}=[draw,circle,fill=black,minimum size=6pt,inner sep=0pt]
\tikzstyle{tnode}=[draw,ellipse,fill=white,minimum size=8pt,inner sep=0pt]
\tikzstyle{texte} =[fill=white, text=black]
\tikzstyle{invisible}=[draw=white,circle,fill=white,minimum size=8pt,inner sep=0pt]

\draw (0,0) node[tnode] (u) [label=left:$u$] {\small{$8$}}
-- ++(0:1cm) node[tnode] (v) [label=right:$v$ $E_3$] {};

\draw (u) edge [post,bend left=60]  node [label=90:$\frac{1}{3}$] {} (v);
\end{tikzpicture}
\label{fig:r7}
}
\qquad
\subfloat[][$R_8$]{
\centering
\begin{tikzpicture}[scale=0.95]
\tikzstyle{whitenode}=[draw,circle,fill=white,minimum size=8pt,inner sep=0pt]
\tikzstyle{blacknode}=[draw,circle,fill=black,minimum size=6pt,inner sep=0pt]
\tikzstyle{tnode}=[draw,ellipse,fill=white,minimum size=8pt,inner sep=0pt]
\tikzstyle{texte} =[fill=white, text=black]
\tikzstyle{invisible}=[draw=white,circle,fill=white,minimum size=8pt,inner sep=0pt]
\draw (0,0) node[tnode] (u) [label=left:$u$] {\small{$8$}}
-- ++(0:1cm) node[tnode] (v) [label=right:$v$ $E_4$] {};

\draw (u) edge [post,bend left=60]  node [label=90:$\frac{1}{4}$] {} (v);
\end{tikzpicture}
\label{fig:r8}
}\\
\subfloat[][$R_9$]{
\centering
\begin{tikzpicture}[scale=0.95]
\tikzstyle{whitenode}=[draw,circle,fill=white,minimum size=8pt,inner sep=0pt]
\tikzstyle{blacknode}=[draw,circle,fill=black,minimum size=6pt,inner sep=0pt]
\tikzstyle{tnode}=[draw,ellipse,fill=white,minimum size=8pt,inner sep=0pt]
\tikzstyle{texte} =[fill=white, text=black]
\tikzstyle{invisible}=[draw=white,circle,fill=white,minimum size=8pt,inner sep=0pt]

\draw (0,0) node[tnode] (u) [label=left:$u$] {\small{$7$}}
-- ++(0:1cm) node[tnode] (v) [label=right:$v$ $S_2$] {};

\draw (u) edge [post,bend left=60]  node [label=90:$\frac{1}{2}$] {} (v);
\end{tikzpicture}
\label{fig:r9}
}
\qquad
\subfloat[][$R_{10}$]{
\begin{tikzpicture}[scale=0.95]
\tikzstyle{whitenode}=[draw,circle,fill=white,minimum size=8pt,inner sep=0pt]
\tikzstyle{blacknode}=[draw,circle,fill=black,minimum size=6pt,inner sep=0pt]
\tikzstyle{invisible}=[draw=white,circle,fill=white,minimum size=6pt,inner sep=0pt]
\tikzstyle{tnode}=[draw,ellipse,fill=white,minimum size=8pt,inner sep=0pt]
\tikzstyle{texte} =[fill=white, text=black]

\draw (0,0) node[tnode] (u) [label=left:$u$] {\small{$7$}}
-- ++(0:1cm) node[tnode] (v) [label=right:$v$ $S_3$] {};

\draw (u) edge [post,bend left=60]  node [label=90:$\frac{1}{3}$] {} (v);
\end{tikzpicture}
\label{fig:r10}
}
\qquad
\subfloat[][$R_{11}$]{
\begin{tikzpicture}[scale=0.95]
\tikzstyle{whitenode}=[draw,circle,fill=white,minimum size=8pt,inner sep=0pt]
\tikzstyle{blacknode}=[draw,circle,fill=black,minimum size=6pt,inner sep=0pt]
\tikzstyle{tnode}=[draw,ellipse,fill=white,minimum size=8pt,inner sep=0pt]
\tikzstyle{texte} =[fill=white, text=black]
\tikzstyle{invisible}=[draw=white,circle,fill=white,minimum size=8pt,inner sep=0pt]
\draw (0,0) node[tnode] (u) [label=left:$u$] {\small{$7$}}
-- ++(0:1cm) node[tnode] (v) [label=right:$v$ $S_4$] {};

\draw (u) edge [post,bend left=60]  node [label=90:$\frac{1}{4}$] {} (v);
\end{tikzpicture}
\label{fig:r11}
}
\caption{Discharging rules.}
\label{fig:dis}
\end{figure}
\captionsetup[subfloat]{labelformat=parens}

Note that according to these rules, only vertices of degree at most $5$ receive weight, and only faces of degree at least $4$ and vertices of degree at least $7$ give weight. Note that the notation $E_i$ and $S_i$ corresponds to the fact that a vertex $u$ gives a weight of $\frac{1}{i}$ to every $E_i$- or $S_i$-neighbor.

\begin{lemma}\label{lem:rules}
A planar graph $G$ with $\Delta(G) \leq 8$ that does not contain Configurations \textbf{($C_1$)} to \textbf{($C_{11}$)} is a stable set.
\end{lemma}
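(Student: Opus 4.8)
The plan is to run the discharging argument announced in Section~\ref{sect:meth}, showing that forbidding \textbf{($C_1$)} to \textbf{($C_{11}$)} is incompatible with the presence of any edge. Suppose for contradiction that $G$ is not a stable set, and let $H$ be a connected component of $G$ containing an edge. Passing to a connected component changes neither degrees nor adjacencies, so $H$ is a connected planar graph on at least two vertices with $\Delta(H)\le 8$ that still avoids all of \textbf{($C_1$)} to \textbf{($C_{11}$)}. Fix a planar embedding of $H$, and give each vertex $v$ the charge $\mu(v)=d(v)-6$ and each face $f$ the charge $\mu(f)=2d(f)-6$. Since $\sum_v d(v)=\sum_f d(f)=2|E|$, Euler's formula $|V|-|E|+|F|=2$ yields
\[
\sum_{v}\mu(v)+\sum_{f}\mu(f)=6|E|-6|V|-6|F|=6(|E|-|V|-|F|)=-12 .
\]
The rules $R_1$--$R_{11}$ only transfer charge and hence preserve this total, so it suffices to prove that after discharging every vertex and every face has non-negative charge: this would make the total non-negative and contradict the value $-12$.

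First I would extract the consequences of forbidding \textbf{($C_1$)}: every edge $uv$ satisfies $d(u)+d(v)\ge 11$, so a vertex of degree $k$ has all neighbours of degree at least $11-k$, and in particular $\delta(H)\ge 3$ (a vertex of degree at most $2$ would force a neighbour of degree at least $9>8$). Consequently every face has degree at least $3$ and starts non-negative. A face $f$ loses charge only through $R_1$ and $R_2$, and two consecutive vertices of its boundary walk are adjacent, hence cannot both have degree at most $5$; so at most $\lfloor d(f)/2\rfloor$ of the incidences of $f$ are with vertices of degree at most $5$. A $4$-face then gives away at most $2=\mu(f)$, a $5$-face at most $4=\mu(f)$, and a face of degree $d\ge 6$ at most $2\lfloor d/2\rfloor\le d\le 2d-6$; thus all faces end non-negative.

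The heart of the proof is the vertex analysis, split by degree. Vertices of degree $6$ neither give nor receive (the rules feed only vertices of degree at most $5$ and draw only from faces of degree at least $4$ and from vertices of degree at least $7$), so they keep charge $0$. For degrees $3$, $4$, $5$, with initial charges $-3,-2,-1$, I would show the required amount is collected from incident large faces via $R_1$--$R_2$ together with the weak, semi-weak, $E_i$- and $S_i$-donations of the (necessarily large) neighbours via $R_3$--$R_{11}$; the count is organised by how many of the faces around the vertex are triangles, with Configurations \textbf{($C_8$)} and \textbf{($C_{11}$)} ensuring that a degree-$5$ vertex all of whose incident faces are triangles still has enough donor neighbours of degree at least $7$. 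The delicate direction is bounding the charge leaving the degree-$7$ and degree-$8$ vertices, whose budgets are only $1$ and $2$: a degree-$8$ vertex pays $1$ per weak degree-$3$ neighbour, $\tfrac12$ per semi-weak degree-$3$ and per weak degree-$4$ neighbour, and $\tfrac12,\tfrac13,\tfrac14$ to its $E_2$-, $E_3$-, $E_4$-neighbours, and one must verify this sum never exceeds $2$; symmetrically the degree-$7$ vertices must stay within $1$ while paying through $R_3$--$R_5$ and $R_9$--$R_{11}$.

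This last verification is exactly what Configurations \textbf{($C_3$)}--\textbf{($C_{10}$)} are engineered to permit: each describes a mix of recipients around a degree-$7$ or degree-$8$ host whose total cost would breach the budget, so its absence excludes that mix. Concretely, I would fix the host $u$, invoke the remark following the definitions of the $E_i$- and $S_i$-neighbours (an untyped weak degree-$5$ neighbour receives nothing), and check every admissible distribution of neighbour types against the absence of \textbf{($C_3$)}--\textbf{($C_{10}$)}. I expect this high-degree discharge to be the main obstacle, since it is precisely where the fractions $\tfrac12,\tfrac13,\tfrac14$ and the exact hypotheses of the forbidden configurations must fit together. Once every vertex and every face is shown to be non-negative, the total charge is at least $0$, contradicting $-12$; hence $H$, and therefore $G$, contains no edge, i.e.\ $G$ is a stable set.
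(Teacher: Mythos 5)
Your setup is faithful to the paper's own argument: pass to a connected component, assign charge $d(v)-6$ to vertices and $2d(f)-6$ to faces, note the total is $-12$ by Euler's formula, and show every face ends non-negative because no two vertices of degree at most $5$ are consecutive on a boundary walk (your face computation, including the multiplicity of incidences, is complete and matches the paper's). The observation that degree-$6$ vertices are inert is also correct. Up to this point you and the paper coincide.

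The gap is that everything after that is a plan, not a proof. The entire content of this lemma is the vertex analysis you defer with \enquote{I would show}, \enquote{one must verify}, and \enquote{I would \ldots{} check every admissible distribution}: that vertices of degree $3$, $4$, $5$ actually collect $3$, $2$, $1$ units, and that vertices of degree $7$ and $8$ never pay out more than $1$ and $2$. This is not a routine budget check that follows formally from \enquote{the configurations being engineered for it}; it is several pages of case analysis in the paper, and each case needs a specific configuration invoked at a specific point. For instance, a degree-$5$ vertex surrounded by five triangles requires splitting on its number of degree-$6$ neighbours, with \textbf{($C_{11}$)} ruling out three consecutive ones, and then determining for each remaining pattern which $E_i$- or $S_i$-type it has with respect to each high-degree neighbour so that the fractions $\frac12,\frac13,\frac14$ sum to at least $1$; a degree-$7$ vertex needs \textbf{($C_8$)}, \textbf{($C_9$)}, \textbf{($C_{10}$)} in three distinct sub-cases (with an $S_2$-neighbour, with two weak degree-$4$ neighbours, with an $S_3$-neighbour); a degree-$8$ vertex needs an eight-way split using \textbf{($C_3$)} through \textbf{($C_7$)}. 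None of this is carried out, and your one attribution that is concrete — crediting \textbf{($C_8$)} for the degree-$5$ case — is in fact wrong (the paper uses \textbf{($C_1$)} and \textbf{($C_{11}$)} there; \textbf{($C_8$)} belongs to the degree-$7$ analysis), which underlines that the case-checking cannot be waved through: without it the lemma is not established.
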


\begin{proof}
We can assume without loss of generality that $G$ is connected (if it is not, we simply consider a connected component of $G$, as it verifies the same hypothesis). Assume by contradiction that $G$ is not a single vertex. Thus $G$ is connected and contains at least one edge. According to Configuration \textbf{($C_1$)}, every vertex $x$ of $G$ verifies $d(x) \geq 3$. We consider a planar embedding of $G$.\\
We attribute to each vertex $u$ a weight of $d(u)-6$, and to each face a weight of $2d(f)-6$. We apply discharging rules $R_1$, $R_2$, $\ldots$, $R_{11}$. We show that all the faces and vertices have a weight of at least $0$ in the end.\newline

Note that the degree of a face is the number of vertices on its boundary, while walking through a facial walk (i.e. some vertices are counted with multiplicity). The discharging rules on the faces also apply with multiplicity: $R_1$ and $R_2$ apply to each vertex of degree at most $5$ incident to $f$ as many times as it appears on the boundary of $f$.

Let $f$ be a face in $G$. By Configuration \textbf{$(C_1)$}, no two vertices of degree at most $5$ are adjacent. Thus $f$ is incident to at most $\lfloor \frac{d(f)}{2} \rfloor$ vertices of degree $\leq 5$. We consider four cases depending on $d(f)$.
\begin{enumerate}
\item \textit{$d(f)=3$}. Then $f$ has an initial weight of $0$ and gives nothing, so it has a final weight of at least $0$.
\item \textit{$d(f)=4$}. Face $f$ is incident to at most $2$ vertices of degree $\leq 5$. So $f$ has an initial weight of $2$ and gives at most two times $1$ according to $R_1$. Thus $f$ has a final weight of at least $2- 2 \times 1 \geq 0$.
\item \textit{$d(f)=5$}. Face $f$ is incident to at most $2$ vertices of degree $\leq 5$. So $f$ has an initial weight of $4$ and gives at most two times $2$ according to $R_2$. Thus $f$ has a final weight of at least $4- 2 \times 2 \geq 0$.
\item \textit{$d(f)\geq 6$}. Face $f$ is incident to at most $\lfloor \frac{d(f)}{2} \rfloor \leq \frac{d(f)}{2}$ vertices of degree $\leq 5$. So $f$ has an initial weight of $2\times d(f)-6$ and gives at most $\frac{d(f)}{2}$ times $2$ according to $R_2$. Thus $f$ has a final weight of at least $2 \times d(f) - 6 - 2 \times \frac{d(f)}{2} = d(f) - 6 \geq 0$.
\end{enumerate}

So all the faces have a final weight of at least $0$ after application of the discharging rules. Let us now prove that the same holds for the vertices.\\

Let $x$ be a vertex of $G$. We consider different cases corresponding to the value of $d(x)$.

\begin{enumerate}

\item $d(x)=3$. Vertex $x$ has an initial weight of $-3$. We show that it receives at least $3$, thus has a non-negative final weight. By Configuration \textbf{$(C_1)$}, the three neighbors of $x$ are of degree $8$. We consider four cases depending on the degrees of the three faces $f_1, f_2$ and $f_3$ incident to $x$. We assume $d(f_1) \geq d(f_2) \geq d(f_3)$. Let $u_1, u_2$ and $u_3$ be the three neighbors of $u$, where for every $i \in \{1,2,3\}$, the edge $(x,u_i)$ belongs to $f_{i-1}$ and $f_i$ (subscripts taken modulo $3$).

\begin{enumerate}
\item \textit{$d(f_1) \geq 5$ and $d(f_2) \geq 4$}.\\So $x$ receives $2$ from $f_1$ by $R_2$, and at least $1$ from $f_2$ by $R_1$ or $R_2$.
\item \textit{$d(f_1)=d(f_2)=d(f_3)=4$}.\\So $x$ receives $1$ from each $f_i$ by $R_1$.
\item \textit{$d(f_1)=d(f_2)=4$ and $d(f_3)=3$}.\\So $x$ receives $1$ from both $f_1$ and $f_2$ by $R_1$. Besides, $x$ is a semi-weak neighbor of $u_1$ and $u_3$, so $x$ receives $\frac{1}{2}$ from $u_1$ and $u_2$ by $R_4$.
\item \textit{$d(f_1)\geq 5$ and $d(f_2)=d(f_3)=3$}.\\So $x$ receives $2$ from $f_1$ by $R_2$. Vertex $x$ is a weak neighbor of $u_3$, so $x$ receives $1$ from $u_3$ by $R_3$.
\item \textit{$d(f_1)=4$, and $d(f_2)=d(f_3)=3$}.\\So $x$ receives $1$ from $f_1$ by $R_1$. Besides, $x$ is a weak neighbor of $u_3$ and a semi-weak neighbor of $u_1$ and $u_2$, so $x$ receives $1$ from $u_3$ by $R_3$, and $\frac{1}{2}$ from both $u_1$ and $u_2$ by $R_4$.
\item \textit{$d(f_1)=d(f_2)=d(f_3)=3$}.\\Then $x$ is a weak neighbor of $u_1$, $u_2$ and $u_3$, so $x$ receives $1$ from $u_1$, $u_2$ and $u_3$ by $R_3$.
\end{enumerate}

\item $d(x)=4$. Vertex $x$ has an initial weight of $-2$. We show that it receives at least $2$, thus has a non-negative final weight. By Configuration \textbf{$(C_1)$}, the four neighbors $u_1,u_2,u_3$ and $u_4$ of $x$ are of degree at least $7$. We consider three cases depending on how many triangles are incident to $x$.
\begin{enumerate}
\item \textit{Vertex $x$ is incident to at most $2$ triangles}.\\
Then $x$ is incident to at least two faces $f_1$ and $f_2$ with $d(f_1),d(f_2) \geq 4$. So $x$ receives at least $1$ from both $f_1$ and $f_2$ by $R_1$ or $R_2$.
\item \textit{Vertex $x$ is incident to exactly $3$ triangles $(x,u_1,u_2), (x,u_2,u_3)$ and $(x,u_3,u_4)$}.\\
Then $x$ is incident to a face $f_1$ with $d(f_1) \geq 4$. So $x$ receives at least $1$ from $f_1$ by $R_1$ or $R_2$. Besides, $x$ is a weak neighbor of $u_2$ and $u_3$, so $x$ receives $\frac{1}{2}$ from both $u_2$ and $u_3$ by $R_5$.
\item \textit{Vertex $x$ is incident to $4$ triangles}.\\
Then $x$ is a weak neighbor of $u_1, u_2, u_3$ and $u_4$, so $x$ receives $\frac{1}{2}$ from $u_1, u_2, u_3$ and $u_4$ by $R_5$.
\end{enumerate}

\item $d(x)=5$. Vertex $x$ has an initial weight of $-1$. We show that it receives at least $1$, thus has a non-negative final weight. By Configuration $(C_1)$, the five consecutive neighbors $u_1,u_2,u_3, u_4$ and $u_5$ of $x$ are of degree at least $6$.

In the case where $x$ is incident to a face $f$ with $d(f)\geq 4$, vertex $x$ receives at least $1$ from $f$ by $R_1$ or $R_2$. So we can assume that $x$ is incident to five triangles $(x,u_1,u_2)$, $(x,u_2,u_3)$, $(x,u_3,u_4)$, $(x,u_4,u_5)$ and $(x,u_5,u_1)$. We consider four cases depending on the number of vertices of degree $6$ incident to $x$.
\begin{enumerate}
\item \textit{Vertex $x$ has at least three neighbors of degree $6$.}\\
By Configuration \textbf{$(C_{11})$}, they cannot appear consecutively around $x$, so they are exactly three. Without loss of generality, we assume $d(u_1)=d(u_2)=d(u_4)=6$, hence $d(u_3), d(u_5) \geq 7$. Then $x$ is an $E_2$- or $S_2$-neighbor of $u_3$ and $u_5$, so it receives $\frac{1}{2}$ from both $u_3$ and $u_5$ by $R_6$ or $R_9$.
\item \textit{Vertex $x$ has exactly two neighbors of degree $6$.}\\
We consider two cases depending on whether these vertices of degree $6$ appear consecutively around $x$.
\begin{enumerate}
\item \textit{Vertex $x$ has two consecutive neighbors of degree $6$}.\\
We can assume w.l.o.g. that $d(u_1)=d(u_2)=6$, and that $d(u_3) \geq d(u_5)$. We consider three cases depending on $d(u_3)$ and $d(u_5)$.
\begin{enumerate}
\item \textit{$d(u_3)=d(u_5)=8$}.\\
Then $x$ is an $E_2$-neighbor of $u_3$ and $u_5$, so $x$ receives $\frac{1}{2}$ from both $u_3$ and $u_5$ by $R_6$.
\item \textit{$d(u_3)=8$, $d(u_5)=7$}.\\
Then $x$ is an $E_2$-neighbor of $u_3$, an $S_3$- or $S_4$-neighbor of $u_5$ (depending on the degree of $u_4$), and an $S_4$- or $E_3$-neighbor of $u_4$, so $x$ receives $\frac{1}{2}$ from $u_3$ by $R_6$, and at least $\frac{1}{4}$ from both $u_4$ and $u_5$ by $R_7$, $R_{10}$ or $R_{11}$.
\item \textit{$d(u_3)=d(u_5)=7$}.\\
Then $x$ is an $S_3$-neighbor of $u_3$ and $u_5$, and an $S_3$- or $E_3$-neighbor of $u_4$, so $x$ receives $\frac{1}{3}$ from $u_3$, $u_4$ and $u_5$ by $R_7$ or $R_{10}$.
\end{enumerate}
\item \textit{Vertex $x$ has no two consecutive neighbors of degree $6$}.\\
We can assume without loss of generality that $d(u_1)=d(u_4)=6$ and that $d(u_2) \geq d(u_3)$.
We consider two cases depending on $d(u_3)$.
\begin{enumerate}
\item \textit{$d(u_3)=8$}.\\
Then $d(u_2)=8$. Vertex $x$ is an $E_3$- or $S_2$-neighbor of $u_5$, and an $E_3$-neighbor of $u_2$ and $u_3$, so $x$ receives at least $\frac{1}{3}$ from $u_2$, $u_3$ and $u_5$ by $R_7$ or $R_9$.
\item \textit{$d(u_3)=7$}.\\
Then $x$ is an $E_2$- or $S_2$-neighbor of $u_5$, and an $S_3$-, $S_4$- or $E_3$-neighbor of $u_2$ and $u_3$, so $x$ receives $\frac{1}{2}$ from $u_5$ by $R_6$ or $R_9$, and at least $\frac{1}{4}$ from $u_2$ and $u_3$ by $R_7$, $R_{10}$ or $R_{11}$.
\end{enumerate}
\end{enumerate}
\item \textit{Vertex $x$ has exactly one neighbor of degree $6$.}\\
We can assume without loss of generality that $d(u_1)=6$, and $d(u_2)\geq d(u_5)$ or $d(u_3) \geq d(u_4)$ if $d(u_2)=d(u_5)$.
We consider three cases depending on $d(u_5)$ and $d(u_3)$.
\begin{enumerate}
\item \textit{$d(u_5)=8$ and $d(u_3)=d(u_4)$}.\\
Then $x$ is an $E_3$-neighbor of $u_2$ and $u_5$, so it receives $\frac{1}{3}$ from both by $R_7$. Besides, since $d(u_3)=d(u_4)$, vertex $x$ is an $S_4$- or $E_4$-neighbor of $u_3$ and $u_4$, so it receives $\frac{1}{4}$ from both by $R_8$ or $R_{11}$.
\item \textit{$d(u_5)=8$ and $d(u_3)\neq d(u_4)$}.\\
Then $d(u_2)=d(u_3)=8$ and $d(u_4)=7$. Vertex $x$ is an $E_3$-neighbor of $u_2$, $u_3$ and $u_5$, so it receives $\frac{1}{3}$ from each by $R_7$.
\item \textit{$d(u_5)=7$}.\\
Then vertex $x$ is an $E_3$-, $E_4$- or $S_4$-neighbor of every $u_i$ for $i \in \{2,3,4,5\}$, so it receives at least $\frac{1}{4}$ from each by $R_7$, $R_8$ or $R_{11}$.
\end{enumerate}
\item \textit{Vertex $x$ has no neighbor of degree $6$.}\\
We consider three cases depending on the degrees of the $u_i$'s.
\begin{enumerate}
\item \textit{Vertex $x$ has at least $4$ neighbors of degree $8$.}\\
Then $x$ is an $E_3$- or $E_4$-neighbor of each of them, so it receives at least $\frac{1}{4}$ from each by $R_7$ or $R_8$.
\item \textit{Vertex $x$ has two consecutive neighbors of degree $7$.}\\
We consider w.l.o.g. that $d(u_1)=d(u_2)=7$. Then $x$ is an $S_4$-neighbor of $u_1$ and $u_2$, so it receives at least $\frac{1}{4}$ from each by $R_{11}$. Vertex $x$ is also an $S_4$- or $E_3$-neighbor of $u_3$ and $u_5$, so it receives at least $\frac{1}{4}$ from each by $R_7$ or $R_{11}$.
\item \textit{Vertex $x$ has at most $3$ neighbors of degree $8$, and has no two consecutive neighbors of degree $7$.}\\
Since $x$ is only adjacent to vertices of degree $7$ or $8$, we consider w.l.o.g. that $d(u_1)=d(u_3)=7$, and $d(u_2)=d(u_4)=d(u_5)=8$. Then $x$ is an $E_3$-neighbor of $u_2$, $u_4$ and $u_5$, so it receives $\frac{1}{3}$ from each by $R_7$.
\end{enumerate}
\end{enumerate}
\item $d(x)=6$. Vertex $x$ has an initial weight of $0$, gives nothing away, and has a final weight of at least $0$.
\item $d(x)=7$. Vertex $x$ has an initial weight of $1$. We show that it gives at most  $1$, thus has a non-negative final weight. By Configuration \textbf{$(C_1)$}, the neighbors of $x$ have degree at least $4$, and $x$ has at most $3$ weak neighbors of degree at most $5$. We consider four cases depending on the weak neighbors of $x$.

\begin{enumerate}
\item \textit{Vertex $x$ has an $S_2$-neighbor $v$}.\\
Let $v, w_1, w_2, w_3, w_4, w_5$ and $w_6$ be the consecutive neighbors of $x$. By definition of an $S_2$-neighbor, $d(w_1)=d(w_6)=6$. By Configuration \textbf{$(C_8)$}, if $w_2$ (resp. $w_5$) is a weak neighbor of $x$, then $d(w_2)>5$ (resp. $d(w_5)>5$). Assume w.l.o.g. that $d(w_3) \geq d(w_4)$. Then by Configuration \textbf{$(C_1)$}, if $w_3$ and $w_4$ are adjacent then $d(w_3)>5$. Thus $x$ has at most two weak neighbors of degree at most $5$: $v$ and possibly $w_4$. Besides, $d(v), d(w_4) > 3$. By Rules $R_5$, $R_9$, $R_{10}$ and $R_{11}$, vertex $x$ gives at most $\frac{1}{2}$ to each.
\item \textit{Vertex $x$ has at least two weak neighbors of degree $4$}.\\ 
By Configuration \textbf{$(C_9)$}, $x$ is adjacent to no other weak neighbor of degree $4$, and no $S_2$, $S_3$ or $S_4$-neighbor. Thus $x$ gives $\frac{1}{2}$ to each of the two weak neighbors of degree $4$ by $R_5$.
\item \textit{Vertex $x$ has exactly one weak neighbor $v$ of degree $4$ and no $S_2$-neighbor}.\\
If $x$ has an $S_3$-neighbor $v_2$, then by Configuration \textbf{$(C_{10})$}, it has no other neighbor of degree at most $5$. Thus $x$ gives $\frac{1}{2}$ to $v$ by $R_5$, $\frac{1}{3}$ to $v_2$ by $R_{10}$.\\
If $x$ has no $S_3$-neighbor, then $x$ has at most two other weak neighbors $v_1$ and $v_2$ of degree at most $5$, which are of degree $5$ by assumption. So $x$ gives $\frac{1}{2}$ to $v$ by $R_5$, $\frac{1}{4}$ to $v_1$ and $v_2$ by $R_{11}$.
\item \textit{Vertex $x$ has no weak neighbor of degree $4$, and no $S_2$-neighbor}.\\
Vertex $x$ has at most three weak neighbors $v_1$, $v_2$ and $v_3$ of degree at most $5$, which are of degree $5$ by assumption. So $x$ gives at most $\frac{1}{3}$ to each by $R_{10}$ or $R_{11}$.
\end{enumerate}

\item $d(x)=8$. Vertex $x$ has an initial weight of $2$. We show that it gives at most $2$, thus has a non-negative final weight. By Configurations \textbf{$(C_1)$} and \textbf{$(C_2)$}, vertex $x$ has at most $4$ neighbors that are either semi-weak with degree $3$ or weak with degree at most $5$. We consider eight cases depending on the neighborhood of $x$.
\begin{enumerate}
\item \textit{Vertex $x$ has at least two weak neighbors $v_1$ and $v_2$ of degree $3$}.\\
Then by Configuration \textbf{$(C_3)$}, vertex $x$ has exactly two neighbors of degree at most $5$. Thus $x$ gives $1$ to $v_1$ and $v_2$ by $R_{3}$.
\item \textit{Vertex $x$ has exactly one weak neighbor $v_1$ of degree $3$, and at least one semi-weak neighbor $v_2$ of degree $3$}.\\
Then by Configuration \textbf{$(C_4)$}, vertex $x$ has at most one other neighbor $v_3$ of degree at most $5$. By assumption, vertex $v_3$ is not a weak neighbor of $x$ of degree $3$, so $x$ gives at most $\frac{1}{2}$ to $v_3$ by $R_4$, $R_5$, $R_6$, $R_7$ or $R_8$. Vertex $x$ gives $1$ to $v_1$ by $R_3$, and $\frac{1}{2}$ to $v_2$ by $R_4$.
\item \textit{Vertex $x$ has exactly one weak neighbor $v_1$ of degree $3$, no semi-weak neighbor of degree $3$, and at least two weak neighbors $v_2$ and $v_3$ of degree $4$}.\\
Then, by Configuration \textbf{$(C_5)$}, vertex $x$ has no other weak neighbor of degree at most $5$. By assumption, it has no semi-weak neighbor of degree $3$. So $x$ gives $1$ to $v_1$ by $R_3$, $\frac{1}{2}$ to $v_2$ and $v_3$ by $R_5$.
\item \textit{Vertex $x$ has exactly one weak neighbor $v_1$ of degree $3$, no semi-weak neighbor of degree $3$, exactly one weak neighbor $v_2$ of degree $4$, and at least one $E_2$- or $E_3$-neighbor $v_3$}.\\
By definition of $E_2$- and $E_3$-neighbor, vertices $x$ and $v_3$ have a common neighbor $v_4$ of degree at most $7$, which by Configuration $(C_1)$ has degree $6$ or $7$. Then, by Configuration \textbf{$(C_6)$}, vertex $x$ has no other neighbor of degree at most $5$. So $x$ gives $1$ to $v_1$ by $R_3$, $\frac{1}{2}$ to $v_2$ by $R_5$, at most $\frac{1}{2}$ to $v_3$ by $R_6$ or $R_7$.
\item \textit{Vertex $x$ has exactly one weak neighbor $v_1$ of degree $3$, no semi-weak neighbor of degree $3$, exactly one weak neighbor $v_2$ of degree $4$, and no $E_2$- or $E_3$-neighbor}.\\
Then $x$ has at most two other weak neighbors $v_3$ and $v_4$ of degree at most $5$, which are by assumption $E_4$-neighbors.
So $x$ gives $1$ to $v_1$ by $R_3$, $\frac{1}{2}$ to $v_2$ by $R_5$, $\frac{1}{4}$ to $v_3$ and $v_4$ by $R_8$.
\item \textit{Vertex $x$ has exactly one weak neighbor $v_1$ of degree $3$, no semi-weak neighbor of degree $3$, no weak neighbor $v_2$ of degree $4$, and at least an $E_2$-neighbor $v_2$}.\\
Then by Configuration \textbf{$(C_7)$}, vertex $x$ has at most one other weak neighbor $v_3$ of degree at most $5$, which is by assumption of degree $5$.
So $x$ gives $1$ to $v_1$ by $R_3$, at most $\frac{1}{2}$ to $v_2$ and $v_3$ by $R_6$, $R_7$ or $R_8$.
\item \textit{Vertex $x$ has exactly one weak neighbor $v_1$ of degree $3$, no weak neighbor $v_2$ of degree $4$, no semi-weak neighbor of degree $3$, and no $E_2$-neighbor}.\\
Then $x$ has at most three other weak neighbors $v_2$, $v_3$ and $v_4$ of degree at most $5$, which are by assumption of degree $5$. Vertex $x$ has no $E_2$-neighbor, so they are $E_3$ or $E_4$-neighbors of $x$.
So $x$ gives $1$ to $v_1$ by $R_3$, at most $\frac{1}{3}$ to $v_2$, $v_3$ and $v_4$ by $R_7$ or $R_8$.
\item \textit{Vertex $x$ has no weak neighbor of degree $3$}.\\
Then $x$ has at most four neighbors $v_1$, $v_2$, $v_3$ and $v_4$ of degree at most $5$ that are either weak with degree at least $4$ or semi-weak with degree $3$. So $x$ gives at most $\frac{1}{2}$ to each by $R_4$, $R_5$, $R_6$, $R_7$ or $R_8$.
\end{enumerate}
\end{enumerate}

Consequently, after application of the discharging rules, every vertex and every face of $G$ has a non-negative weight, $6|E|-6|V|-6|F|=(2|E|-6|V|)+(4|E|-6|F|)=\sum_{v \in V} (d(v)-6)+ \sum_{f \in F} (2d(f)-6) \geq 0$, a contradiction to Euler's Formula.

\end{proof}

\section{Proof of Theorem~\ref{thm:main}}\label{sect:ccl}

Let $G$ be a minimal planar graph with $\Delta(G) \leq 8$ such that $G$ is not $9$-edge-choosable. By Lemma~\ref{lem:config}, graph $G$ cannot contain \textbf{($C_1$)} to \textbf{($C_{11}$)}. Lemma~\ref{lem:rules} implies that $G$ is a stable set, thus $9$-edge-choosable, a contradiction.
\hfill $\Box$ \vspace{1em}

\section{Conclusion}
The key idea in the proof lies in some recoloring arguments using directed graphs (see Claims~\ref{claim:C3}, \ref{claim:C4} and \ref{claim:C6} for occurrences of it in the proof). It allowed us to deal with configurations that would not yield under usual techniques, and thus to improve Theorem~\ref{thm:D9}. Though this simple argument does not seem to be enough to prove Conjecture~\ref{conj:LCCw} for $\Delta=7$, it might be interesting to try to improve similarly Theorem~\ref{thm:D12}.  

Note that the proof could easily be adapted to prove that planar graphs with $\Delta \geq 8$ are $(\Delta+1)$-edge-choosable. This would however be of little interest considering the simple proof for $\Delta \geq 9$ presented in~\cite{ch10}.

Conjecture~\ref{conj:LCCw} remains open for $\Delta=5,6$ and $7$. It might be interesting to weaken the conjecture and ask whether all planar graphs are $(\Delta+2)$-edge-choosable. This is true for planar graphs with $\Delta \geq 7$ by Theorems~\ref{thm:D9} and \ref{thm:main}. What about planar graphs with $\Delta=6$?


\bibliographystyle{plain}

\end{document}